\documentclass[3p]{elsarticle}

\journal{Journal of Logical and Algebraic Methods in Programming}
\usepackage[utf8]{inputenc}

\DeclareMathSymbol{:}{\mathpunct}{operators}{"3A}

\usepackage{breakurl}
\usepackage{eucal}

\usepackage{centernot}
\usepackage{mdframed}

\usepackage{amsfonts} 
\usepackage{amssymb}
\usepackage{amsthm}

\usepackage{pifont}
\newcommand{\cmark}{\ding{51}}%
\newcommand{\xmark}{\ding{55}}%
\usepackage{multirow}

\usepackage{paralist}
\usepackage{mathtools}
\usepackage{url}
\usepackage{extarrows}

\usepackage{etoolbox}
\usepackage{xspace} 

\usepackage{stmaryrd} 

\usepackage[inline,shortlabels]{enumitem} 
\newlist{inlinelist}{enumerate*}{1}
\setlist*[inlinelist,1]{%
  label=(\roman*),
}

\usepackage{xifthen}  
\usepackage{ifthen}

\usepackage{nicefrac}
\usepackage{caption,subcaption}

\usepackage{color}
\usepackage[usenames,dvipsnames]{xcolor}

\usepackage{orcidlink}

\usepackage{tikz} 
\usepackage{tikz-cd} 
\usepackage{float}

\usepackage{hyperref}
\usepackage{cleveref}

\usepackage[final,nomargin,inline,index]{fixme} 
\fxusetheme{color}

\definecolor{Black}{HTML}{000000}
\definecolor{Gray}{HTML}{808080}
\definecolor{Magenta}{HTML}{FF00FF}
\definecolor{RubineRed}{HTML}{ED017D}
\definecolor{ForestGreen}{HTML}{028A0F}
\definecolor{OliveGreen}{HTML}{808000}
\definecolor{MidnightBlue}{HTML}{006795}
\definecolor{Plum}{HTML}{92268F}

\FXRegisterAuthor{bart}{anbart}{\color{magenta} {\underline{bart}}}
\FXRegisterAuthor{zun}{anzun}{\color{OliveGreen} {\underline{zun}}}
\FXRegisterAuthor{ric}{anric}{\color{red} {\underline{ric}}}

\hypersetup{
  breaklinks   = true,
  colorlinks   = true, 
  urlcolor     = blue, 
  linkcolor    = blue, 
  citecolor    = red   
}
\hypersetup{final} 

\usepackage{listings}
\definecolor{listingBG}{HTML}{FFFFCB}%
\definecolor{listingFrame}{HTML}{BBBB98}%
\definecolor{listingLineno}{rgb}{0.5,0.5,1.0}%
\crefname{lstlisting}{Listing}{Listings}  

\definecolor{LightGrey}{rgb}{0.985,0.985,0.985}
\definecolor{LstBackground}{rgb}{0.975,0.975,0.975}

\definecolor{Black}{HTML}{000000}
\definecolor{Gray}{HTML}{808080}
\definecolor{LighterGray}{HTML}{FBFBFB}
\definecolor{Magenta}{HTML}{FF00FF}
\definecolor{RubineRed}{HTML}{ED017D}
\definecolor{ForestGreen}{HTML}{028A0F}
\definecolor{Plum}{HTML}{92268F}

\lstdefinelanguage{txscript}{
	commentstyle=\color{Gray},
	morecomment=[l]{//},
	morecomment=[s]{/*}{*/},
	classoffset=0,
        escapeinside={(*}{*)},
	morekeywords={if,then,else,for,in,receive,send,pay,pays,contract,function,public,returns,skip,require,mapping,int,uint,address,token,abort,return,msg,sender,value,view},
	keywordstyle=\color{Plum},
	classoffset=1,
	morekeywords={},
	keywordstyle=\addrColor,
	classoffset=2,        
	morekeywords={origin},
	keywordstyle=\pmvColor,
	classoffset=3,        
        morekeywords={constructor},
	keywordstyle=\color{MidnightBlue}\bfseries,
	basicstyle=\fontseries{m}\footnotesize\ttfamily
	\lst@ifdisplaystyle\footnotesize\fi,
}

\lstdefinelanguage{solidity}{
, basicstyle=\ttfamily\linespread{1.15}\normalsize\lst@ifdisplaystyle\fontsize{8}{9}\selectfont\fi
, commentstyle=\color{Gray}
, morecomment=[l]{//}
, morecomment=[s]{/*}{*/}
, escapechar=\$
, escapeinside={(*}{*)}
, classoffset=0,
, keywordstyle=\color{NavyBlue}\bfseries
, morekeywords={assert,require,if,then,else,for,break,call,delegatecall,transfer,send,approve,receive,balanceOf,case, catch,continue,do,while,emit, new, return, revert, selfdestruct, try, with, throw, switch, suicide}
, classoffset=1
, keywordstyle=\color{YellowGreen}\bfseries
, morekeywords={external, implements, import, interface, internal, library, payable, pragma, private, protected, public, pure, returns, super, using, view, immutable, memory}
, classoffset=2
, keywordstyle=\color{blue}
, morekeywords={function, constructor, contract, constant, struct, address, bool, byte, bytes, bytes1, bytes2, bytes3, bytes4, bytes5, bytes6, bytes7, bytes8, bytes9, bytes10, bytes11, bytes12, bytes13, bytes14, bytes15, bytes16, bytes17, bytes18, bytes19, bytes20, bytes21, bytes22, bytes23, bytes24, bytes25, bytes26, bytes27, bytes28, bytes29, bytes30, bytes31, bytes32, enum, int, int8, int16, int24, int32, int40, int48, int56, int64, int72, int80, int88, int96, int104, int112, int120, int128, int136, int144, int152, int160, int168, int176, int184, int192, int200, int208, int216, int224, int232, int240, int248, int256, mapping, string, uint, uint8, uint16, uint24, uint32, uint40, uint48, uint56, uint64, uint72, uint80, uint88, uint96, uint104, uint112, uint120, uint128, uint136, uint144, uint152, uint160, uint168, uint176, uint184, uint192, uint200, uint208, uint216, uint224, uint232, uint240, uint248, uint256, var, void, ether, finney, szabo, wei, days, hours, minutes, seconds, weeks, years, token}
, classoffset=3
, keywordstyle=\color{Plum}\bfseries
, morekeywords={balance, block, blockhash, instanceof, coinbase, difficulty, gaslimit, number, timestamp, msg, data, gas, sender, value, sig, value, now, tx, gasprice, origin}
}

\newcommand{\ifempty}[3]{%
  \ifthenelse{\isempty{#1}}{#2}{#3}%
}

\newcommand{\ifdots}[3]{%
  \ifthenelse{\equal{#1}{...}}{#2}{#3}%
}

\newtoggle{hidden}
\toggletrue{hidden}
\newcommand{\hidden}[1]{\iftoggle{hidden}{}{{\color{red}#1}}}

\makeatletter
\newcommand*{\itemequation}[3][]{%
  \item
  \begingroup
    \refstepcounter{equation}%
    \ifx\\#1\\%
    \else  
      \label{#1}%
    \fi
    \sbox0{#2}%
    \sbox2{$\displaystyle#3\m@th$}%
    \sbox4{\@eqnnum}%
    \dimen@=.5\dimexpr\linewidth-\wd2\relax
    \ifcase
        \ifdim\wd0>\dimen@
          \z@
        \else
          \ifdim\wd4>\dimen@
            \z@
          \else 
            \@ne
          \fi 
        \fi
      \@latex@warning{Equation is too large}%
    \fi
    \noindent   
    \rlap{\copy0}%
    \rlap{\hbox to \linewidth{\hfill\copy2\hfill}}%
    \hbox to \linewidth{\hfill\copy4}%
    \hspace{0pt}
  \endgroup
  \ignorespaces 
}
\makeatother

\newcommand{\Real}[1]{\mathrm{Real}}

\newcommand{\codefont}{\fontsize{10}{10}\selectfont}
\newcommand{\code}[1]{{\tt\codefont{#1}}}
\newcommand{\contract}[2][]{{\tt\codefont{\cmvColor{#2_{#1}}}}}
\newcommand{\interface}[2][]{{\tt\codefont{\intColor{#2_{#1}}}}}
\newcommand{\txcode}[1]{{\tt\codefont{\color{Black}{#1}}}}

\newcommand{\callee}[1]{{\cmvColor{\it callee}({#1})}}
\newcommand{\deps}[1]{{\cmvColor{\it deps}}({#1})}

\newcommand{\Eg}{E.g.\@\xspace}
\newcommand{\eg}{e.g.\@\xspace}
\newcommand{\ie}{i.e.\@\xspace}
\newcommand{\wrt}{w.r.t.\@\xspace}

\newcommand{\Wlog}{W.l.o.g.\@\xspace}

\renewcommand{\epsilon}{\varepsilon}
\newcommand{\emptyseq}{\epsilon}

\theoremstyle{plain}
\newtheorem{theorem}{Theorem}
\newtheorem{lemma}[theorem]{Lemma}
\newtheorem{proposition}[theorem]{Proposition}

\theoremstyle{definition}
\newtheorem{definition}{Definition}
\newtheorem{example}{Example}

\newenvironment{proofof}[2][]{%
  \ifempty{#1}
  {\subsection*{Proof of~\Cref{#2}}}
  {\subsection*{Proof of~\Cref{#2} ({#1})}}
  \label{#2-proof}%
  }%
  {}

\def\addrColor{\color{magenta}}
\newcommand{\addrFmt}[1]{{\addrColor{\tt #1}}}
\newcommand{\addr}[2][]{\addrFmt{#2}_{\addrColor{#1}}\xspace}
\newcommand{\AddrA}[1][]{{\addrColor{\mathcal{A}_{#1}}}} 
\newcommand{\AddrB}[1][]{{\addrColor{\mathcal{B}_{#1}}}} 

\newcommand{\AddrU}[1][]{\addrFmt{\mathbb{A}}_{\addrColor{#1}}} 

\newcommand{\cmvOfcst}[1]{\dag{#1}}
\newcommand{\strip}[2]{{#1}\!\upharpoonright_{#2}}

\def\pmvColor{\color{red}}
\newcommand{\pmvFmt}[1]{{\pmvColor{\tt #1}}}
\newcommand{\pmv}[2][]{\pmvFmt{#2}_{\pmvColor{#1}}\xspace}
\newcommand{\pmvA}[1][]{\pmv[{#1}]{A}} 
\newcommand{\pmvB}[1][]{\pmv[{#1}]{B}}

\newcommand{\pmvM}[1][]{\pmv{M}_{\pmvFmt{#1}}} 

\newcommand{\PmvM}[1][]{\pmvFmt{\mathcal{M}}_{\pmvColor{#1}}} 

\newcommand{\Adv}{\PmvM} 
\newcommand{\PmvU}{\pmvFmt{\mathbb{A}}_{\pmvColor{u}}} 

\def\cmvColor{\color{blue}}
\def\intColor{\color{YellowGreen}} 
\newcommand{\cmvFmt}[1]{{\cmvColor{\code{#1}}}}
\newcommand{\cmv}[2][]{\cmvFmt{#2}_{\cmvColor{#1}}}
\newcommand{\cmvC}[1][]{\cmv[#1]{C}} 
\newcommand{\cmvCi}[1][]{\cmvFmt{C'_{\cmvColor{{\rm #1}}}}}

\newcommand{\cmvD}[1][]{\cmv[#1]{D}} 

\newcommand{\CmvC}[1][]{\cmv[#1]{\mathcal{C}}} 
\newcommand{\CmvCi}[1][]{\cmvFmt{\mathcal{C}'_{\cmvColor{{\rm #1}}}}}
\newcommand{\CmvD}[1][]{\cmv[#1]{\mathcal{D}}} 
\newcommand{\CmvDi}[1][]{\cmvFmt{\mathcal{D}'_{#1}}} 

\newcommand{\CmvU}[1][]{\cmvFmt{\mathbb{A}}_{\cmvColor{c}}} 

\def\cstColor{\color{MidnightBlue}}
\newcommand{\cstFmt}[1]{{\cstColor{#1}}}
\newcommand{\cst}[2][]{\cstFmt{#2}_{\cstColor{#1}}}
\newcommand{\cstC}[1][]{\cst[#1]{\Gamma}} 
\newcommand{\cstCi}[1][]{\cstFmt{\Gamma'_{\cstColor{{\rm #1}}}}}

\newcommand{\cstD}[1][]{\cst[#1]{\Delta}} 
\newcommand{\cstDi}[1][]{\cstFmt{\Delta'_{\cstColor{{\rm #1}}}}}

\newcommand{\cstS}[1][]{\cst[#1]{\Sigma}} 

\def\txColor{\color{MidnightBlue}}

\newcommand{\txFmt}[1]{{\txColor{\sf #1}}}

\newcommand{\tx}[2][]{\txFmt{#2}_{\txColor{#1}}}
\newcommand{\txT}[1][]{\tx[#1]{X}} 
\newcommand{\txY}[1][]{\tx[#1]{Y}} 
\newcommand{\txTi}[1][]{\txFmt{X'_{\txColor{{\it #1}}}}}

\newcommand{\TxTS}[1][]{\vec{\tx[#1]{\mathcal{X}}}} 
\newcommand{\TxYS}[1][]{\vec{\tx[#1]{\mathcal{Y}}}} 

\DeclareMathAlphabet{\mathbfsf}{\encodingdefault}{\sfdefault}{bx}{n}

\newcommand{\dom}[1]{\operatorname{dom} {#1}}

\newcommand{\Nat}{\mathbb{N}}

\newcommand{\setenum}[1]{\{#1\}}
\newcommand{\setcomp}[2]{\left\{{#1} \,\middle|\, {#2}\right\}}

\newcommand{\wmvA}[1][]{w_{#1}}

\newcommand{\WmvA}[1][]{W_{#1}}
\newcommand{\WmvAi}[1][]{W'_{#1}}
\newcommand{\WmvAii}[1][]{W''_{#1}}

\newcommand{\wal}[2]{{\mathit{\omega}_{#1}({#2})}}

\newcommand{\waltok}[2]{#1:#2}
\newcommand{\walenum}[1]{[#1]}
\newcommand{\walpmv}[2]{{#1}\walenum{#2}}
\newcommand{\walu}[3]{\walpmv{#1}{\waltok{#2}{#3}}}

\newcommand{\waldistrarrow}[1]{\approx_{\$}}

\newcommand{\wealth}[2]{\$_{#1}({#2})}
\newcommand{\idxfun}[1]{\mathbf{1}_{#1}}
\newcommand{\price}[1]{\$\idxfun{#1}}

\newcommand{\gain}[3]{\mathit{\gamma}_{#1}\ifempty{#2}{}{({#2},{#3})}}

\newcommand{\mall}[2]{{\kappa_{#1}\ifempty{#2}{}{({#2})}}}

\newcommand{\mev}[3]{{\mathrm{MEV\!}_{#1}\ifempty{#1#2#3}{}{({#2\ifempty{#3}{}{,#3}})}}}
\newcommand{\lmev}[3]{{\mathrm{MEV\!}_{#1}\ifempty{#1#2#3}{}{({#2\ifempty{#3}{}{,#3}})}}}
\newcommand{\rlmev}[3]{{\mathrm{MEV\!}^{\,\infty}_{#1}\ifempty{#1#2#3}{}{({#2\ifempty{#3}{}{,#3}})}}}

\newcommand{\nonintrel}{\mathrel{\not\rightsquigarrow}}
\newcommand{\nonint}[2]{\ifempty{#1}{\nonintrel}{{#1} \nonintrel {#2}}}
\newcommand{\negnonint}[2]{\ifempty{#1}{\rightsquigarrow}{{#1} \rightsquigarrow {#2}}}
\newcommand{\richnonintrel}{\mathrel{\not\rightsquigarrow^{\infty}}}
\newcommand{\richnonint}[2]{\ifempty{#1}{\richnonintrel}{{#1} \richnonintrel {#2}}}
\newcommand{\negrichnonint}[2]{\ifempty{#1}{\rightsquigarrow^{\infty}}{{#1} \rightsquigarrow^{\infty} {#2}}}

\newcommand{\qedex}{\ensuremath{\diamond}}

\definecolor{LightGrey}{rgb}{0.95,0.95,0.95}
\definecolor{keyword}{HTML}{7F0055}

\def\tokColor{\color{ForestGreen}}
\newcommand{\tokFmt}[1]{{\tokColor{\tt #1}}}
\newcommand{\ETH}{\tokFmt{ETH}}

\newcommand{\tok}[2][]{\tokFmt{#2}_{\tokColor{#1}}\xspace}

\newcommand{\tokT}[1][]{\tok[{#1}]{T}}    
\newcommand{\tokTi}[1][]{\tok[{#1}]{T'}}
\newcommand{\TokU}{\tokFmt{\mathbb{T}}} 

\newlength\replength
\newcommand\repfrac{.1}

\newcommand\rulewidth{.6pt}
\newcommand\tdashfill[1][\repfrac]{\cleaders\hbox to \replength{%
  \smash{\rule[\arraystretch\ht\strutbox]{\repfrac\replength}{\rulewidth}}}\hfill}

\newcommand\tdotfill[1][\repfrac]{\cleaders\hbox to \replength{%
  \smash{\raisebox{\arraystretch\dimexpr\ht\strutbox-.1ex\relax}{.}}}\hfill}

\def\sysColor{\color{Black}}

\newcommand{\sysFmt}[1]{{\sysColor{#1}}}
\newcommand{\sysS}[1][]{\mathord{\sysFmt{S}_{\sysColor{#1}}}}

\newcommand{\sysSi}[1][]{\mathord{\sysColor{\sysS'_{#1}}}}

\newtoggle{arxiv}
\toggletrue{arxiv}

\begin{document}

\hyphenation{block-chain}
\hyphenation{Block-chain}
\hyphenation{block-chains}

\begin{frontmatter}

\title{A formal framework for the economic security of DeFi compositions}

\author[1]{Massimo Bartoletti\orcidlink{0000-0003-3796-9774}} 
\author[2]{Riccardo Marchesin}
\author[2]{Roberto Zunino\orcidlink{0000-0002-9630-429X}}

\cortext[bart]{\emph{Corresponding author}. Dipartimento di Matematica e Informatica, Universit\`a degli Studi di Cagliari, via Ospedale 72, 09124 Cagliari (Italy), e-mail: \texttt{bart@unica.it}}

\affiliation[1]{organization={University of Cagliari},
            city={Cagliari},
            country={Italy}}

\affiliation[2]{organization={University of Trento},
            city={Trento},
            country={Italy}}



\begin{keyword}
smart contracts \sep blockchain \sep decentralized applications \sep formal methods \sep verification
\end{keyword}

\begin{abstract}
Decentralized Finance (DeFi) services are usually constructed by composing a variety of smart contracts.
While composability is a key driver of the success of DeFi, it also creates security risks: adversaries may exploit interactions between newly deployed contracts and the pre-existing ones to inflict economic losses.
We introduce \emph{MEV non-interference}, a formal security notion for DeFi composability requiring that the maximal extractable value from a set of newly deployed contracts is not increased by interactions with the existing blockchain state. 
To support this notion, we define \emph{local MEV}, a novel measure of economic attacks that focusses on the loss of a given set of victim contracts. 
We study two adversarial models, with bounded and unbounded wealth, and establish sufficient conditions and locality principles that enable modular reasoning about secure composability. 
We apply the framework to representative DeFi compositions, including exchanges, AMMs, options, lending pools, routers, and arbitrage contracts, showing how it distinguishes secure compositions from vulnerable ones. 
Our results provide a formal foundation for reasoning about the economic security of DeFi compositions.
\end{abstract}

\end{frontmatter}

\section{Introduction}
\label{sec:intro}

Decentralized Finance (DeFi) is often described as ``money Lego'' because it enables the construction of complex financial services through the composition of simpler building blocks~\cite{defipulse,Werner22aft}.
Recent empirical analyses of the DeFi ecosystem confirm that
DeFi protocols are deeply interconnected in practice,
giving rise to highly intricate dependency structures and interactions~\cite{Kitzler22fc,Kitzler23tweb}.
While such composability is a key driver of the success of DeFi, it also introduces significant security challenges:
adversaries can exploit unintended interaction among protocols
to obtain an economic profit to the detriment of honest
users~\cite{Daian20flash,Gudgeon2020cvcbt}.
These risks are further exacerbated by emerging platforms that allow users to seamlessly create arbitrary compositions of
DeFi protocols~\cite{furucombo}.
This raises a fundamental and still largely unanswered question: \emph{under which conditions is a DeFi composition secure?}

Despite the clear practical relevance of this question,
research on DeFi composability remains surprisingly limited.
The first notion of secure DeFi composition
in the scientific literature is the one introduced by 
Babel, Daian, Kelkar and Juels in their seminal 
``Clockwork finance'' paper~\cite{Babel23clockwork}.
The focus of this work is on attacks in which adversaries exploit
newly deployed contracts to increase their profit opportunities.
Accordingly, their notion of composability requires that augmenting a blockchain state $\sysS$ with a set of contracts $\cstD$ should not allow adversaries a to extract a substantially larger
Maximal Extractable Value (MEV).
In formulae, letting $\mev{}{\sysS}{}$ denote the 
maximal value that adversaries can extract from $\sysS$,
and writing $\sysS \mid \cstD$ for the blockchain state $\sysS$
extended with the contracts $\cstD$, 
their criterion is expressed as:
\begin{equation}
  \label{eq:babel-composability}
  \text{$\cstD$ is $\epsilon$-composable in $\sysS$}
  \quad \text{iff} \quad
  \mev{}{\sysS \mid \cstD}{} \leq (1 + \epsilon) \ \mev{}{\sysS}{}
\end{equation}
where $\epsilon$ parameterises the ``not substantially larger'' condition above.
For example, let $\cstD$ be a contract that allows users
to bet on the future price of a token,
using an Automated Market Maker (AMM) in $\sysS$ as a price oracle.
The key issue is that the AMM prices are not fixed: they are determined by the current token reserves held by the AMM and therefore fluctuate in response to trades. 
In particular, large trades can induce substantial price fluctuations.
If the adversary has sufficient capital in $\sysS$,
criterion~\eqref{eq:babel-composability} would correctly classify
$\cstD$ as \emph{not} $0$-composable in $\sysS$. 
Indeed, the adversary can strategically manipulate the AMM price by executing large trades, thereby creating artificial price fluctuations that allow them to systematically win the bet. 
As a result, the adversary can extract strictly more MEV from $\sysS \mid \cstD$ than from $\sysS$ alone.

\paragraph{Limitations of $\epsilon$-composability}

We argue that the $\epsilon$-composability has some limitations.

First, using the $\mev{}{}{}$ of the \emph{entire} blockchain state $\sysS$ as a baseline for comparison makes the security guarantee provided by $\epsilon$-composability difficult to interpret.
One might naturally conclude that deploying a contract $\cstD$ in $\sysS$ is safe after establishing that $\cstD$ is $0$-composable with $\sysS$. 
However, $0$-composability only guarantees that the total MEV of the extended state $\sysS \mid \cstD$ does not exceed that of $\sysS$; it does not constrain \emph{where} this MEV is extracted from. 
To illustrate, suppose that the total value of tokens in $\cstD$ is equal to the MEV of $\sysS$, and that extracting value from $\cstD$ eliminates the MEV opportunities previously available in $\sysS$. Then $\cstD$ is still $0$-composable with $\sysS$, even though an attack on $\sysS \mid \cstD$ may drain the entire value of $\cstD$
(see~\Cref{ex:babel-composability-not-implies-mev-nonintererence} for a concrete scenario). 
From the perspective of the designer or users of $\cstD$, such a composition would hardly be considered secure.

A second drawback of using the \emph{global} MEV as a baseline
is that~\eqref{eq:babel-composability} classifies
as \emph{not} composable contracts whose MEV is intentional and does not arise from any interaction with the rest of the system.
Consider, for instance, an airdrop contract $\cstD$ that allows anyone to withdraw its entire balance.
Despite being completely independent from the rest of the system,
$\cstD$ is not $0$-composable with any $\sysS$,
since deploying it necessarily increases the total MEV of the system, \ie, $\mev{}{\sysS \mid \cstD}{} > \mev{}{\sysS}{}$.
One could attempt to recover composability by choosing a sufficiently large value of $\epsilon$, but this is both cumbersome and conceptually unsatisfactory. In the extreme case where $\mev{}{\sysS}{} = 0$, the contract is not $\epsilon$-composable for any $\epsilon \geq 0$, regardless of its complete independence from the rest of the system.

Finally, the reliance on global MEV poses both algorithmic and usability concerns.
From a computational standpoint, evaluating~\eqref{eq:babel-composability} requires reasoning about a property of the entire blockchain state, which may be prohibitively large in practice. 
From a mitigation standpoint, when $\epsilon$-composability fails because additional MEV can be extracted from contracts already present in $\sysS$, it is unclear how the designer of $\cstD$ should react. The vulnerable contracts reside in the pre-existing state and are therefore outside the control of the deployer. Consequently, the definition may identify a composability problem without providing actionable guidance on how to address it.


\paragraph{Secure composability as non-interference}

The limitations of $\epsilon$-composability suggest the need for composability notions that do not rely on the global MEV of the blockchain state.

Our starting point is the notion of \emph{non-interference}, which has been extensively studied in information-flow security since the 1980s~\cite{GoguenMeseguer82sp,Ryan99csfw,Ryan01csfw,Backes02esorics,Giacobazzi04popl}. 
In its classical form, non-interference requires that 
adversaries interacting with a software system cannot observe private data.
More precisely, the property holds when
public outputs (that can be observed by adversaries)
are not affected by private inputs.

We adapt this principle to the setting of smart contracts composition, ending up with a notion that we call \emph{MEV non-interference}.
Intuitively, the contracts $\cstD$ being deployed should not become more vulnerable merely because they are composed with a larger blockchain state. 
In other words, the maximal economic loss that adversaries can inflict on $\cstD$ should only depend on the interactions of the adversary with~$\cstD$, and not with the rest of the system. 
Under this interpretation, the interactions between the adversary and the surrounding system $\sysS$ play the role of the private inputs in classic non-interference, while the economic loss suffered by $\cstD$ plays the role of the public outputs. 
MEV non-interference therefore requires that interacting with contracts in $\sysS$ does not enable adversaries to cause more loss on $\cstD$ than they could achieve by interacting with $\cstD$ alone.

We argue that MEV non-interference captures the security property that protocol designers are naturally interested in when deploying new contracts. By focusing on the \emph{local} MEV extractable from $\cstD$, 
our notion overcomes the above-mentioned limitations of $\epsilon$-composability.
First, contracts whose MEV is intentional are not necessarily classified as insecure. 
For example, deploying the above-mentioned airdrop contract in any $\sysS$ is considered secure by MEV non-interference. 
%
Second, MEV non-interference admits local reasoning: under suitable conditions, one can establish the composability of $\cstD$ by reasoning only about $\cstD$ and its dependencies, without analysing the entire blockchain state. 




\paragraph{Contributions}

This paper introduces a formal framework for reasoning about the secure composability of smart contracts. We summarise our main contributions as follows:

\begin{itemize}

\item \textbf{MEV non-interference.}
We adapt the classical notion of non-interference from information-flow security to the smart contracts setting, obtaining a novel security criterion for protocol composition. 
To the best of our knowledge, this is the first application of non-interference principles to the study of smart contracts composability.

\item \textbf{Local MEV.}
We introduce \emph{local MEV}, a novel measure of economic attacks that quantifies the maximal loss that adversaries can inflict on a given set of victim contracts. In contrast to classical MEV~\cite{Babel23clockwork,BZ25fc}, which is defined for an entire blockchain state, local MEV focuses on the contracts whose security is under analysis. This shift from a global to a local perspective is the key ingredient enabling our notion of MEV non-interference.

\item \textbf{Two adversarial models.}
We define two variants of local MEV. The first, $\lmev{}{\sysS}{\CmvC}$, models adversaries whose wealth is bounded by their capital available in the blockchain state. 
The second, $\rlmev{}{\sysS}{\CmvC}$, models adversaries with unbounded capital, capturing attacks enabled by mechanisms such as flash loans~\cite{Qin21fc}.

\item \textbf{MEV non-interference.}
Building on local MEV, we introduce two notions of secure composability: $\nonint{\sysS}{\cstD}$ for bounded-wealth adversaries and $\richnonint{\cstC}{\cstD}$ for unbounded-wealth adversaries (where $\cstC$ denotes a system of contracts obtained by removing users' wallets from the blockchain state). 
Intuitively, these properties require that composing $\cstD$ with an existing blockchain state does not create additional opportunities to extract value from $\cstD$.

\item \textbf{Local reasoning principles.}
We establish locality results showing that both local MEV and MEV non-interference can be analysed by considering only a suitable fragment of the blockchain state. 
For local MEV, we identify conditions under which the maximal loss suffered by the victim contracts depends only on those contracts and their dependencies (\Cref{th:mev:callable-narrowing:deps,th:rich-mev:callable-narrowing:deps} for bounded- and unbounded-wealth adversaries, respectively). 
Building on these results, we prove that MEV non-interference is preserved for suitable state narrowings (\Cref{th:nonint:state-narrowing,th:rich-nonint:state-narrowing}).
Intuitively, we can remove from the state the non-dependencies of $\cstD$, provided that the contracts in $\cstD$ are sender-agnostic, \ie their behaviour is uniform with respect to their callers.
For bounded-wealth adversaries, we additionally require that the removed contracts do not induce token flows to $\cstD$ or its dependencies. 
These locality principles provide both conceptual and algorithmic advantages over approaches based on global MEV.

\item \textbf{Sufficient conditions for secure composability.}
We establish practical sufficient conditions for MEV non-interference (\Cref{th:nonint:sufficient-conditions,th:rich-nonint:sufficient-conditions}). 
A key ingredient is \emph{observational invariance}, a property ensuring that adversarial interactions with the existing blockchain state cannot influence what the newly deployed contracts observe. Combined with suitable conditions on token flows, observational invariance guarantees that composing a contract with an existing system does not create additional opportunities for adversarial value extraction. 

\item \textbf{Analysis of representative DeFi compositions.}
We validate our framework on a collection of representative DeFi compositions, including exchanges, AMMs, binary options, lending pools, swap routers, and arbitrage protocols (\Cref{tab:defi-compositions}). 
Using the sufficient conditions and locality principles developed in the paper, we establish secure composability results for several classes of protocol compositions and identify concrete scenarios in which composability fails because of exploitable dependencies, token flows, or manipulable price oracles. The resulting classification highlights the practical usefulness of MEV non-interference as a tool for analysing real-world DeFi ecosystems.

\end{itemize}
\section{Blockchain model}
\label{sec:blockchain}

\begin{table}[t]
\caption{Summary of notation.}
\label{tab:notation}
\centering
\begin{tabular}{p{50pt}p{150pt}p{50pt}p{150pt}}
\hline
$\pmvA,\pmvB$ & User accounts 
& $\AddrA,\AddrB$ & Sets of [user$\mid$contract] accounts
\\
$\cmvC,\cmvD$ & Contract accounts 
& $\addr{a}, \addr{b}$ & User$\mid$contract accounts
\\
$\CmvC,\CmvD$ & Sets of contract accounts
& $\deps{\sysS}$ & Dependencies of $\sysS$
\\
$\tokT,\tokTi$ & Token types 
& $\price{\tokT}$ & Price of $\tokT$
\\
$\txT,\txTi$ & Transactions 
& $\TxTS$ & Sequence of transactions
\\
$\sysS,\sysSi$ & Blockchain states
& $\cmvOfcst{\sysS}$ & Contract accounts in $\sysS$
\\
$\WmvA,\WmvAi$ & User accounts states 
& $\wal{\AddrA}{\sysS}$ & Wallets of $\AddrA$ in $\sysS$
\\
$\cstS,\cstC,\cstD$ & Contract accounts states 
& $\wealth{\AddrA}{\sysS}$ & Wealth of $\AddrA$ in $\sysS$
\\
$\Adv$ & Set of adversaries
& $\gain{\AddrA}{\sysS}{\TxTS}$ & $\AddrA$'s gain upon firing $\TxTS$ in $\sysS$ 
\\
\hline
\end{tabular}
\end{table}

We fix a model of account-based blockchains and smart contracts inspired by Ethereum.
To keep our theory general, we abstract from some of Ethereum's peculiar features, such as its underlying consensus protocol, gas mechanism, and EVM bytecode, just focussing on the high-level behaviour of smart contracts.
We discuss the abstractions of our model in~\Cref{sec:conclusions}.

\paragraph{Tokens}

We assume a set $\TokU$ of \emph{token types} ($\tokT, \tokTi, \ldots$), which represent the types of exchanged (fungible) assets.
While Ethereum handles its native crypto-currency (the $\ETH$) differently from the custom assets (smart contracts implementing the ERC-20 interface), we provide a uniform treatment of assets, handling both $\ETH$ and other token types with the same set of operators.

\paragraph{Accounts}

We assume a countably infinite set $\AddrU$ of \emph{accounts}, partitioned into:
\begin{itemize}

\item \emph{user accounts} $\pmvA, \pmvB, \ldots \in \PmvU$, which represent the active entities who can initiate \emph{transactions} 
(the so-called ``externally owned'' accounts in Ethereum);

\item \emph{contract accounts} $\cmvC, \cmvD, \ldots \in \CmvU$, which represent the passive entities that can react to transactions initiated by user accounts.

\end{itemize}

We use $\addr{a}, \addr{b}, \ldots$ as meta-variables ranging over both user accounts and contract accounts.
We designate a subset $\Adv$ of user accounts as \emph{adversaries}, \ie, users with the ability to influence the selection and ordering of transactions in the blockchain.
In the Ethereum ecosystem, this subset includes, \eg, block proposers and MEV searchers~\cite{flashbots}.
Although the analysis of a specific contract may depend on the particular choice of $\Adv$, there exist natural and robust choices --- such as a finite set of fresh user accounts not already appearing in the contract code or state --- for which the resulting analysis is insensitive to this choice.

\paragraph{Wallets}

A \emph{wallet} \mbox{$\wmvA \in \TokU \to \Nat$} is a finite map from token types to non-negative integers, quantifying the amount of tokens held by an account.
We represent a wallet as a comma-separated list of associations between a token type $\tokT$ and its amount $n$, written as $n:\tokT$.
We denote by $+$ the pointwise summation of wallets.

\paragraph{Blockchain states}

A blockchain state $\sysS$ is a finite map from accounts to their respective states.
The state of a user account is a wallet, while that of
a contract account is a pair $(\wmvA, \sigma)$,
where $\wmvA$ is the contract's wallet and $\sigma$ is its persistent storage, represented as a key–value store.
We denote by:
\begin{itemize}

\item $\cmvOfcst{\sysS}$ the set of contract accounts
in $\sysS$, \ie $\cmvOfcst{\sysS} = (\dom{\sysS}) \cap \CmvU$;

\item $\wal{\AddrA}{\sysS}$ the summation of the wallets of accounts $\AddrA$ in $\sysS$.

\end{itemize}

Since the primary focus of this paper is the composability of contracts, it is often necessary to manipulate blockchain states so as to add or remove accounts. 
This, in turn, requires a suitable notation for representing the individual components of a state, as well as for constructing and deconstructing states in a systematic manner.
For this purpose, we adopt the following notation:
\begin{itemize}

\item We denote by $\walpmv{\pmvA}{\wmvA}$ the state of a user account $\pmvA$ whose wallet is $\wmvA$.

\item We denote by $\walpmv{\contract{C}}{\wmvA,\sigma}$ the state of a contract account $\contract{C}$ whose wallet is $\wmvA$ and storage is $\sigma$.
A storage is represented as a comma-separated list of associations between keys and values, each written as $\code{k} = v$.

\item We denote by \mbox{$\sysS \mid \sysSi$} the composition of two blockchain states $\sysS$ and $\sysSi$. 
Such composition is defined whenever $\dom{\sysS}$ and $\dom{\sysSi}$ are disjoint --- namely, a blockchain state cannot contain multiple states for the same account. 

\item We use $\WmvA, \WmvAi, \ldots$ to denote states comprising only user accounts.

\item We use $\cstS, \cstC, \cstD, \ldots$ to denote states comprising only contract accounts.

\end{itemize}

\paragraph{Contracts}

For generality, rather than assuming a specific syntax for the contract language, we model contracts abstractly as public APIs whose methods have the following behaviour:
\begin{enumerate}

\item receive parameters from the caller. In particular, the parameters include the immediate caller and the transaction initiator (these accounts may coincide);

\item perform a finite sequence of \emph{effects} drawn from the following kinds:
\begin{itemize}
    \item abort execution when specified conditions are not met;
    
    \item read and update the contract storage;
    
    \item send tokens from the contract to another account.
      Its behaviour depends on the whether the recipient is a user or a contract account:
      \begin{itemize}
      \item If the recipient is a user account and the sender's wallet contains a sufficient amount of tokens, then the tokens are immediately transferred to the recipient, possibly  creating the account if it does not yet exist in the blockchain state.
      \item If the recipient is a contract account, then the operation is interpreted as a mere \emph{approval} of the transfer, while the actual transfer is triggered by an explicit \emph{receive} action of the recipient.  
      \end{itemize}



    \item receive tokens from the caller, provided that it has previously approved the transfer;

    \item invoke methods of the same or of other contracts;

\end{itemize}
\item return a value to the caller.
\end{enumerate}



Note that the aforementioned contract capabilities do not include the ability to directly inspect the state of external accounts, including their wallets. 
This, however, is not a practical limitation. 
Dependence on a user's wallet can be enforced by requiring the caller of a method to provide tokens to the contract, while dependence on the state of another contract can be realised by querying that state through its public getter methods.
For simplicity, we also forbid direct token transfers between user accounts.
This restriction entails no loss of generality, since such transfers can always be mediated through suitable contracts.


\paragraph{Contract dependencies}

We say that a contract $\cmvC$ is \emph{called by} $\cmvD$ if some method of $\cmvD$ invokes a method of $\cmvC$. 
We denote by $\sqsubseteq$ the reflexive and transitive closure of this relation.
Given a set of contracts $\CmvC$, we denote by $\deps{\CmvC}$ the set of \emph{dependencies} of $\CmvC$, namely, the set of contracts called by some contract in $\CmvC$. 
Formally:
\[
\deps{\CmvC} 
\; = \;
\setcomp{\cmvCi}{\exists \cmvC \in \CmvC.\ \cmvCi \sqsubseteq \cmvC}
\]
We write $\deps{\sysS}$ as a shorthand for $\deps{\cmvOfcst{\sysS}}$, \ie the dependencies of the contracts occurring in $\sysS$.

\paragraph{Well-formedness}
We say that a blockchain state
\(
\sysS \; = \; \addr[1]{a}[\cdots] \mid \cdots \mid \addr[n]{a}[\cdots]
\)
is \emph{well-formed} if
\begin{inlinelist}
\item \label{eq:well-formedness:partial-order} 
$\sqsubseteq$ is a partial order and
\item \label{eq:well-formedness:closed-downwards}
\(
  \forall i,\cmvD.\ 
  \cmvD \sqsubseteq \addr[i]{a} 
  \implies
  \exists j \leq i.\ \cmvD = \addr[j]{a}
\).
\end{inlinelist}
Condition~\ref{eq:well-formedness:partial-order} requires that contracts can only call previously deployed contracts or themselves. 
Condition~\ref{eq:well-formedness:closed-downwards} means that all the dependencies of the contracts in $\sysS$ are included in $\sysS$, and if a contract $\cmvC=\addr[i]{a}$ calls $\cmvD$, then $\cmvD$ appears on the left of $\cmvC$ in the enumeration, \ie $j \leq i$.
Note that if $\sysS$ is well-formed, then:
\begin{itemize}

\item $\sysS$ is closed under dependencies, \ie $\deps{\sysS} \subseteq \cmvOfcst{\sysS}$;
\item mutual calls between contracts are forbidden;
\item when deconstructing $\sysS$ into $\sysS[0] \mid \sysS[1]$, the leftmost component $\sysS[0]$ is guaranteed to be well-formed, but
the rightmost component $\sysS[1]$ is \emph{not}, since some of the dependencies of $\sysS[1]$ may be in $\sysS[0]$.

\end{itemize}

Hereafter, we consider well-formed states up-to permutations of their individual components, provided that the well-formedness condition is respected.
Unless otherwise stated, in our statements we implicitly assume that the blockchain states mentioned (in hypothesis or thesis) are well-formed.

\paragraph{Transactions}
We model blockchains as reactive systems whose behaviour is specified as a 
deterministic transition relation $\xrightarrow{}$ between blockchain states,
and transitions are labelled by
\emph{transactions} $\txT, \txTi, \ldots$ sent by users.
We denote by:
\[
\pmvA : \contract{C}.\txcode{f}(\code{args})
\]
a transaction sent by $\pmvA$, where
$\contract{C}$ is the called contract,
$\txcode{f}$ is the called method,
and $\code{args}$ is the list of actual parameters.
%
Given $\txT = \pmvA:\contract{C}.\txcode{f}(\code{args})$,
we write $\callee{\txT}$ for the called contract~$\contract{C}$.
Note that, although a transaction must originate from a user account,
it can trigger additional method calls where the caller is a contract account.
In general, a transaction can involve a sequence of calls of the form: 
\begin{align*}
& 
\pmvA[0]:\contract[0]{D}.\txcode{f_{0}}(\code{args_{0}})
\;\;
\contract[1]{C}:\contract[1]{D}.\txcode{f_{1}}(\code{args_{1}})
\;\;
\contract[2]{C}:\contract[2]{D}.\txcode{f_{2}}(\code{args_{2}})
\;\; \cdots \;\;
\contract[n]{C}:\contract[n]{D}.\txcode{f_{n}}(\code{args_{n}})
\end{align*}
We refer to the method calls
$\contract[i]{C}:\contract[i]{D}.\txcode{f_{i}}(\code{args_{i}})$
as \emph{internal} calls, to their callers $\contract[i]{C}$ as \emph{senders}, and to $\pmv[0]{A}$ as the \emph{origin}.
Each method $\txcode{f_i}$ in this sequence can refer to the origin and to the sender; note that these two accounts coincide only for $i=0$, while the senders for $i>0$ are contract accounts.

We assume that invalid transactions are \emph{reverted}, namely: 
\[
\sysS \xrightarrow{\;\txT\;} \sysS
\tag*{\text{if $\txT$ is not valid in $\sysS$}}
\]
There are several possible causes of invalidity, such as attempting to call a non-existent method, triggering an abort, performing an undefined operation (\eg, division by zero), or attempting a token transfer without sufficient balance.

\begin{figure}[t!]
  \begin{lstlisting}[language=solidity
  ,classoffset=5,morekeywords={setPrice,getRate,swap},keywordstyle=\color{Black}
  ,frame=single
  ,caption={A price oracle contract.}
  ,label={lst:price-oracle}]
interface IPriceOracle {
  // exchange rate to sell ti and buy to (multiplied by 1e6)
  function getRate(token ti, token to) external returns (uint);
}

contract PriceOracle is IPriceOracle {
  mapping (token => uint) prices;
  mapping (token => bool) isPriced;
  address owner;
  uint public immutable fee;
  
  constructor(uint fee_) { owner=msg.sender; fee=fee_; }

  function setPrice(token t, uint p) public {
    require (msg.sender==owner); // only the owner can set the price
    isPriced[t]=true;
    prices[t]=p; 
  }
  
  function getRate(token ti, token to) public returns(uint) {
    require (isPriced[ti] && isPriced[to]);
    ETH.receive(fee); // transfer fee:ETH from msg.sender to this
    return (prices[ti] * 1_000_000)/prices[to]; 
  } 
}
\end{lstlisting}
\end{figure}

\begin{figure}[t!]
  \begin{lstlisting}[language=solidity
  ,classoffset=5,morekeywords={setPrice,getRate,swap},keywordstyle=\color{Black}
  ,frame=single
  ,caption={A simple exchange contract handling two token types.}
  ,label={lst:exchange}]
interface IExchange is IPriceOracle {
  // list of swappable tokens
  function getTokens() external view returns (token[] memory);
  // receive x:ti & send >=ymin:to; revert if pair is not swappable
  function swap(token ti, uint x, token to, uint ymin) external returns (uint);
}

contract Exchange is IExchange {
  IPriceOracle p; token immutable t0; token immutable t1;
  
  constructor(IPriceOracle p_, token t0_, token t1_) { p=p_; t0=t0_; t1=t1_; }

  function getTokens() public view returns (token[] memory) {
    token[] memory ts = new token[](2); ts[0]=t0; ts[1]=t1; return ts;
  }

  function getRate(token ti, token to) public view returns (uint) { 
    return p.getRate(ti,to); 
  }
  
  function swap(token ti, uint x, token to, uint ymin) public returns (uint) {
    require ((ti==t0 && to==t1) || (ti==t1 && to==t0)); 
    ti.receive(x); // transfer x:ti from msg.sender to this
    ETH.receive(p.fee());  // transfer fee from sender to this
    ETH.send(p.fee(), p);  // approve fee transfer to oracle 
    uint y = (x * p.getRate(ti,to))/1_000_000;
    require (y>=ymin);
    to.send(msg.sender,y); // transfer y:to to msg.sender
    return y;
  }
}
  \end{lstlisting}
\end{figure}

\paragraph{Contract language}

In our examples, we specify contracts in a Solidity-like language with primitive token types.
Namely, rather than implementing tokens as ERC-20 contracts, our language extends Solidity with a base type \lstinline[language=solidity]{token} to represent token types.
Each token type \lstinline[language=solidity]{T} supports the following operations:
\begin{itemize}

\item \lstinline[language=solidity]{T.send(rcv,n)} has different behaviours depending on whether the recipient \code{rcv} is a user or a contract account.
If \code{rcv} is a user account, then the operation transfers \code{n:T} from the contract to \code{rcv}.
Instead, if \code{rcv} is a contract account, the operation authorizes the contract \code{rcv} to receive up to \code{n:T} from the caller's wallet.
We assume that a subsequent \lstinline[language=solidity]{send} with the same token and receiver overwrites the previous one, effectively cancelling it;

\item \lstinline[language=solidity]{T.receive(n)} transfers \code{n:T} from the immediate caller to the contract. 
The operation reverts if the caller has not previously approved the transfer, or if its wallet does not contain a sufficient balance of tokens \code{T}.

\end{itemize}

Except for the \lstinline[language=solidity]{token} type and its operations, our code snippets adhere to Solidity.%
\footnote{The ERC-20 interface uses different names for token operations, \ie\ \code{approve} to authorize token transfers, and \code{transferFrom} to receive previously authorized tokens.}
We will be a bit sloppier only in the technical toy examples, where for brevity we will omit, \eg, the function visibility and mutability modifiers.

\begin{example}
%
The contract $\contract{PriceOracle}$ in~\Cref{lst:price-oracle} has two methods: $\txcode{setPrice}$ allows the owner to set the price of a given token type (taking as unit of measure some reference token, \eg the \ETH), and $\txcode{getRate}$ allows anyone to read the exchange rate between two token types, provided that a fee is paid to the contract.  
The contract $\contract{Exchange}$ in \Cref{lst:exchange} allows anyone to pay tokens of type \code{t0} in exchange of tokens of type \code{t1} (and vice versa). 
The exchange rate is obtained through a price oracle, fixed when $\contract{Exchange}$ is deployed. 

Consider a transaction $\txT = \pmvA:\contract{Exchange}.\txcode{swap}(\tokT[1],10,\tokT[2],0)$ and the state:
\begin{align*}
\sysS 
\; = \;
& \walpmv{\pmvA}{10:\tokT[1],0:\tokT[2],1:\ETH}
\mid
\walpmv{\pmvB}{0:\ETH}
\\
\mid \;
& \walpmv{\contract{PriceOracle}}{\waltok{0}{\ETH}, \code{owner}=\pmvB, \code{fee}=1, \code{prices}=\setenum{\tokT[1]=3, \tokT[2]=5}, \ldots}
\\
\mid \; 
& \walpmv{\contract{Exchange}}{\waltok{0}{\tokT[1]}, \waltok{100}{\tokT[2]}, \code{p}=\contract{PriceOracle}, \code{t0}=\tokT[1], \code{t1}=\tokT[2]} 
\end{align*}
Since transactions are signed by their senders, we do not require them to explicitly approve token transfers (as in the Move language~\cite{Blackshear2019MoveAL}). 
Firing $\txT$ in $\sysS$ triggers an internal call
\(
\contract{Exchange}:\contract{PriceOracle}.\txcode{getRate}(\tokT[1],\tokT[2])
\),
overall causing a state transition from $\sysS$ to the state:
\begin{align*}
\sysSi 
\; = \;
& \walpmv{\pmvA}{0:\tokT[1],6:\tokT[2],0:\ETH}
\mid
\walpmv{\pmvB}{0:\ETH}
\\
\mid \;
& \walpmv{\contract{PriceOracle}}{\waltok{1}{\ETH}, \code{owner}=\pmvB, \code{fee}=1,  \code{prices}=\setenum{\tokT[1]=3, \tokT[2]=5}, \ldots}
\\
\mid \; 
& \walpmv{\contract{Exchange}}{\waltok{10}{\tokT[1]}, \waltok{94}{\tokT[2]}, \code{p}=\contract{PriceOracle}, \code{t0}=\tokT[1], \code{t1}=\tokT[2]} 
\end{align*}
Note that firing $\txT$ in $\sysSi$ would revert, since the sender $\pmvA$ does not have enough units of $\tokT[1]$ (and of $\ETH$) to transfer to the $\contract{Exchange}$.
\hfill\qedex
\end{example}

\paragraph{Sender-agnostic methods}

We say that a contract method is \emph{sender-agnostic} when 
the effect of an internal call to the method does not depend on the identity of the caller.
More precisely, consider two transactions $\txT[0]$ and $\txT[1]$ with the same initiator, whose executions trigger internal calls to the method $\txcode{f}$ of contract $\contract{C}$: 
\[
\contract[0]{D}:\contract{C}.\txcode{f}(\code{args})
\qquad
\contract[1]{D}:\contract{C}.\txcode{f}(\code{args})
\]
in the same state, with the same arguments, 
but with different senders $\contract[0]{D}$ and $\contract[1]{D}$, both of which have approved the token transfers needed  to fund the respective calls.
We then say that $\txcode{f}$ is sender-agnostic whenever the two internal calls either both revert, or both succeed producing the same effects, up-to the transfer of tokens to the sender.
More specifically, the two calls update the contract state in the same way, transfer the same tokens to the same user accounts, perform the same internal calls to other methods, and transfer the same tokens to their respective senders $\contract[0]{D}$ and $\contract[1]{D}$.

\begin{example}
\label{ex:PriceOracle-Exchange:sender-agnostic}
In the $\contract{PriceOracle}$ contract in~\Cref{lst:price-oracle},
the method $\txcode{setPrice}$ is \emph{not} sender-agnostic whenever \code{owner} is a contract account with a method that can call $\txcode{setPrice}$.
In that case, the effect of an internal call to $\txcode{setPrice}$ depends on the sender: if it is the owner, the call succeeds, otherwise it reverts.
Instead, the method $\txcode{getRate}$ is sender-agnostic, as its effect does not depend on the identity of the caller.
For the same reason, the method $\txcode{swap}$ of the $\contract{Exchange}$ contract in~\Cref{lst:exchange} is sender-agnostic as well.
\hfill\qedex
\end{example}



\paragraph{Token flows}
Let $\sysS$ be a blockchain state and let $\CmvC, \CmvD$ be sets of contract accounts.
We say that there is a \emph{token flow} from $\CmvC$ to $\CmvD$ in $\sysS$ if there exists a sequence of valid transactions in $\sysS$ containing as effects an output of some token type $\tokT$ from $\CmvC$ and an input of $\tokT$ from $\CmvD$.
As a special case, we say that there is a \emph{direct token flow} from $\CmvC$ to $\CmvD$ in $\sysS$ if there exists a sequence of valid transactions from $\sysS$ where the effects of one of the transactions includes the transfer of tokens from some contract in $\CmvC$ to some contract in~$\CmvD$.


\begin{figure}[t!]
  \begin{lstlisting}[language=solidity
  % ,classoffset=5,morekeywords={setPrice,getRate,swap},keywordstyle=\color{Black}
  ,frame=single
  ,caption={An airdrop contract and a fee-less exchange contract.}
  ,label={lst:airdrop}]
contract Airdrop {
  token t;
  constructor(token t_, uint n) { t=t_; t.receive(n); }
  function drop(uint n) public { t.send(msg.sender,n); }
}

contract FreeExchange is IExchange {
  IExchange ex;
  
  constructor(IExchange ex_) { ex=ex_; }

  function getTokens() public view returns (token[] memory) { return ex.getTokens(); }

  function getRate(token ti, token to) public view returns (uint) { 
    return ex.getRate(ti,to); 
  }
  
  function swap(token ti, uint x, token to, uint ymin) public returns (uint) {
    ti.receive(x);
    uint fee = PriceOracle.fee();
    Airdrop.drop(fee);
    ti.approve(ex,x);
    ETH.approve(ex,fee);
    ex.swap(ti,x,to,ymin); // revert if pair (ti,to) not swappable
    uint y = T2.balanceOf(this);
    require (y>=ymin);
    to.send(msg.sender,y);
    return y;
  }
}
\end{lstlisting}
\end{figure}

\begin{example}
\label{ex:PriceOracle-Exchange:token-flows}
Let $\CmvC = \setenum{\contract{Exchange}}$ and let 
$\CmvD = \setenum{\contract{PriceOracle}}$ for the contracts of~\Cref{lst:price-oracle,lst:exchange},
and let $\sysS$ as in~\Cref{ex:PriceOracle-Exchange:sender-agnostic}.
As shown before, firing $\pmvA:\contract{Exchange}.\txcode{swap}(\tokT[1],10,\tokT[2],0)$ in $\sysS$
triggers an internal call:
\[
\contract{Exchange}:\contract{PriceOracle}.\txcode{getRate}(\tokT[1],\tokT[2])
\]
whose effects include a transfer of $\ETH$ from the caller $\contract{Exchange}$ to the callee $\contract{PriceOracle}$.
Hence, there is a \emph{direct} token flow from $\CmvC$ to $\CmvD$ in $\sysS$.

To illustrate \emph{in}direct token flows, consider the contracts in~\Cref{lst:airdrop}.
The contract $\contract{Airdrop}$ allows anyone to withdraw any amount of a given token deposited in the contract. 
This feature is leveraged by the contract $\contract{FreeExchange}$ to enable fee-less token swaps. 
To do that, the $\contract{FreeExchange}$ first withdraws the needed fee from the $\contract{Airdrop}$, then invokes the $\contract{Exchange}$ from~\Cref{lst:exchange} to perform the token swap, and finally transfers back the tokens to the sender.

Let $\CmvC = \setenum{\contract{Airdrop}}$, 
let $\CmvD = \setenum{\contract{PriceOracle}}$, and let
\[
\sysS[1] 
= 
\sysS
\mid
\walpmv{\contract{Airdrop}}{\waltok{1000}{\ETH}, \code{t}=\ETH}
\mid
\walpmv{\contract{FreeExchange}}{\code{ex}=\contract{Exchange}}
\]
The transaction $\pmvA:\contract{FreeExchange}.\txcode{swap}(10)$ causes an $\ETH$ flow from $\CmvC$ to $\CmvD$ in~$\sysS[1]$.
This flow is indirect, since none of the internal calls includes a transfer of $\ETH$ from $\CmvC$ to $\CmvD$ as an effect. 
However, the effect of the internal call to $\contract{Airdrop}$ is an output of $\ETH$ from $\CmvC$ (to the $\contract{FreeExchange}$), and the effect of the internal call to $\contract{PriceOracle}$ includes an input of $\ETH$ (from $\contract{Exchange}$) to $\CmvD$.
\hfill\qedex
\end{example}







\section{Maximal Extractable Value}
\label{sec:mev}
\label{sec:lmev}

\renewcommand{\strip}[2]{{#1} \cap \deps{#2}}

In this~\namecref{sec:mev} we introduce a notion of MEV tailored to the study of smart contract compositions.
In~\Cref{sec:mev:def}, we formally define MEV and compare our notion with the seminal definition introduced in~\cite{Babel23clockwork}. 
In~\Cref{sec:mev:basic} we establish a collection of basic properties of MEV that are used throughout the paper.
In~\Cref{sec:mev:frontrunning-resistance,sec:mev:observable-invariance}, we prove two key results that provide sufficient conditions under which the adversary's capabilities can be restricted while preserving MEV. 


\subsection{Definition}
\label{sec:mev:def}

We measure the effect of an attack as the variation in the \emph{wealth} of a set of victim accounts before and after the attack.
Given a blockchain state $\sysS$ and set of accounts $\AddrA$ (possibly comprising both user and contract accounts), we define their wealth $\wealth{\AddrA}{\sysS}$ as the weighted sum of the tokens held in $\AddrA$'s wallets, where the weights are the token prices.
We denote by $\price{\tokT}$ the \emph{price} of a token type $\tokT$ and assume that $\price{\tokT} > 0$ for all $\tokT$.
We assume these prices to be fixed throughout an attack (note instead that the prices given by a $\contract{PriceOracle}$ contracts can vary, since they depend on the blockchain state).

\begin{definition}[Wealth]
\label{def:wealth}
The wealth of a set of accounts $\AddrA \subseteq \AddrU$ in $\sysS$ is given by:
\begin{equation*}
    \wealth{\AddrA}{\sysS}
    \; = \;
    \sum_{\tokT \in \TokU}
    \wal{\AddrA}{\sysS}(\tokT) \cdot \price{\tokT}
\end{equation*}
\end{definition}

Note that since both wallets and blockchain states have finite domain,
states cannot have an infinite amount of tokens,
and so the wealth is always finite.


\begin{definition}[Gain]
  \label{def:gain}
  The gain of $\AddrA \subseteq \AddrU$ upon firing $\TxTS$ in $\sysS$ is given by:
  \[
  \gain{\AddrA}{\sysS}{\TxTS}
  \; = \;
  \wealth{\AddrA}{\sysSi} - \wealth{\AddrA}{\sysS}
  \qquad
  \text{if $\sysS \xrightarrow{\TxTS} \sysSi$}
  \]
\end{definition}

MEV was originally defined in~\cite{Babel23clockwork} as the maximum gain that an adversary can achieve by firing a sequence of transactions in a blockchain state $\sysS$.
If we denote by $\mall{}{\Adv}$ the set of transactions craftable by $\Adv$ by using their private knowledge,
and by $\mall{}{\Adv}^*$ their finite sequences,
the notion of MEV in~\cite{Babel23clockwork} can be translitterated into our notation as:
\begin{equation}
  \label{eq:mev}
  \mev{}{\sysS}{}
  = \max \setcomp
  {\gain{\Adv}{\sysS}{\TxTS}}
  {\TxTS \in \mall{}{\Adv}^*}
\end{equation}

While~\eqref{eq:mev} measures an adversary's success in terms of its \emph{gain}, to analyse the economic security for smart contract compositions we focus instead on the \emph{loss} an adversary can inflict, on a given set of victim contracts~$\CmvC$, by manipulating transaction ordering and exploiting cross-contract dependencies.
To this purpose, our notion of MEV distinguishes from~\eqref{eq:mev} in the following aspects:
\begin{enumerate}

\item We measure the value that adversaries can \emph{drain} from a given set of victim contracts~$\CmvC$.
  This means that only the tokens extracted from the contracts in~$\CmvC$ contribute to the MEV. 
  Unlike in~\eqref{eq:mev}, the tokens grabbed from other contracts do not contribute to MEV.

\item We count as MEV \emph{all} the tokens that $\Adv$ can remove from $\CmvC$, regardless of whether $\Adv$ can transfer them to their wallets.
  Namely, while~\eqref{eq:mev} maximises $\Adv$'s gain $\gain{\Adv}{}{}$, our notion maximises $\CmvC$'s \emph{loss} $-\gain{\CmvC}{}{}$.
  Doing so, our notion of secure composability also encompasses \emph{irrational adversaries} who try to damage the victim contracts without necessarily making a profit.

\item We parameterise our MEV \wrt the set $\CmvD$ of contracts \emph{callable} by~$\Adv$.
  This enables us to reinterpret the separation between private and public information in language-based non-interference. 
  In that setting, the main insight is that observable (public) outputs remain unaffected by variations in confidential (private) inputs --- equivalently, restricting inputs to the private ones must preserve the public outputs.
  In our context, we rephrase this by requiring that the loss inflictable to $\CmvC$ (the public output) is not affected when restricting the set of callable contracts to a subset $\CmvD$ (the private inputs).

\end{enumerate}

\begin{definition}[MEV]
  \label{def:lmev}
  Let
  \(
  \mall{\CmvD}{\Adv}
  = 
  \setcomp{\txT \in \mall{}{\Adv}}{\callee{\txT} \in \CmvD}
  \)
  be the set of transactions craftable by $\Adv$ 
  restricted to calling contracts in $\CmvD$.
  We define:
  \begin{equation}
    \label{eq:lmev}
    \lmev{\CmvD}{\sysS}{\CmvC}
    = \max \setcomp
    {-\gain{\CmvC}{\sysS}{\TxTS}}
    {\TxTS \in \mall{\CmvD}{\Adv}^*}
  \end{equation}
  Hereafter, we refer to the sets $\CmvC$ and $\CmvD$ as the \emph{victim} and \emph{callable} contracts, respectively, 
  and we abbreviate $\lmev{\CmvU}{\sysS}{\CmvC}$ as
  $\lmev{}{\sysS}{\CmvC}$.
\end{definition}

\begin{example}
Consider the $\contract{Airdrop}$ contract from~\Cref{lst:airdrop}, and let:
\[
\sysS \; = \sysSi \mid \walpmv{\contract{Airdrop}}{\waltok{n}{\tokT}, \code{t}=\tokT}
\]
where $\sysSi$ is arbitrary.
Let $\CmvC = \setenum{\contract{Airdrop}}$, 
and let $\CmvD$ be any set of contract accounts including $\CmvC$.
Any user can fire a transaction in $\sysS$ to withdraw the entire airdrop balance. 
Therefore, $\lmev{\CmvD}{\sysS}{\CmvC} = n \cdot \price{\tokT}$.
\hfill\qedex
\end{example}

\Cref{lem:mev:mev-vs-babel} relates our notion of MEV with that in~\eqref{eq:mev}.
In particular, when the set of victim contracts is the universe $\CmvU$, our MEV over-approximates~\eqref{eq:mev}.
Note that the inequality $\leq$ does not hold in general, since some of the tokens drained from contracts in $\CmvU$ could be transferred to some account not in $\Adv$, and therefore they would contribute to the loss of $\CmvU$ but not to the gain of $\Adv$.

\begin{proposition}
    \label{lem:mev:mev-vs-babel}
    $\lmev{\CmvU}{\sysS}{\CmvU} \geq \mev{}{\sysS}{}$.
\end{proposition}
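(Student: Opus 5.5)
Since every transaction's callee lies in the universe of contract accounts, $\mall{\CmvU}{\Adv} = \mall{}{\Adv}$. Both sides of the inequality therefore maximise over the same set of sequences $\TxTS \in \mall{}{\Adv}^*$. It thus suffices to show pointwise that, for every such $\TxTS$ with $\sysS \xrightarrow{\TxTS} \sysSi$,
\[
\gain{\Adv}{\sysS}{\TxTS} \;\leq\; -\gain{\CmvU}{\sysS}{\TxTS}.
\]
Taking the maximum of both sides then yields $\lmev{\CmvU}{\sysS}{\CmvU} \geq \mev{}{\sysS}{}$, since the maximum on the right of~\eqref{eq:lmev} dominates each gain on the left.

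To prove the pointwise inequality, I would use a conservation-of-tokens property of the transition relation. Every effect moves tokens between accounts of the state. User-to-user transfers are forbidden, users only send tokens to contracts (receive effects), and contracts send to users or to contracts. Hence the total wallet of all accounts is preserved: $\wal{\AddrU}{\sysS} = \wal{\AddrU}{\sysSi}$. This gives $\gain{\AddrU}{\sysS}{\TxTS} = 0$. For newly created user accounts, their wallet is counted as $0$ in $\sysS$. Reverted transactions leave the state unchanged.

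Since $\AddrU$ is partitioned into $\PmvU$ and $\CmvU$, and wealth is additive over disjoint sets of accounts (by \Cref{def:wealth}), we get
\[
\gain{\PmvU}{\sysS}{\TxTS} = -\gain{\CmvU}{\sysS}{\TxTS}.
\]
Next, user accounts can only lose tokens by sending them to contracts, and only gain tokens by receiving them from contracts; they never transfer tokens directly among themselves. So I would split $\gain{\PmvU}{}{} = \gain{\Adv}{}{} + \gain{\PmvU \setminus \Adv}{}{}$ and argue that $\gain{\PmvU \setminus \Adv}{}{} \geq 0$. The intended reason is that the transactions in $\mall{}{\Adv}$ are signed by adversaries, so a non-adversary user's wallet can never be debited, because receiving from a user requires that user to be the sender of the transaction. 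Non-adversarial users can therefore only receive tokens, which gives $\gain{\Adv}{}{} \leq \gain{\PmvU}{}{} = -\gain{\CmvU}{}{}$.

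The main obstacle is justifying these two model-level facts, namely token conservation and the claim that only the transaction sender among users can be debited. They are not stated as explicit lemmas earlier; they follow from the informal semantics of effects (receive only from the immediate caller, who is a user only for the origin call). I would state them as immediate consequences of the contract model, noting that prices are positive and fixed, so the wealth comparisons follow linearly from the wallet comparisons.
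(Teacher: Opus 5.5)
Your proposal is correct and follows essentially the same route as the paper's proof. Both use global token conservation $\gain{\AddrU}{\sysS}{\TxTS}=0$, split the gain over $\CmvU$, $\PmvU\setminus\Adv$ and $\Adv$, and use $\gain{\PmvU\setminus\Adv}{\sysS}{\TxTS}\geq 0$ to get $\gain{\Adv}{\sysS}{\TxTS}\leq -\gain{\CmvU}{\sysS}{\TxTS}$ pointwise; taking the maximum with $\mall{\CmvU}{\Adv}=\mall{}{\Adv}$ then concludes, and your justification of the two model-level facts is a bit more explicit than the paper's.
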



\subsection{Basic properties of MEV}
\label{sec:mev:basic}

\Cref{lem:mev:basic} establishes some basic properties of MEV:
\begin{itemize}

\item \Cref{lem:mev:basic:zero} studies some border cases when the MEV is zero.

\item \Cref{lem:mev:basic:leq-wealth} states that the MEV extractable from $\CmvC$ is non-negative
and bounded by the wealth of $\CmvC$.

\item \Cref{lem:mev:basic:garbage} states that the MEV of $\sysS$ is preserved when restricting the set of callable/victim contracts to the set $\cmvOfcst{\sysS}$ of contracts occurring in $\sysS$. 

\item \Cref{lem:mev:basic:L-leq-H,lem:mev:basic:monotonicity}
state that widening the callable contracts or the contract state may potentially increase MEV.
\Cref{ex:lmev:not-monotonic-on-observed-contracts} shows, instead, that widening the set of victim contracts does not necessarily increase the MEV: this is because maximising the loss of a contract may increase the gain of another one.

\end{itemize}

\begin{lemma}
  \label{lem:mev:basic}
  For all $\sysS$, $\CmvC,\CmvD,\CmvDi$ and $\cstD$:
  \begin{enumerate}

  \item \label{lem:mev:basic:zero}
    $\lmev{\CmvD}{\sysS}{\emptyset} = \lmev{\emptyset}{\sysS}{\CmvC} = 0$

  \item \label{lem:mev:basic:leq-wealth}
    $0 \leq \lmev{\CmvD}{\sysS}{\CmvC} \leq \wealth{\CmvC}{\sysS}$

  \item \label{lem:mev:basic:garbage}
    $\lmev{\CmvD}{\sysS}{\CmvC} = \lmev{\CmvD}{\sysS}{\CmvC \cap \cmvOfcst{\sysS}} = \lmev{\CmvD \cap \cmvOfcst{\sysS}}{\sysS}{\CmvC}$  

  \item \label{lem:mev:basic:L-leq-H}
    if $\CmvD \subseteq \CmvDi$, then
    $\lmev{\CmvD}{\sysS}{\CmvC} \leq \lmev{\CmvDi}{\sysS}{\CmvC}$
  
  \item \label{lem:mev:basic:monotonicity}
    $\lmev{\CmvD}{\sysS}{\CmvC} \leq \lmev{\CmvD}{\sysS \mid \cstD}{\CmvC}$

  \end{enumerate}
  Additionally, \cref{lem:mev:basic:zero,lem:mev:basic:L-leq-H,lem:mev:basic:garbage,lem:mev:basic:leq-wealth} also hold for non well-formed states $\sysS$.
\end{lemma}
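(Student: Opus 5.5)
We prove the five items directly from~\Cref{def:lmev}, using two facts throughout. First, the empty sequence $\emptyseq$ belongs to $\mall{\CmvD}{\Adv}^*$ for every $\CmvD$, and it has gain $\gain{\CmvC}{\sysS}{\emptyseq}=0$. Second, the gain is additive over accounts: $\wal{\CmvC}{\cdot}$ is a sum of wallets, and contracts outside $\dom{\sysS}$ (or never created) have empty wallets. We also use that transitions are deterministic, so each $\TxTS$ yields a unique $\sysSi$. None of these facts needs well-formedness, which is why \cref{lem:mev:basic:zero,lem:mev:basic:L-leq-H,lem:mev:basic:garbage,lem:mev:basic:leq-wealth} also hold for non well-formed states.

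\emph{Items~\ref{lem:mev:basic:zero} and~\ref{lem:mev:basic:leq-wealth}.} If $\CmvC=\emptyset$, the wealth is identically $0$, so every gain is $0$. If $\CmvD=\emptyset$, then $\mall{\emptyset}{\Adv}=\emptyset$, so the only sequence is $\emptyseq$, whose gain is $0$. For the lower bound in item~\ref{lem:mev:basic:leq-wealth}, the sequence $\emptyseq$ already gives value $0$. For the upper bound, $-\gain{\CmvC}{\sysS}{\TxTS} = \wealth{\CmvC}{\sysS}-\wealth{\CmvC}{\sysSi} \le \wealth{\CmvC}{\sysS}$, since wallets are non-negative and prices are positive. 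As a side remark, the set of achievable values is bounded, and the maximum is attained because wealths take values in the discrete set $\{\sum_{\tokT} n_{\tokT}\price{\tokT}\}$ below a fixed bound. If needed, I would state this reading of $\max$ as a supremum that is attained.

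\emph{Item~\ref{lem:mev:basic:garbage}.} This is the delicate item. For the victim part, I would argue that a contract not in $\cmvOfcst{\sysS}$ stays absent from every reachable state: transactions cannot deploy contracts in this model, and sends to users are the only way to create new accounts. Hence such a contract contributes $0$ wealth before and after, and the gains coincide for every $\TxTS$. For the callable part, any transaction $\txT$ whose callee lies outside $\cmvOfcst{\sysS}$ calls a non-existent method, so it is invalid and reverted. The same holds in every reachable state, since the set of contracts is invariant. Removing such transactions from $\TxTS$ therefore gives a sequence in $\mall{\CmvD\cap\cmvOfcst{\sysS}}{\Adv}^*$ with the same final state. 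The converse inclusion is immediate. I expect this invariance of $\cmvOfcst{\cdot}$ under transitions to be the main point to make explicit; it follows from the listed effects of methods.

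\emph{Items~\ref{lem:mev:basic:L-leq-H} and~\ref{lem:mev:basic:monotonicity}.} For item~\ref{lem:mev:basic:L-leq-H}, $\CmvD\subseteq\CmvDi$ gives $\mall{\CmvD}{\Adv}^*\subseteq\mall{\CmvDi}{\Adv}^*$, so the maximum is taken over a larger set. For item~\ref{lem:mev:basic:monotonicity}, take an optimal $\TxTS$ for $\sysS$ and run the same $\TxTS$ in $\sysS\mid\cstD$. By well-formedness, the contracts of $\sysS$ depend only on contracts in $\sysS$; transactions are executed as calls starting from callees that, by the first part of item~\ref{lem:mev:basic:garbage}, may be assumed to be in $\sysS$. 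Hence each transition $\sysS\xrightarrow{\txT}\sysSi$ lifts to $\sysS\mid\cstD\xrightarrow{\txT}\sysSi\mid\cstD$, with $\cstD$ untouched. This is a framing argument by induction on $\TxTS$. It yields the same loss for $\CmvC$, since $\CmvC\cap\cmvOfcst{\cstD}$ contributes no change. The main obstacle is justifying this framing step: the validity and effects of a transaction must depend only on the accounts it touches. Approvals to contracts in $\cstD$ are only approvals, so the tokens remain in place, and I would state this frame property as part of the semantics.
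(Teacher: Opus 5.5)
Your proposal is correct and follows essentially the paper's own argument. Items~1, 2 and~4 come from the empty sequence and the inclusion of transaction sets, item~3 from contracts absent from $\sysS$ having zero wealth and being uncallable, and item~5 from lifting an optimal sequence (w.l.o.g.\ with callees in $\sysS$) to $\sysS \mid \cstD$ with $\cstD$ left untouched. Your extra justifications are sound refinements rather than a different route: the invariance of the set of contracts under transitions, and the explicit appeal to well-formedness of $\sysS$ in the framing step. One small correction: the reduction you invoke in item~5 is the second (callable-contracts) equality of item~3, not the first.
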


\begin{example}
\label{ex:lmev:not-monotonic-on-observed-contracts}
We show that
$\CmvC \subseteq \CmvCi$ does not imply
$\lmev{\CmvD}{\sysS}{\CmvC} \leq \lmev{\CmvD}{\sysS}{\CmvCi}$.
Consider the following contracts, and assume that tokens have unit prices:
\begin{lstlisting}[language=solidity
  % ,morekeywords={f},classoffset=4,morekeywords={a,A,Oracle},keywordstyle=\pmvColor,classoffset=5,morekeywords={t,T0,T1,T2,ETH},keywordstyle=\tokColor,classoffset=6,morekeywords={C0,C1,C2},keywordstyle=\cmvColor
]
contract C0 { function f() { T0.send(msg.sender,5); } }
contract C1 { function f() { T0.receive(5); T1.send(msg.sender,1); } }
contract C2 { function f() { T1.receive(1); T2.send(msg.sender,100); } }
\end{lstlisting}

\noindent
Let $\Adv = \setenum{\pmvM}$, and let
$\TxTS =
\pmvM:\contract{C0}.\txcode{f}() \ \pmvM:\contract{C1}.\txcode{f}() \ \pmvM:\contract{C2}.\txcode{f}()$.
We have that:
\begin{align*}
\sysS = & \;
\walu{\pmvM}{0}{\tokT[2]} 
\mid
\walpmv{\contract{C0}}{\waltok{5}{\tokT[0]}} \mid
\walpmv{\contract{C1}}{\waltok{1}{\tokT[1]}} \mid
\walpmv{\contract{C2}}{\waltok{100}{\tokT[2]}}
\; \xrightarrow{\TxTS} \;
\walu{\pmvM}{100}{\tokT[2]} \mid
\walpmv{\contract{C0}}{\waltok{0}{\tokT[0]}} \mid
\walpmv{\contract{C1}}{\waltok{5}{\tokT[0]}} \mid
\walpmv{\contract{C2}}{\waltok{1}{\tokT[1]}}
\end{align*}
Let $\CmvC = \setenum{\contract{C2}} \subseteq \setenum{\contract{C1},\contract{C2}} = \CmvCi$.
Note that $\TxTS$ maximizes the loss of both $\CmvC$ and $\CmvCi$.
Indeed, $\contract{C2}$ is completely drained, except for the $1:\tokT[1]$ that is required to call its $\txcode{f}$.
For $\CmvCi$, the only alternative action would be to call $\contract{C1}.\txcode{f}$ without calling $\contract{C2}.\txcode{f}$, but this would produce a gain instead of a loss. 
Therefore:
\begin{align*}
    & \lmev{}{\sysS}{\setenum{\contract{C2}}}
    = -\gain{\setenum{\contract{C2}}}{\sysS}{\TxTS}
    = 100-1
    = 99
    \; \not\leq \;
    \lmev{}{\sysS}{\setenum{\contract{C1},\contract{C2}}}
    = -\gain{\setenum{\contract{C1},\contract{C2}}}{\sysS}{\TxTS}
    = 101-6 = 95
    \tag*{\qedex}
\end{align*}
\end{example}

\begin{example}
\label{cex:mev:basic:monotonicity}
\Cref{lem:mev:basic}\eqref{lem:mev:basic:monotonicity} ceases to hold if $\sysS \mid \cstD$ is not well-formed and contracts are allowed to test the existence of other contracts.
In such a setting, adding contracts to the system may strictly reduce  MEV, violating monotonicity.
\Eg, consider a contract $\contract{C}$ exposing a method $\txcode{f}$ with the following behaviour: $\txcode{f}$ performs a call to another contract $\contract{D}$; if the call fails (\eg because $\contract{D}$ does not exist), then $\txcode{f}$ transfers $\contract{C}$'s $\tokT$ balance to the caller; otherwise, it reverts. 
Let $\sysS$ be a state containing $\walpmv{\contract{C}}{\waltok{1}{\tokT}}$ but not $\contract{D}$, and let $\cstD$ be a state containing $\contract{D}$. 
If $\CmvC = \CmvD = \setenum{\contract{C}}$, then 
$\lmev{\CmvD}{\sysS}{\CmvC} = 1 \cdot \price{\tokT} > 0$, while
$\lmev{\CmvD}{\sysS \mid \cstD}{\CmvC} = 0$. 
\hfill\qedex
\end{example}

\Cref{lem:mev:state-narrowing} allows us to narrow the blockchain state while preserving MEV. 
Specifically, if $\CmvD$ is the set of callable contracts, we can remove from the state all the contracts that are neither in $\CmvD$ nor in its dependencies.
Recall that the states mentioned in our statements are implicitly assumed to be well-formed: in particular, in~\Cref{lem:mev:state-narrowing} the well-formedness of $\sysS \mid \cstD$ implies that the contracts in $\sysS$ cannot call those in $\cstD$.

\begin{lemma}[State narrowing]
\label{lem:mev:state-narrowing}
$\lmev{\CmvD}{\sysS \mid \cstD}{\CmvC} = \lmev{\CmvD}{\sysS}{\CmvC}$
if $\CmvD \subseteq \cmvOfcst{\sysS}$.
\end{lemma}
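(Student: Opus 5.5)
We prove the two inequalities separately. The direction $\lmev{\CmvD}{\sysS}{\CmvC} \leq \lmev{\CmvD}{\sysS \mid \cstD}{\CmvC}$ is exactly the monotonicity property of~\Cref{lem:mev:basic}\eqref{lem:mev:basic:monotonicity}, so nothing new is needed there. The substance is the converse: every attack in $\sysS \mid \cstD$ restricted to callees in $\CmvD$ can be replayed in $\sysS$ with the same loss for $\CmvC$.

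For the converse, we first reduce the victim set. By~\Cref{lem:mev:basic}\eqref{lem:mev:basic:garbage} we may assume $\CmvC \subseteq \cmvOfcst{\sysS \mid \cstD}$. The contracts of $\CmvC$ lying in $\cmvOfcst{\cstD}$ should contribute zero loss, and the replay argument below shows this. Now fix $\TxTS \in \mall{\CmvD}{\Adv}^*$ with $\sysS \mid \cstD \xrightarrow{\TxTS} \sysSi$. We prove by induction on the length of $\TxTS$ the following invariant: if $\sysS \xrightarrow{\TxTS} \sysS[1]$, then $\sysSi = \sysS[1] \mid \cstD$. In words, the $\cstD$ component is untouched, and the $\sysS$ component evolves identically in both runs.

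The inductive step concerns a single transaction $\txT$ with $\callee{\txT} \in \CmvD \subseteq \cmvOfcst{\sysS}$. By well-formedness of $\sysS \mid \cstD$, the closure $\deps{\{\callee{\txT}\}}$ is contained in $\cmvOfcst{\sysS}$, since contracts of $\sysS$ cannot call contracts in $\cstD$. Hence every internal call triggered by $\txT$ targets a contract of $\sysS$. We then argue on the effects of each call, as described in the contract model:
\begin{itemize}
\item storage reads and writes touch only the callee;
\item token sends to a contract are approvals that are only realised by a \emph{receive} of that contract, so they can only be realised by contracts in $\sysS$;
\item token sends to users go to user accounts, which lie in the $\sysS$ part or are freshly created.
\end{itemize}
Since contracts cannot inspect external wallets, the execution of $\txT$ in $\sysS \mid \cstD$ and in $\sysS$ performs the same sequence of effects and yields the same validity or revert outcome. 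Consequently the $\cstD$ component is unchanged and the $\sysS$ component evolves in the same way in both runs.

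From the invariant, the wallets of $\cmvOfcst{\cstD}$ are unchanged, so $\gain{\CmvC}{\sysS \mid \cstD}{\TxTS} = \gain{\CmvC}{\sysS}{\TxTS}$. Taking maxima gives $\lmev{\CmvD}{\sysS \mid \cstD}{\CmvC} \leq \lmev{\CmvD}{\sysS}{\CmvC}$.

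The main obstacle is the inductive step. The model is abstract, so the claim that execution only depends on the states of the transitively called contracts and of the involved user wallets (a frame or locality property of the transition relation) must be extracted from the informal description of effects. I would state it explicitly as a locality assumption on the semantics and check the delicate cases. The first is reverts: a call cannot fail merely because some contract is present in $\cstD$, since that contract is never called. The second is the treatment of user accounts created on the fly, which must be fresh with respect to $\cstD$; this follows because $\cstD$ contains only contract accounts.
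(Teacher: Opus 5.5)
Your proof is correct and follows essentially the same route as the paper. The $\geq$ direction comes from monotonicity (\Cref{lem:mev:basic}\eqref{lem:mev:basic:monotonicity}). For $\leq$, callees lie in $\CmvD \subseteq \cmvOfcst{\sysS}$ and well-formedness forbids calls from $\sysS$ into $\cstD$, so $\TxTS$ leaves $\cstD$ untouched and runs identically in $\sysS$, and the gains of $\CmvC$ coincide. Your induction and explicit locality assumption make precise what the paper takes for granted; the preliminary reduction via \Cref{lem:mev:basic}\eqref{lem:mev:basic:garbage} is harmless but unnecessary, since your invariant already shows that victims in $\cmvOfcst{\cstD}$ have zero gain in both runs.
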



The following~\namecref{cex:lem:mev:state-narrowing} shows that if the condition $\CmvD \subseteq \cmvOfcst{\sysS}$ in \Cref{lem:mev:state-narrowing} does not hold, then MEV preservation is not guaranteed.

\begin{example}
\label{cex:lem:mev:state-narrowing}
Consider the following state containing instances of the $\contract{PriceOracle}$ and $\contract{Exchange}$ contracts in~\Cref{lst:exchange},
which overall allow anyone to swap $1:\tokT[1]$ for $2:\tokT[2]$:
\begin{align*}
\sysS 
\; = \;
\walu{\pmvM}{1}{\tokT[2]}
\mid \;
& \walpmv{\contract{PriceOracle1}}{\code{fee}=0, \code{prices}=\setenum{\tokT[1]=2, \tokT[2]=1}, \ldots}
\\
\mid \; 
& \walpmv{\contract{Exchange1}}{\waltok{0}{\tokT[1]}, \waltok{10}{\tokT[2]}, \code{p}=\contract{PriceOracle1}, \code{t0}=\tokT[1], \code{t1}=\tokT[2]} 
\end{align*}
Assume that the tokens are equally priced, \eg $\price{\tokT[1]} = \price{\tokT[2]} = 1$.
The adversary $\pmvM$ cannot inflict any loss to $\contract{Exchange1}$:
indeed, $\pmvM$ only has $1:\tokT[2]$, and so the only exchange she could attempt is to sell $\tokT[2]$ to buy $\tokT[1]$.
However, $\contract{Exchange1}$ has not units of $\tokT[1]$, so no interaction with $\pmvM$ is possible.

Assume now to extend $\sysS$ with the following $\cstD$:
\begin{align*}
\cstD 
\; = \;
& \walpmv{\contract{PriceOracle2}}{\code{fee}=0, \code{prices}=\setenum{\tokT[1]=1, \tokT[2]=1}, \ldots}
\\
\mid \; 
& \walpmv{\contract{Exchange2}}{\waltok{1}{\tokT[1]}, \waltok{0}{\tokT[2]}, \code{p}=\contract{PriceOracle2}, \code{t0}=\tokT[2], \code{t1}=\tokT[1]} 
\end{align*}
which allows anyone to swap tokens $\tokT[2]$ with an equal amount of tokens $\tokT[1]$.
In particular, $\pmvM$ can use $\contract{Exchange2}$ to obtain $1:\tokT[1]$ for $1:\tokT[2]$, and then exchange it for $2:\tokT[2]$ through $\contract{Exchange1}$.
Let $\CmvC = \setenum{\contract{Exchange1}}$ and let $\CmvD = \CmvU$.
Since $\CmvD \not\subseteq \cmvOfcst{\sysS}$, \Cref{lem:mev:state-narrowing} does not apply and so MEV preservation is not guaranteed.
Indeed:
\begin{align*}
& \lmev{\CmvD}{\sysS}{\setenum{\contract{Exchange1}}} = 0
\qquad \neq \qquad
\lmev{\CmvD}{\sysS \mid \cstD}{\setenum{\contract{Exchange1}}} = 2 \cdot \price{\tokT[2]} - 1 \cdot \price{\tokT[1]} = 1
\tag*{\qedex}
\end{align*}
\end{example}


\subsection{Front-running resistance}
\label{sec:mev:frontrunning-resistance}

Consider a state $\sysS$ where a user $\pmvA$ is deploying new contracts $\cstD$, with the expectation that the loss inflictable to $\cstD$ is bounded by $\lmev{}{\sysS \mid \cstD}{\cmvOfcst{\cstD}}$.
This expectation is generally false, since an adversary could front-run $\pmvA$'s deployment transaction with the deployment of new contracts $\cstC$, and in this way increase the loss inflictable to $\cstD$.
\Cref{th:mev:frontrunning-resistance} gives sufficient conditions under which MEV is preserved by such front-running attacks.
This result will be the basis to establish front-running resistance for MEV non-interference.

The proof of \Cref{th:mev:frontrunning-resistance} leverages a fundamental MEV preservation result, established by \Cref{th:mev:callable-narrowing:deps}:
\[
\lmev{\CmvD}{\sysS}{\CmvC} \; = \; \lmev{\CmvD \cap \deps{\CmvC}}{\sysS}{\CmvC}
\tag*{(under given conditions)}
\]

The key intuition for proving this result is to take a sequence of transactions $\TxTS$ that maximizes $\lmev{\CmvD}{\sysS}{\CmvC}$, and to translate it into a sequence $\TxYS$ where:
\begin{inlinelist}
\item we preserve the transactions with callee in $\deps{\CmvC}$, and
\item replace those with callee in $\CmvD \setminus \deps{\CmvC}$ with transactions where the adversary directly performs the first internal call in the \emph{boundary} between $\CmvC$ and $\CmvD$, \ie the contracts in $\deps{C}$ that are called from some contract in $\deps{\CmvD} \setminus \deps{\CmvC}$ (see~\Cref{def:boundary}).
\end{inlinelist}
In order for $\TxYS$ to be valid and have the same gain as $\TxTS$, a few conditions are sufficient:
\begin{itemize}

\item the contracts in the boundary are sender-agnostic, \ie they are not aware of the identity of the caller. This guarantees that replacing the internal call with the external call (with sender in $\Adv$) does not affect the semantics of the called functions;

\item the adversary has enough tokens to fund the boundary transactions. 
Technically, we ensure this by requiring that there are no token flows between the dependencies of $\CmvD$ and the boundary.

\end{itemize}

\newcommand{\boundary}[2]{\partial({#1},{#2})}

\begin{definition}[Boundary]
\label{def:boundary}
We define the \emph{boundary} between $\CmvC$ and $\CmvD$ as:
\[
\boundary{\CmvC}{\CmvD} 
\; = \; 
\setcomp{\contract{C} \in \deps{\CmvC}}{\exists \contract{D} \in \deps{\CmvD} \setminus \deps{\CmvC} \; : \; \contract{D} \text{ calls } \contract{C}}
\]
\end{definition}

\begin{lemma}[Callable contracts narrowing]
\label{th:mev:callable-narrowing:deps}
The equality:
\begin{equation*}
\lmev{\CmvD}{\sysS}{\CmvC} = \lmev{\CmvD \cap \deps{\CmvC}}{\sysS}{\CmvC}
\end{equation*}
holds if the following conditions are satisfied:
\begin{enumerate}

\item \label{th:mev:callable-narrowing:deps:1}
the contracts in $\boundary{\CmvC}{\CmvD}$ are sender-agnostic;
    
\item \label{th:mev:callable-narrowing:deps:2}
$\boundary{\CmvC}{\CmvD} \subseteq \CmvD$;

\item \label{th:mev:callable-narrowing:deps:3}
there is no token flow from $\CmvD \setminus \deps{\CmvC}$ to 
$\CmvD \cap \deps{\CmvC}$ in $\sysS$;

\item \label{th:mev:callable-narrowing:deps:4}
there is no direct token flow from $\deps{\CmvD} \setminus \deps{\CmvC}$ to 
$\boundary{\CmvC}{\CmvD}$ in $\sysS$.

\end{enumerate}
\end{lemma}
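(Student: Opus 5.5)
The inequality $\geq$ is immediate from \Cref{lem:mev:basic}\eqref{lem:mev:basic:L-leq-H}, since $\CmvD \cap \deps{\CmvC} \subseteq \CmvD$. The substance is $\leq$. I would prove it by a simulation argument. Fix a sequence $\TxTS = \txT[1]\cdots\txT[n] \in \mall{\CmvD}{\Adv}^*$ that attains the maximum in $\lmev{\CmvD}{\sysS}{\CmvC}$, with $\sysS \xrightarrow{\TxTS} \sysSi$. From it I will build $\TxYS \in \mall{\CmvD \cap \deps{\CmvC}}{\Adv}^*$ with $\gain{\CmvC}{\sysS}{\TxYS} = \gain{\CmvC}{\sysS}{\TxTS}$.

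The translation works transaction by transaction. If $\callee{\txT[i]} \in \deps{\CmvC}$, the transaction is kept; it is in $\CmvD \cap \deps{\CmvC}$ because its callee is in $\CmvD$. Otherwise $\callee{\txT[i]} \in \CmvD \setminus \deps{\CmvC}$. In that case I look at the call tree of $\txT[i]$ and pick out the maximal internal calls $\contract{D}:\contract{C}.\txcode{f}(\code{args})$ with $\contract{D} \in \deps{\CmvD}\setminus\deps{\CmvC}$ and $\contract{C} \in \deps{\CmvC}$. Such a $\contract{C}$ lies in $\boundary{\CmvC}{\CmvD}$ by \Cref{def:boundary}, hence in $\CmvD$ by condition~\eqref{th:mev:callable-narrowing:deps:2}. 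I replace $\txT[i]$ by the sequence of external transactions $\pmvM:\contract{C}.\txcode{f}(\code{args})$, in the order the calls occur, with $\pmvM \in \Adv$ the origin of $\txT[i]$.

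The invariant I would prove by induction on $i$ relates the states $\sysS[i]$ reached by $\TxTS$ and $\sysS[i]'$ reached by $\TxYS$. They coincide on the accounts in $\deps{\CmvC}$, both in storage and in wallets. By well-formedness, contracts in $\deps{\CmvC}$ only call contracts in $\deps{\CmvC}$, so their behaviour depends only on their own part of the state, on the arguments, and on the caller and origin. Agreement on $\deps{\CmvC}$ therefore gives identical effects for kept transactions. For the replaced ones, condition~\eqref{th:mev:callable-narrowing:deps:1} (sender-agnosticism) ensures that calling $\contract{C}.\txcode{f}$ with sender $\pmvM$ instead of $\contract{D}$ produces the same updates on $\deps{\CmvC}$. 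Any tokens sent to the caller go to $\pmvM$ instead of $\contract{D}$, which is outside $\deps{\CmvC}$ and thus irrelevant to $\CmvC$. Since the wallets of $\CmvC \subseteq \deps{\CmvC}$ agree at the end, the gains of $\CmvC$ coincide. This uses $\CmvC \subseteq \deps{\CmvC}$ by reflexivity of $\sqsubseteq$.

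The main obstacle is \emph{validity} of the replacement transactions: $\pmvM$ must be able to fund the $\txcode{receive}$ operations that, in the original run, were funded by $\contract{D}$. Condition~\eqref{th:mev:callable-narrowing:deps:4} says no contract in $\deps{\CmvD}\setminus\deps{\CmvC}$ ever directly transfers tokens to the boundary. So the boundary calls made by such $\contract{D}$ must be ones that require no input from the caller; otherwise a direct flow would exist in a valid sequence from $\sysS$. Condition~\eqref{th:mev:callable-narrowing:deps:3} is needed for a related point: the tokens $\pmvM$ feeds into kept transactions of $\CmvD \cap \deps{\CmvC}$ could in the original run have originated from $\CmvD\setminus\deps{\CmvC}$. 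The absence of such indirect flows means $\pmvM$'s relevant balances in the simulated run are at least those in the original run. Hence I would strengthen the invariant to require $\wal{\pmvM}{\sysS[i]'}(\tokT) \geq \wal{\pmvM}{\sysS[i]}(\tokT)$ for the tokens spent towards $\deps{\CmvC}$. Making this bookkeeping precise, matching token flows into the boundary with the flows excluded by conditions~\eqref{th:mev:callable-narrowing:deps:3}--\eqref{th:mev:callable-narrowing:deps:4}, is where I expect most of the work, and possibly some case analysis on approvals.
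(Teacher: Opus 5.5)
Your simulation is essentially the paper's own proof. You keep the transactions whose callee lies in $\deps{\CmvC}$. You replace every other transaction by the calls it makes from $\deps{\CmvD}\setminus\deps{\CmvC}$ into $\deps{\CmvC}$, whose callees lie in $\partial(\CmvC,\CmvD)\subseteq\CmvD$, re-issued by the origin. Sender-agnosticism handles the effects on $\deps{\CmvC}$, condition~\eqref{th:mev:callable-narrowing:deps:4} shows the re-issued calls need no funding, and condition~\eqref{th:mev:callable-narrowing:deps:3} is meant to fund the kept ones. Your explicit invariant is just a tidier packaging of this.

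The last step does not go through as stated, in your write-up or in the paper's. You assume that whatever a discarded transaction pays to $\pmvM$ is output by a contract of $\CmvD\setminus\deps{\CmvC}$; the paper likewise attributes it to the callee $\contract[h]{D}$ of the discarded transaction. But a discarded transaction to $\contract{E}\in\CmvD\setminus\deps{\CmvC}$ can have a non-callable dependency $\contract{F}\in\deps{\CmvD}\setminus(\CmvD\cup\deps{\CmvC})$ pay $\pmvM$ directly out of its own wallet. Under the paper's definition of token flow, condition~\eqref{th:mev:callable-narrowing:deps:3} only constrains outputs of the contracts in $\CmvD\setminus\deps{\CmvC}$ themselves, so it says nothing about $\contract{F}$.

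Here is a concrete counterexample.
Let $\contract{F}$ hold $\waltok{1}{\tokT}$ and send it to the transaction's origin. Let $\contract{E}$, with an empty wallet, only call $\contract{F}$. Let $\contract{C}$ hold $\waltok{100}{\tokTi}$, receive $\waltok{1}{\tokT}$ from its caller, and pay it $\waltok{100}{\tokTi}$. Take unit prices, $\pmvM$ with an empty wallet, $\CmvC=\setenum{\contract{C}}$ and $\CmvD=\setenum{\contract{E},\contract{C}}$.

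Neither $\contract{E}$ nor $\contract{F}$ calls $\contract{C}$, so $\partial(\CmvC,\CmvD)=\emptyset$. Hence conditions \eqref{th:mev:callable-narrowing:deps:1}, \eqref{th:mev:callable-narrowing:deps:2} and \eqref{th:mev:callable-narrowing:deps:4} hold vacuously, and \eqref{th:mev:callable-narrowing:deps:3} holds because $\contract{E}$ never outputs tokens. Yet calling $\contract{E}$ and then $\contract{C}$ inflicts a loss of $99$ on $\contract{C}$. With only $\contract{C}$ callable, $\pmvM$ cannot fund any call and the MEV is $0$. In particular, your invariant $\wal{\pmvM}{\sysS[i]'}(\tokT)\geq\wal{\pmvM}{\sysS[i]}(\tokT)$ already fails after the first transaction, and cannot be derived from the hypotheses as stated.

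The argument (yours and the paper's) does go through once condition~\eqref{th:mev:callable-narrowing:deps:3} is stated with source $\deps{\CmvD}\setminus\deps{\CmvC}$. That is exactly what the hypothesis of \Cref{th:mev:frontrunning-resistance} provides, so the downstream results are unaffected. Under that reading, consider any token type that $\pmvM$ must pay into $\CmvD\cap\deps{\CmvC}$. All of $\pmvM$'s inflows of that type in the original run come from contracts in $\deps{\CmvC}$, and they are reproduced by the kept or re-issued calls. Meanwhile the simulated run only drops outflows of $\pmvM$. Your balance invariant then follows by induction.
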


\Cref{ex:mev:callable-narrowing:deps} illustrates our proof 
by applying its construction to a toy example.
Technical \cref{cex:mev:callable-narrowing:sender-agnostic,cex:mev:callable-narrowing:boundary,cex:mev:callable-narrowing:no-token-flows} show that, when the above conditions are not met, then MEV preservation is not guaranteed to hold. 

\begin{example}
\label{ex:mev:callable-narrowing:deps}
To illustrate~\Cref{th:mev:callable-narrowing:deps}, consider the contracts: 

\vbox{
\begin{lstlisting}[
,language=solidity
%,basicstyle=\fontseries{m}\normalsize\ttfamily\lst@ifdisplaystyle\footnotesize\fi,
%,morekeywords={f,g,f0,f1,g0,g1},classoffset=4
%,morekeywords={a,A,M},keywordstyle=\pmvColor
%,classoffset=5,morekeywords={t,T,T0,T1,T2,ETH}
%,keywordstyle=\tokColor,classoffset=6
%,morekeywords={C0,C1,C2,C3},keywordstyle=\cmvColor
%,caption={Illustration of~\Cref{th:mev:callable-narrowing:deps}}
%,label={lst:cex:mev:callable-narrowing:deps}
]
contract C0 { function f() { T.receive(1); T.send(M,2); } }
contract C1 { function f() { T.send(C0,1); C0.f(); } }
contract C2 { function f() { require (msg.sender==C3); C1.f(); } }
contract C3 { function f() { C2.f(); C1.f(); } }
\end{lstlisting}}

\noindent
Let 
$\Adv = \setenum{\pmvM}$, 
$\CmvD= \setenum{\contract{C1},\contract{C3}}$,
$\CmvC = \setenum{\contract{C0,C1}}$, 
and let:
\[
\sysS = 
\walpmv{\pmvM}{0:\tokT} \mid
\walpmv{\contract{C0}}{\waltok{2}{\tokT}} \mid 
\walpmv{\contract{C1}}{\waltok{2}{\tokT}} \mid 
\walpmv{\contract{C2}}{\waltok{0}{\tokT}} \mid
\walpmv{\contract{C3}}{\waltok{0}{\tokT}}
\]
Let $\TxTS = \pmvM:\contract{C3}.\txcode{f()} \in \mall{\CmvD}{\pmvM}^*$.
By executing $\TxTS$ in $\sysS$, we have that:
\begin{align*}
    \sysS
    & \xrightarrow{\pmvM:\contract{C3}.\txcode{f()}}
    \walpmv{\pmvM}{4:\tokT} \mid
    \walpmv{\contract{C0}}{\waltok{0}{\tokT}} \mid 
    \walpmv{\contract{C1}}{\waltok{0}{\tokT}} \mid 
    \walpmv{\contract{C2}}{\waltok{0}{\tokT}} \mid
    \walpmv{\contract{C3}}{\waltok{0}{\tokT}}
\end{align*}
Since there are no tokens left in $\CmvC$, $\TxTS$ clearly maximises the loss of $\CmvC$, hence:
\[
\lmev{\CmvD}{\sysS}{\CmvC} = 4 \cdot \price{\tokT}
\]
We first check that the conditions of~\Cref{th:mev:callable-narrowing:deps}
are satisfied.
We have that:
\begin{align*}
\boundary{\CmvC}{\CmvD}
& = 
\setcomp{\contract{C} \in \deps{\CmvC}}{\exists \contract{D} \in \deps{\CmvD} \setminus \deps{\CmvC} \; : \; \contract{D} \text{ calls } \contract{C}}
\\
& =
\setcomp{\contract{C} \in \setenum{\contract{C0},\contract{C1}}}{\exists \contract{D} \in \setenum{\contract{C2},\contract{C3}} \; : \; \contract{D} \text{ calls } \contract{C}}
\\
& =
\setenum{\contract{C1}}
\end{align*}
The conditions of~\Cref{th:mev:callable-narrowing:deps} are then satisfied, since:
\begin{itemize}

\item[\eqref{th:mev:callable-narrowing:deps:1}]
the contract $\contract{C1} \in \boundary{\CmvC}{\CmvD}$ is sender-agnostic.
Note that $\contract{C2}$ is not sender-agnostic: however, this does not violate assumption~\ref{th:mev:callable-narrowing:deps:1}, since only the contracts in  $\boundary{\CmvC}{\CmvD}$ are required to be sender-agnostic.

\item[\eqref{th:mev:callable-narrowing:deps:2}]
$\boundary{\CmvC}{\CmvD} = \setenum{\contract{C1}} \subseteq \CmvD$.

\item[\eqref{th:mev:callable-narrowing:deps:3}]
There is no token flow from
$\CmvD \setminus \deps{\CmvC} = \setenum{\contract{C3}} $ to 
$\CmvD \cap \deps{\CmvC} = \setenum{\contract{C1}}$,
since $\contract{C3}$ has no token output.

\item[\eqref{th:mev:callable-narrowing:deps:4}]
There is no direct token flow from $\deps{\CmvD} \setminus \deps{\CmvC} = \setenum{\contract{C2,C3}}$ to 
$\boundary{\CmvC}{\CmvD} = \setenum{\contract{C1}}$ in $\sysS$, since
there is no token transfer towards $\contract{C1}$.

\end{itemize}

\noindent
We now construct a sequence of transactions 
$\TxYS \in \mall{\CmvD \cap \deps{\CmvC}}{\pmvM}^* = \mall{\setenum{\contract{C1}}}{\pmvM}^*$ that inflicts on $\CmvC$ the same loss as $\TxTS$,
following the proof of \Cref{th:mev:callable-narrowing:deps}.
Let $\vec{x}$ be the sequence of calls triggered by $\TxTS$.
We obtain a sequence $\TxYS$ from $\vec{x}$ as follows:
\begin{enumerate}

\item external calls 
$\pmvM:\contract{C}.\txcode{f}(\code{args})$ in $\vec{x}$ where $\contract{C} \in \deps{\CmvC}$ are preserved in $\TxYS$;

\item internal calls
$\contract{C}:\contract{D}.\txcode{f}(\code{args})$ in $\vec{x}$ where
$\contract{C} \in \boundary{\CmvC}{\CmvD}$ --- \ie, 
$\contract{C} \not\in \deps{\CmvC}$ and $\contract{D} \in \deps{\CmvC}$ ---
are added to $\TxYS$, replacing the sender $\contract{C}$ with the originator $\pmvM$;

\item all the other calls in $\vec{x}$ are discarded.

\end{enumerate}

The sequence $\vec{x}$ of (external/internal) calls induced by $\TxTS$
and the constructed sequence of transactions $\TxYS$ are the following:
\[
\begin{array}{lllllll}
\vec{x} \; =
\quad
& \pmvM:\contract{C3}.\txcode{f()}
\quad
& \contract{C3}:\contract{C2}.\txcode{f()}
\quad
& \contract{C2}:\contract{C1}.\txcode{f()}
\quad
& \contract{C1}:\contract{C0}.\txcode{f()}
\quad
& \contract{C3}:\contract{C1}.\txcode{f()}
\quad
& \contract{C1}:\contract{C0}.\txcode{f()}
\\
\TxYS  =
& 
& 
& \pmvM:\contract{C1}.\txcode{f()}
\quad
&
& \pmvM:\contract{C1}.\txcode{f()}
&
\end{array}
\]
Note that $\TxYS \in \mall{\CmvD \cap \deps{\CmvC}}{\pmvM}^* = \mall{\setenum{\contract{C1}}}{\pmvM}^*$. 
By executing $\TxYS$ in $\sysS$, we obtain:
\begin{align*}
    \sysS
    & \xrightarrow{\pmvM:\contract{C1}.\txcode{f()}}
    \walpmv{\pmvM}{2:\tokT} \mid
    \walpmv{\contract{C0}}{\waltok{1}{\tokT}} \mid 
    \walpmv{\contract{C1}}{\waltok{1}{\tokT}} \mid 
    \walpmv{\contract{C2}}{\waltok{0}{\tokT}} \mid
    \walpmv{\contract{C3}}{\waltok{0}{\tokT}}
    \\
    & \xrightarrow{\pmvM:\contract{C1}.\txcode{f()}}
    \walpmv{\pmvM}{4:\tokT} \mid
    \walpmv{\contract{C0}}{\waltok{0}{\tokT}} \mid 
    \walpmv{\contract{C1}}{\waltok{0}{\tokT}} \mid 
    \walpmv{\contract{C2}}{\waltok{0}{\tokT}} \mid
    \walpmv{\contract{C3}}{\waltok{0}{\tokT}}
\end{align*}
Hence, 
\(
\lmev{\CmvD \cap \deps{\CmvC}}{\sysS}{\CmvC} = 4 \cdot \price{\tokT}
\),
confirming the preservation of inflictable loss stated by~\Cref{th:mev:callable-narrowing:deps}.
\hfill\qedex
\end{example}

\begin{theorem}[MEV and front-running resistance]
\label{th:mev:frontrunning-resistance}
Let $\CmvC = \cmvOfcst{\cstD}$. Then:
\[
\lmev{\CmvD}{\sysS \mid \cstC \mid \cstD}{\CmvC} 
\; = \; 
\lmev{\CmvD \cap \deps{\CmvC}}{\sysS \mid \cstD}{\CmvC}
\]
holds if the following conditions are satisfied:
\begin{enumerate}

\item \label{condition:th:mev:frontrunning-resistance:sender-agnostic}
the contracts in $\deps{\cstD}$ are sender-agnostic

\item \label{condition:th:mev:frontrunning-resistance:boundary-in-callable}
$\deps{\cstD} \subseteq \CmvD$

\item \label{condition:th:mev:frontrunning-resistance:token-flows}
there are no token flows from 
$\deps{\CmvD} \setminus \deps{\cstD}$
to $\deps{\cstD}$
in $\sysS \mid \cstC \mid \cstD$.
\end{enumerate}
\end{theorem}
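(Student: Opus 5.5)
The plan is to obtain the equality in two steps: first narrow the callable contracts using \Cref{th:mev:callable-narrowing:deps} in the full state $\sysS \mid \cstC \mid \cstD$, then drop $\cstC$ from the state using \Cref{lem:mev:state-narrowing}. For the first step I instantiate \Cref{th:mev:callable-narrowing:deps} with victims $\CmvC = \cmvOfcst{\cstD}$, callable set $\CmvD$, and state $\sysS \mid \cstC \mid \cstD$. This yields
\[
\lmev{\CmvD}{\sysS \mid \cstC \mid \cstD}{\CmvC} = \lmev{\CmvD \cap \deps{\CmvC}}{\sysS \mid \cstC \mid \cstD}{\CmvC}.
\]
Note that $\deps{\CmvC} = \deps{\cstD}$ by definition.

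The key step is to discharge the four hypotheses of \Cref{th:mev:callable-narrowing:deps} from the three hypotheses of the theorem. By definition $\boundary{\CmvC}{\CmvD} \subseteq \deps{\CmvC} = \deps{\cstD}$, so hypothesis \eqref{condition:th:mev:frontrunning-resistance:sender-agnostic} gives sender-agnosticism of the boundary, i.e.\ condition~\ref{th:mev:callable-narrowing:deps:1}. Hypothesis \eqref{condition:th:mev:frontrunning-resistance:boundary-in-callable} gives $\boundary{\CmvC}{\CmvD} \subseteq \deps{\cstD} \subseteq \CmvD$, i.e.\ condition~\ref{th:mev:callable-narrowing:deps:2}.

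For the flow conditions, recall that $\sqsubseteq$ is reflexive, so $\CmvD \subseteq \deps{\CmvD}$. Hence $\CmvD \setminus \deps{\CmvC} \subseteq \deps{\CmvD} \setminus \deps{\cstD}$, and also $\CmvD \cap \deps{\CmvC} \subseteq \deps{\cstD}$. Absence of token flows is anti-monotone in both source and target sets, so hypothesis \eqref{condition:th:mev:frontrunning-resistance:token-flows} gives condition~\ref{th:mev:callable-narrowing:deps:3}. For condition~\ref{th:mev:callable-narrowing:deps:4}, I use that a direct token flow is a special case of a token flow: an effect transferring tokens from a contract in the source set to one in the target set is both an output and an input of that token. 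Since $\boundary{\CmvC}{\CmvD} \subseteq \deps{\cstD}$, hypothesis \eqref{condition:th:mev:frontrunning-resistance:token-flows} again excludes such flows. I expect this to be the most delicate step. It hinges on checking that the definitions of direct and indirect token flows line up as claimed, and that enlarging the source and target sets preserves the existence of a flow.

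For the second step, note that $\deps{\CmvC} = \deps{\cstD} \subseteq \cmvOfcst{\sysS \mid \cstD}$, because $\sysS \mid \cstD$ is well-formed and therefore closed under dependencies. Hence $\CmvD \cap \deps{\CmvC} \subseteq \cmvOfcst{\sysS \mid \cstD}$. Up to permuting the components of the well-formed state, I rewrite it as $(\sysS \mid \cstD) \mid \cstC$ and apply \Cref{lem:mev:state-narrowing}, removing $\cstC$. This gives
\[
\lmev{\CmvD \cap \deps{\CmvC}}{\sysS \mid \cstC \mid \cstD}{\CmvC} = \lmev{\CmvD \cap \deps{\CmvC}}{\sysS \mid \cstD}{\CmvC}.
\]
The permutation is legitimate only if no contract in $\cstD$ calls one in $\cstC$. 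This holds because $\sysS \mid \cstD$ is assumed well-formed, so all dependencies of $\cstD$ already lie in $\sysS \mid \cstD$, and $\cstC$ is disjoint from it. Chaining the two equalities concludes the proof.
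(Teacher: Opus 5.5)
Your proposal is correct and follows essentially the same route as the paper. You discharge the four hypotheses of \Cref{th:mev:callable-narrowing:deps} in $\sysS \mid \cstC \mid \cstD$ via the same inclusions, using that the boundary lies in $\deps{\cstD}$ and that a direct flow is a special case of a flow (a point the paper leaves implicit); you then permute to $\sysS \mid \cstD \mid \cstC$ and remove $\cstC$ with \Cref{lem:mev:state-narrowing}, exactly as the paper does.
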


\begin{proof}
The conditions of~\Cref{th:mev:callable-narrowing:deps} are satisfied:
\begin{itemize}

\item[\eqref{th:mev:callable-narrowing:deps:1}] 
$\boundary{\CmvC}{\CmvD}$ are sender-agnostic, since $\boundary{\CmvC}{\CmvD} \subseteq \deps{\CmvC} = \deps{\cstD}$ and, by hypothesis~\ref{condition:th:mev:frontrunning-resistance:sender-agnostic}, the contracts in $\deps{\cstD}$ are sender-agnostic;

\item[\eqref{th:mev:callable-narrowing:deps:2}] 
$\boundary{\CmvC}{\CmvD} \subseteq \deps{\cstD} \subseteq \CmvD$ by hypothesis~\ref{condition:th:mev:frontrunning-resistance:boundary-in-callable};

\item[\eqref{th:mev:callable-narrowing:deps:3}]
No token flows from
$\CmvD \setminus \deps{\CmvC}$ to 
$\CmvD \cap \deps{\CmvC}$ in $\sysS \mid \cstC \mid \cstD$.
Indeed, we have
$\CmvD \setminus \deps{\CmvC} \subseteq \deps{\CmvD} \setminus \deps{\cstD}$
and 
$\CmvD \cap \deps{\CmvC} \subseteq \deps{\cstD}$,
hence we conclude by hypothesis~\ref{condition:th:mev:frontrunning-resistance:token-flows}.

\item[\eqref{th:mev:callable-narrowing:deps:4}]
No direct token flows from $\deps{\CmvD} \setminus \deps{\CmvC}$ to $\boundary{\CmvC}{\CmvD}$.
Indeed, we have $\deps{\CmvD} \setminus \deps{\CmvC} = 
\deps{\CmvD} \setminus \deps{\cstD}$ and $\boundary{\CmvC}{\CmvD} \subseteq \deps{\cstD}$, hence we conclude by hypothesis~\ref{condition:th:mev:frontrunning-resistance:token-flows}.

\end{itemize}
Therefore, by~\Cref{th:mev:callable-narrowing:deps} it follows that:
\[
\lmev{\CmvD}{\sysS \mid \cstC \mid \cstD}{\CmvC}
\; = \;
\lmev
{\CmvD \cap \deps{\CmvC}}
{\sysS \mid \cstC \mid \cstD}{\CmvC}
\]
Now, since $\sysS \mid \cstC$ and $\sysS \mid \cstD$ are well-formed, then $\deps{\cstC}$ is disjoint from $\cmvOfcst{\cstD}$ and $\deps{\cstD}$ is disjoint from $\cmvOfcst{\cstC}$, and so the states $\sysS \mid \cstC \mid \cstD$ and $\sysS \mid \cstD \mid \cstC$ are equivalent. Then:
\[
\lmev
{\CmvD \cap \deps{\CmvC}}
{\sysS \mid \cstC \mid \cstD}{\CmvC}
\; = \;
\lmev
{\CmvD \cap \deps{\CmvC}}
{\sysS \mid \cstD \mid \cstC}{\CmvC}
\]
Since $\sysS \mid \cstD$ is well-formed, then $\deps{\cstD} \subseteq \cmvOfcst{(\sysS \mid \cstD)}$.
Then, since $\CmvD \cap \deps{\CmvC} \subseteq \deps{\cstD} \subseteq \cmvOfcst{(\sysS \mid \cstD)}$, by~\Cref{lem:mev:state-narrowing} we have:
\[
\lmev
{\CmvD \cap \deps{\CmvC}}
{\sysS \mid \cstD \mid \cstC}{\CmvC}
\; = \;
\lmev
{\CmvD \cap \deps{\CmvC}}
{\sysS \mid \cstD}{\CmvC}
\]
Summing up, we obtain the thesis:
\[
\lmev{\CmvD}{\sysS \mid \cstC \mid \cstD}{\CmvC}
\; = \;
\lmev
{\CmvD \cap \deps{\CmvC}}
{\sysS \mid \cstD}{\CmvC}
\tag*{\qedhere}
\]
\end{proof}


\subsection{Observational invariance}
\label{sec:mev:observable-invariance}

We now provide another MEV preservation result, which will be the basis to establish sufficient conditions for MEV non-interference in~\Cref{sec:nonint}.

\Cref{th:mev:callable-narrowing:observational-invariance} states that the MEV inflictable to $\cstD$ in \mbox{$\sysS \mid \cstD$} is preserved by narrowing the callable contracts to $\cstD$ when $\cstD$ is \emph{observationally invariant} \wrt adversarial moves in~$\sysS$. 
Roughly, this means that any adversarial transaction calling a contract in $\sysS$ cannot affect what $\cstD$ observes by invoking the functions in $\sysS$.
The observables include the returned values, the transferred tokens and the aborts of functions that can be called from $\cstD$. 
In other words, even if the adversary interacts with $\sysS$, possibly changing its internal state, the contracts in $\cstD$ will have the same behaviour as if the adversary never interacted with $\sysS$.


\begin{definition}[Observational invariance]
\label{def:observational-invariance}
Let \mbox{$\sysS \mid \cstD$} be a blockchain state (not necessarily well-formed). 
We say that $\cstD$ is \emph{observationally invariant in $\sysS$} 
whenever the following condition holds.
Let $\sysSi$ be any state reachable through moves of $\Adv$ from $\sysS \mid \cstD$.
Then, we require that:
\begin{enumerate}

\item if $\txT \in \mall{}{\Adv}$ is such that $\callee{\txT} \in \cmvOfcst{\sysS}$, then executing $\txT$ in $\sysSi$ does not change the state of the contracts in $\cstD$ (neither their wallet nor storage);

\item if $\txT, \txY \in \mall{}{\Adv}$ are such that
$\callee{\txT} \in \cmvOfcst{\sysS}$,
$\callee{\txY} \in \cmvOfcst{\cstD}$, 
and the sequence $\txT\txY$ is valid in $\sysSi$,
then $\txY$ satisfies the following conditions:
\begin{itemize}
\item $\txY$ is valid in $\sysSi$ provided that the sender has enough tokens;
\item executing $\txY$ in $\sysSi$ triggers the same sequence of internal calls from the contracts in $\cmvOfcst{\cstD}$ to those in $\cmvOfcst{\sysS}$, with the same arguments and sent tokens, as the sequence of calls that $\txT\txY$ triggers in $\sysSi$; 
\item each call triggered by $\txY$ has the same effects (of the kinds: aborts, token transfers, method calls to $\cmvOfcst{\cstD}$, returns) of the corresponding call of $\txT\txY$.
\end{itemize}
\end{enumerate}
\end{definition}

\begin{example}
  \label{ex:observational-invariance:exchange-bet}
  Let $\cstD = \walpmv{\contract{Exchange}}{\cdots}$, where the exchange uses an instance of the $\contract{PriceOracle}$ contract occurring in $\sysS$.
  Let $\Adv$ be a set of accounts not including the oracle's owner.
  Then, $\cstD$ is observationally invariant \wrt $\Adv$'s moves in $\sysS$.
  To show that, let $\sysSi$ be any state reachable from $\sysS \mid \cstD$ through adversarial moves, and let $\txT$ and $\txY$ satisfy the hypotheses of~\Cref{def:observational-invariance}. 
  The transaction $\txT$ must call $\contract{PriceOracle}$; since $\txT$ is valid, it must call the method $\txcode{getRate}$, since $\txcode{setPrice}$ requires that the caller is the owner. 
  Since $\txY$ is valid after $\txT$ is executed and $\txT$ does not modify the $\contract{PriceOracle}$'s storage, then $\txY$ is also valid in $\sysSi$, and it produces exactly the same effects when executed in both states.
  \hfill\qedex
\end{example}

\begin{example}
  \label{ex:observational-invariance:bank}
  Let $\sysS$ consist of the $\contract{Bank}$ contract in~\Cref{lst:bank}. 
  We show that any $\cstD$ is observationally invariant \wrt adversarial moves in $\sysS$.
  Let $\sysSi$ be any state reachable from $\sysS \mid \cstD$ through adversarial moves, and let $\txT$, $\txY$ satisfy the hypotheses of~\Cref{def:observational-invariance}. 
  Then, $\txT$ must call one of the methods in $\contract{Bank}$.
  The effects of both methods $\txcode{deposit}$ and $\txcode{withdraw}$ is to update the \code{credits} entry of the adversary and its token balance.
  By assumption, $\txY$ calls a method of a contract $\contract{D}$ in $\cstD$, and of course $\contract{D} \not\in \Adv$.
  The calls from $\contract{D}$ to $\contract{Bank}$ are not affected by $\txT$, since they depend on the \code{credits} of $\contract{D}$.
  Then, since $\txY$ is valid after firing $\txT$, it is also valid in $\sysSi$ --- provided that the $\txY$'s sender has enough tokens to fund the call. 
  Finally, the effects of $\txY$ are not affected by the execution of $\txT$.
  %
  \hfill\qedex
\end{example}

\begin{figure}[t!]
\begin{lstlisting}[
,language=solidity
,caption={A bank contract.}
,label={lst:bank}
]
contract Bank {
  mapping (address user => uint credit) credits;

  // receive tokens T from the sender and store them in the contract
  // (inreasing the sender's credits of the corresponding amount)
  function deposit(uint amount) public {
    T.receive(amount);
    credits[msg.sender] += amount;
  }

  // transfers tokens T from the contract to the sender
  // (decreasing the sender's credits of the corresponding amount)
  function withdraw(uint amount) public {
    require(amount <= credits[msg.sender]);
    credits[msg.sender] -= amount;
    T.send(msg.sender,amount);
  }
}
\end{lstlisting}
\end{figure}




\begin{theorem}[MEV and observational invariance]
\label{th:mev:callable-narrowing:observational-invariance}
For any state $\sysS \mid \cstD$ (not necessarily well-formed),  
\[
    \lmev{\CmvD}{\sysS\mid \cstD}{\cmvOfcst{\cstD}}
    \; = \; 
    \lmev{\CmvD \cap \cmvOfcst{\cstD}}{\sysS\mid \cstD}{\cmvOfcst{\cstD}} 
\]
holds whenever both the following conditions are satisfied:
\begin{enumerate}

\item there are no token flows from $\sysS$ to $\cstD$ in \mbox{$\sysS \mid \cstD$};

\item $\cstD$ is observationally invariant \wrt adversarial moves in $\sysS$.

\end{enumerate}
\end{theorem}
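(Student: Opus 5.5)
The inequality $\geq$ follows from \Cref{lem:mev:basic}\eqref{lem:mev:basic:L-leq-H}, since $\CmvD \cap \cmvOfcst{\cstD} \subseteq \CmvD$; that item also holds for non well-formed states. The real work is the converse inequality $\leq$. I would take a sequence $\TxTS \in \mall{\CmvD}{\Adv}^*$ that attains the maximum in $\lmev{\CmvD}{\sysS\mid \cstD}{\cmvOfcst{\cstD}}$. From it I would build a sequence $\TxYS \in \mall{\CmvD \cap \cmvOfcst{\cstD}}{\Adv}^*$ with the same gain for $\cmvOfcst{\cstD}$. The construction is simple: delete from $\TxTS$ every transaction whose callee lies in $\cmvOfcst{\sysS}$, and keep the others in order. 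Transactions with callee outside $\cmvOfcst{\sysS\mid\cstD}$ can be dropped beforehand using \Cref{lem:mev:basic}\eqref{lem:mev:basic:garbage}.

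The core is a simulation argument by induction on the length of $\TxTS$. Let $\sysS[i]$ be the state reached after the first $i$ transactions of $\TxTS$, and $\sysSi[i]$ the state reached after the corresponding kept prefix $\TxYS$. The invariant is that $\sysS[i]$ and $\sysSi[i]$ agree on the state of every contract in $\cstD$, wallets and storage alike, and that the adversary's balances in $\sysSi[i]$ are at least those in $\sysS[i]$, token by token. The second half is what guarantees the kept transactions remain fundable. There are two cases for the next transaction.

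\begin{itemize}
\item If the callee is in $\cmvOfcst{\sysS}$, condition~1 of \Cref{def:observational-invariance} says the $\cstD$-part is unchanged, so $\sysSi[i+1]=\sysSi[i]$ preserves the first half of the invariant. For the balances, the transaction cannot bring tokens that end up in $\cstD$, by hypothesis~1. Any adversarial tokens it gains or spends only concern $\sysS$, and these are irrelevant for funding later $\cstD$-calls; I would make this precise in the next step.
\item If the callee is in $\cmvOfcst{\cstD}$, I would apply condition~2 repeatedly, collapsing the deleted $\sysS$-transactions one at a time. This shows that executing it in $\sysSi[i]$ is valid, triggers the same internal calls into $\cmvOfcst{\sysS}$ with the same arguments and tokens, and has the same effects on $\cstD$. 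Hence the $\cstD$-states coincide again.
\end{itemize}

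The main obstacle is the funding side condition, ``provided that the sender has enough tokens''. The adversary may have acquired the tokens used in a $\cstD$-call through an earlier interaction with $\sysS$. My first attempt would use hypothesis~1 directly: if a deleted $\sysS$-transaction supplied tokens that were later input into $\cstD$, that sequence would witness a token flow from $\sysS$ to $\cstD$, contradicting the hypothesis. So the tokens funding the kept transactions must come from the adversary's original wallet or from earlier $\cstD$-calls. Both are preserved in the simulated run, because deleting $\sysS$-transactions never deprives the adversary of tokens that are later fed to $\cstD$.

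A second subtlety is that condition~2 is stated for a single preceding $\txT$. Reducing a block of several deleted transactions to that case requires an inner induction. Each such transaction leaves $\cstD$ untouched, so the intermediate states remain reachable by adversarial moves and the definition can be reapplied.

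Once the invariant holds at the end, the wallets of $\cmvOfcst{\cstD}$ in the final states coincide. Hence $\gain{\cmvOfcst{\cstD}}{\sysS\mid\cstD}{\TxYS}=\gain{\cmvOfcst{\cstD}}{\sysS\mid\cstD}{\TxTS}$, which gives the inequality $\leq$.
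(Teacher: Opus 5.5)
Your construction of $\TxYS$ and your funding argument via hypothesis~1 match the paper's. The problem is the kept-transaction case of your induction, which does not go through. Condition~2 of \Cref{def:observational-invariance} only compares a $\cstD$-transaction $\txY$ executed in a reachable state with $\txY$ executed after \emph{one} adversarial move on $\cmvOfcst{\sysS}$. Your inner induction correctly extends this to a contiguous block of deleted moves immediately preceding $\txY$. But $\sysS[i]$ and $\sysSi[i]$ are not related in this way. The deleted moves are interleaved with kept transactions, and the internal calls those kept transactions make into $\cmvOfcst{\sysS}$ run in different $\sysS$-states in the two runs. Condition~2 constrains these calls only in their aborts, token transfers, calls into $\cstD$ and return values. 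It says nothing about the storage updates they perform inside $\sysS$. So after a kept transaction, the $\sysS$-components of the two runs can differ in a way that no block of deleted moves accounts for, and agreement on $\cstD$ alone is too weak an invariant to carry the induction. The paper's proof makes the same leap: it asserts that $\cstD$ cannot distinguish the state of $\sysS$ reached by $\TxTS$ from the one reached by $\TxYS$. So you have missed nothing relative to it, but the step does not follow from the definition.

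With the definition read literally, the step in fact fails. Let $\sysS$ contain $\contract{S}$ with storage $s=c=0$ and three methods:
$\txcode{set}(v)$ sets $s=v$;
$\txcode{h}()$ requires the sender to be $\contract{D}$, then adds $1$ to $c$ if $s=0$ and $2$ otherwise;
$\txcode{k}()$ returns $c\geq 2$.
Let $\cstD$ contain $\contract{D}$ holding $1:\tokT$, with a flag $\code{used}$ and two methods:
$\txcode{g}()$ requires $\neg\code{used}$, sets it, and calls $\contract{S}.\txcode{h}()$;
$\txcode{f}()$ requires $\contract{S}.\txcode{k}()$ and sends $1:\tokT$ to the sender.
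No contract in $\sysS$ ever outputs tokens. The only adversarial move on $\sysS$ with any effect is $\txcode{set}$, which touches neither $\contract{D}$ nor $c$, so it changes no abort, transfer or return value observed by $\contract{D}$. Both hypotheses therefore hold. Yet $\contract{S}.\txcode{set}(1)\;\contract{D}.\txcode{g}()\;\contract{D}.\txcode{f}()$ drains $\contract{D}$. With callees restricted to $\contract{D}$, one reaches at most $c=1$ and $\txcode{f}$ always reverts, so the two sides of the equation are $\price{\tokT}$ and $0$. Deleting $\txcode{set}(1)$ leaves the observable behaviour of $\txcode{g}$ unchanged, exactly as condition~2 promises, but changes that of the later $\txcode{f}$.

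What is missing is a step-consistency (unwinding) clause: a relation between the $\sysS$-components of the two runs that is preserved both by deleted moves and by kept transactions. With observational invariance strengthened in this way, your simulation works.

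Separately, the balance half of your invariant is false as stated. A deleted call to an $\contract{Airdrop}$ in $\sysS$ makes the $\TxYS$-run poorer. Restrict the invariant to token types that enter $\cstD$ along $\TxTS$. By the per-type definition of token flow, hypothesis~1 forbids $\sysS$ from outputting those types anywhere in that run, so deleted moves can only debit them.
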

\begin{proof}
The inequality $\geq$ is given by \Cref{lem:mev:basic}\eqref{lem:mev:basic:L-leq-H},
so we prove~$\leq$.
Let $\TxTS \in \mall{\CmvD}{\Adv}^*$ be a sequence of transactions that maximizes the loss of $\cmvOfcst{\cstD}$ when executed $\sysS\mid \cstD$.
W.l.o.g., assume that such transactions are valid. 
We must show that there is a sequence $\TxYS \in \mall{\CmvD \cap \cmvOfcst{\cstD}}{\Adv}^*$ that causes a loss of $\cmvOfcst{\cstD}$ greater than or equal to the one caused by $\TxTS$.
We choose $\TxYS$ to be the subsequence of $\TxTS$ comprising only direct calls to contracts in $\cstD$. Clearly, $\TxYS$ is in $\mall{\CmvD \cap \cmvOfcst{\cstD}}{\Adv}^*$.

We now show that $\TxYS$ is valid in $\sysS \mid \cstD$.
First, we note that any transaction of $\TxTS$ calling contracts in $\sysS$ (\ie the ones that have been removed to form $\TxYS$) does not change the behaviour of $\sysS$ as observed from $\cstD$.
This is due to the observational invariance of $\cstD$ (\Cref{def:observational-invariance}): the removed transactions can affect the state of $\sysS$ but this does not alter the effects of the contract calls from $\cstD$ to $\sysS$. Consequently, $\cstD$ can not distinguish between the state of $\sysS$ reached by $\TxTS$ and the one reached by $\TxYS$, hence 
$\cstD$ will exhibit the same behavior in both cases.

For this reason, $\TxTS$ and $\TxYS$ both cause the same state updates in $\cstD$, and in particular they perform the same token transfers to and from it.
To conclude that $\TxYS$ is valid, we note that the adversary must have enough tokens to fund all the calls in $\TxYS$. Indeed, the adversary has enough funds to execute $\TxTS$, and any token that they gain from the discarded transactions cannot be used by calls in $\TxYS$, since there is no token flow from $\sysS$ to $\cstD$.

Since $\TxYS$ is valid and updates the state of $\cstD$ in the same way as $\TxTS$, the loss caused to $\cstD$ must also be the same, so $\lmev{\CmvD}{\sysS \mid \cstD}{\cmvOfcst{\cstD}} \leq \lmev{\CmvD \cap \cmvOfcst{\cstD}}{\sysS\mid \cstD}{\cmvOfcst{\cstD}}$, which concludes the proof. 
\end{proof}

\section{MEV for wealthy adversaries}
\label{sec:rich-mev}

We now study MEV under a stronger adversarial model, in which the adversary is assumed to always possess enough tokens to inflict the maximal possible loss on the victim contracts. 
This analysis relies a basic property of the underlying blockchain model, which we call \emph{wallet monotonicity}: 
any transaction that is valid in $\sysS$ must produce the same effect if we enrich the users' wallets in $\sysS$ with additional tokens.

We formalise wallet monotonicity in \Cref{lem:wallet-monotonicity}, where $\WmvA + \WmvAi$ is the pointwise summation of wallet states 
(\eg, if $\WmvA = \setenum{\pmvA \mapsto \setenum{\tokT \mapsto 1}}$ and $\WmvAi = \setenum{\pmvA \mapsto \setenum{\tokT \mapsto 2, \tokTi \mapsto 5}}$,
then $\WmvA + \WmvAi = \setenum{\pmvA \mapsto \setenum{\tokT \mapsto 3, \tokTi \mapsto 5}}$).

\begin{lemma}[Wallet monotonicity]
  \label{lem:wallet-monotonicity}
  Let $\txT$ be valid in $\WmvA \mid \cstC$.
  For all $\WmvAii$:
  \[
  \WmvA \mid \cstC \xrightarrow{\txT} \WmvAi \mid \cstCi
  \quad\implies\quad
  (\WmvA + \WmvAii) \mid \cstC \xrightarrow{\txT} (\WmvAi + \WmvAii) \mid \cstCi
  \]
\end{lemma}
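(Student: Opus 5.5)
The plan is a structural induction on the execution of $\txT$ in $\WmvA \mid \cstC$. Since transitions are deterministic and a valid transaction unfolds as a finite sequence of (external and internal) calls, each performing a finite sequence of effects, we track two executions step by step: the original one from $\WmvA \mid \cstC$, and the one from $(\WmvA + \WmvAii) \mid \cstC$. The invariant is that after the $k$-th effect, if the first run is in configuration $\WmvA[k] \mid \cstC[k]$ (plus the same call stack, pending approvals and local values), then the second run is in $(\WmvA[k] + \WmvAii) \mid \cstC[k]$ with identical call stack, approvals and locals. At $k=0$ this is immediate, and at the end, since $\txT$ is valid and does not revert, we obtain $\WmvAi \mid \cstCi$ versus $(\WmvAi + \WmvAii) \mid \cstCi$, which is the thesis.

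For the inductive step we go through the kinds of effects in the contract model. First, \emph{storage reads and writes, internal calls, returns, and abort tests} depend only on contract storage, arguments, caller and origin, which coincide by the invariant. The key point is that contracts cannot inspect the wallets of user accounts (as remarked after the description of the contract model), so no control-flow decision can depend on $\WmvAii$. Second, a \emph{send from a contract to a user account} depends only on the sending contract's wallet, which is the same in both runs. It adds the same amount to the recipient in both runs, and pointwise addition commutes with adding $\WmvAii$. If the user is new in the first run, it is either new in the second or already present with its $\WmvAii$ component, and either way the sum is correct. Third, a \emph{send to a contract} is just an approval and touches no wallet.

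The delicate case is \emph{receive}. If the payer is a contract, its wallet coincides in both runs. If the payer is the origin user (at $i=0$), the first run succeeds, so the user has at least the required amount in $\WmvA[k]$, and hence also in $\WmvA[k] + \WmvAii$. The debit then yields $(\WmvA[k] - w) + \WmvAii$ on both sides. Because the original transaction is valid, no effect in the first run fails. By the invariant, every check in the second run is either identical or involves a wallet that is larger by $\WmvAii$, so it also succeeds. This rules out a revert in the enriched run.

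The main obstacle I expect is making this induction rigorous given that the semantics is only described informally. I would state explicitly the facts I rely on: determinism, the absence of primitives reading user wallets, and that every failure condition involving user wallets is a ``sufficient balance'' test, which is monotone. I would also note that users in $\WmvAii$ but not in $\dom{\WmvA}$ cannot influence contract behaviour except as recipients, which the pointwise-sum argument already handles.
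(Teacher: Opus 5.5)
Your proposal is correct and takes the same approach as the paper. The paper's proof is one line: contracts cannot read users' wallets, so a valid transaction stays valid and has the same effect when those wallets are enriched. Your step-by-step induction over effects makes this explicit, including your remark that the only wallet-dependent checks are the monotone sufficient-balance tests on the origin.
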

\begin{proof}
Recall that our contract model (\Cref{sec:blockchain}) does not allow contracts to read users' wallets.
Consequently, since $\txT$ is valid in $\WmvA \mid \cstC$, then $\txT$ is also valid in $(\WmvA + \WmvAii) \mid \cstC$ and it produces the same effect when executed in such state.
\end{proof}


We introduce in~\Cref{def:richer-state} a relation $\sysS \leq_{\$} \sysSi$ between blockchain states, which holds when $\sysSi$ can be obtained from $\sysS$ by enriching the users' wallets.

\begin{definition}
  \label{def:richer-state}
  We write $\sysS \leq_{\$} \sysSi$ when $\sysS = \WmvA \mid \cstC$
  and $\sysSi = (\WmvA + \WmvAi) \mid \cstC$,
  for some $\WmvA$, $\WmvAi$ and $\cstC$.
\end{definition}

\Cref{lem:mev:wallet} states that the only user wallets that
need to be taken into account to estimate the  MEV are those of the adversary
(\Cref{lem:mev:wallet:zero-user}).
This follows because:
\begin{inlinelist}
\item $\Adv$ has no way to force other users to spend their
tokens in the attack sequence, and
\item contract functions cannot read users' wallets.
\end{inlinelist}
Furthermore, wealthier adversaries may potentially inflict more loss 
(\Cref{lem:mev:wallet:monotonicity}).

\begin{lemma}[MEV and adversaries' wallets]
  \label{lem:mev:wallet}
  We have that:
  \begin{enumerate}

  \item \label{lem:mev:wallet:zero-user}
    if $\Adv \cap \dom{\WmvA} = \emptyset$, then
    \(
    \lmev{\CmvD}{\sysS \mid \WmvA}{\CmvC} = \lmev{\CmvD}{\sysS}{\CmvC}
    \)
    
  \item \label{lem:mev:wallet:monotonicity}
    if $\sysS \leq_{\$} \sysSi$, then
    $\lmev{\CmvD}{\sysS}{\CmvC} \leq \lmev{\CmvD}{\sysSi}{\CmvC}$
  \end{enumerate}
\end{lemma}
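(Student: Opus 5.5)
Both items will be proved by transporting attack sequences between two states and then invoking \Cref{lem:wallet-monotonicity}, which is the only semantic ingredient needed. Throughout I assume w.l.o.g.\ that the sequences maximising the loss are made of valid transactions: invalid transactions revert and leave the state unchanged, so removing them from $\TxTS$ changes neither membership in $\mall{\CmvD}{\Adv}^*$ nor the gain $\gain{\CmvC}{\sysS}{\TxTS}$.

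For item~\ref{lem:mev:wallet:monotonicity}, write $\sysS = \WmvA \mid \cstC$ and $\sysSi = (\WmvA + \WmvAi) \mid \cstC$. Take a sequence $\TxTS = \txT[1]\cdots\txT[n] \in \mall{\CmvD}{\Adv}^*$ of valid transactions that attains $\lmev{\CmvD}{\sysS}{\CmvC}$. Let its run be $\WmvA[0] \mid \cstC[0] \xrightarrow{\txT[1]} \cdots \xrightarrow{\txT[n]} \WmvA[n] \mid \cstC[n]$. I will show by induction on $i$, applying \Cref{lem:wallet-monotonicity} with the fixed extra wallet $\WmvAi$ at each step, that $(\WmvA[0]+\WmvAi) \mid \cstC[0] \xrightarrow{\txT[1]\cdots\txT[i]} (\WmvA[i]+\WmvAi) \mid \cstC[i]$. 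The contract part is then identical in both runs, so $\wal{\CmvC}{\cdot}$ is identical before and after. This holds because $\CmvC$ consists of contract accounts, and by \Cref{lem:mev:basic}\eqref{lem:mev:basic:garbage} only the contracts in the state matter. Hence $\gain{\CmvC}{\sysSi}{\TxTS} = \gain{\CmvC}{\sysS}{\TxTS}$, and since $\TxTS$ is also admissible in the maximisation defining $\lmev{\CmvD}{\sysSi}{\CmvC}$, we get $\lmev{\CmvD}{\sysS}{\CmvC} \le \lmev{\CmvD}{\sysSi}{\CmvC}$.

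For item~\ref{lem:mev:wallet:zero-user}, I first note that $\sysS \leq_{\$} \sysS \mid \WmvA$, since $\WmvA$ can be viewed as being added to zero wallets. The inequality $\le$ then follows from item~\ref{lem:mev:wallet:monotonicity}, or directly by the same simulation argument.

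For $\ge$, take a valid maximising $\TxTS$ in $\sysS \mid \WmvA$. I claim that the same $\TxTS$ runs in $\sysS$, that the wallets in $\WmvA$ are never touched, and that the contract parts evolve identically in the two runs. The argument rests on three observations about the model:
\begin{enumerate}
\item transactions in $\mall{}{\Adv}$ originate from $\Adv$, which is disjoint from $\dom{\WmvA}$;
\item contracts may only \emph{receive} tokens from their immediate caller, which is either a contract or the origin in $\Adv$, and they cannot read user wallets;
\item direct user-to-user transfers are forbidden.
\end{enumerate}
It follows that no effect of $\TxTS$ debits or reads an account in $\dom{\WmvA}$.

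Credits to users in $\dom{\WmvA}$ can happen, but they occur identically in $\sysS$, where the recipient is created on the fly. So I will prove by induction on the run that if $\sysS \mid \WmvA$ reaches $\WmvAii \mid \cstCi$, then $\sysS$ reaches a state with the same contract part $\cstCi$. The user parts of the two states differ only by the initial $\WmvA$, added pointwise.

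This inductive invariant is the main obstacle. Wallet monotonicity gives only the direction "adding tokens preserves runs", not "removing tokens preserves runs", so the step needs the model-level argument that validity of each transaction never depends on wallets in $\dom{\WmvA}$. I would make this precise by inspecting the effect kinds listed in \Cref{sec:blockchain}: only $\txcode{receive}$ debits an account, and its debited account is the sender. Once the invariant holds, the contract parts coincide, so the gain of $\CmvC$ is equal in both runs, which gives $\ge$.
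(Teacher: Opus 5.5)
Your proposal is correct and follows essentially the paper's own argument. Item 2 transports a maximising valid sequence via \Cref{lem:wallet-monotonicity}; item 1 combines wallet monotonicity (for adding $\WmvA$) with the model-level fact that validity never depends on non-adversarial user wallets, which the paper states as ``valid in $\sysS$ iff valid in $\sysS \mid \WmvA$'' before applying \Cref{lem:wallet-monotonicity}, so your explicit induction and your identification of the ``removing tokens'' direction as the crux are a more detailed rendering of the same proof.
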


\Cref{lem:mev:stability} states that, for any contract state $\cstC$, there exists a wallet state $\WmvA$ yielding the same MEV as any of its enrichments $\WmvAi \geq_{\$} \WmvA$.
Together with~\Cref{lem:mev:wallet}, this ensures that an adversary $\Adv$ with a sufficiently rich wallet $\WmvA$ can always inflict the maximum loss to the contracts in $\cstC$.

\begin{lemma}[MEV stability]
  \label{lem:mev:stability}
  For all $\CmvC$, $\CmvD$, and $\cstC$: 
  \[
  \exists \WmvA . \;\;
  \forall \WmvAi \geq_{\$} \WmvA . \;\;
  \lmev{\CmvD}{\WmvA \mid \cstC}{\CmvC}  
  =
  \lmev{\CmvD}{\WmvAi \mid \cstC}{\CmvC}    
  \]
\end{lemma}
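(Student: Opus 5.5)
We will use a monotone-bounded-sequence argument. Fix $\CmvC$, $\CmvD$, $\cstC$. First, by \Cref{lem:mev:wallet}\eqref{lem:mev:wallet:zero-user} we may restrict attention to wallet states whose domain is contained in $\Adv$; this is harmless, since non-adversarial wallets never affect MEV. By \Cref{lem:mev:basic}\eqref{lem:mev:basic:leq-wealth}, for every $\WmvA$ we have
\[
0 \leq \lmev{\CmvD}{\WmvA \mid \cstC}{\CmvC} \leq \wealth{\CmvC}{\WmvA \mid \cstC} = \wealth{\CmvC \cap \cmvOfcst{\cstC}}{\cstC} =: B ,
\]
and $B$ does not depend on $\WmvA$, because $\CmvC$ consists of contract accounts, whose wallets sit in $\cstC$. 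By \Cref{lem:mev:wallet}\eqref{lem:mev:wallet:monotonicity}, the map $\WmvA \mapsto \lmev{\CmvD}{\WmvA \mid \cstC}{\CmvC}$ is monotone with respect to $\leq_{\$}$.

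Next we show that this map takes values in a discrete set. Every gain is a finite sum of integer token amounts times the fixed prices $\price{\tokT}$. The loss of $\CmvC$ only involves token types occurring in $\cstC$, since a contract wallet can never drop below zero on a type it did not hold initially, so a net loss can only come from those finitely many types. Hence the MEV lies in
\[
V = \setcomp{\textstyle\sum_{\tokT} n_{\tokT} \cdot \price{\tokT}}{n_{\tokT} \in \mathbb{Z},\ 0 \leq n_{\tokT} \leq \wal{\CmvC}{\cstC}(\tokT)},
\]
a finite set. Therefore the supremum $m^* = \sup_{\WmvA} \lmev{\CmvD}{\WmvA \mid \cstC}{\CmvC}$ is attained by some $\WmvA[0]$. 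For any $\WmvAi \geq_{\$} \WmvA[0]$, monotonicity gives $\lmev{\CmvD}{\WmvAi \mid \cstC}{\CmvC} \geq m^*$, while the definition of $m^*$ gives $\leq m^*$. So $\WmvA[0]$ is the required witness.

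The main obstacle is justifying that the range is finite. This needs care with token types: the adversary could bring into play a type absent from $\cstC$, but such a type can only produce gains for $\CmvC$, never losses, so it does not enlarge $V$. We must also ensure that $\geq_{\$}$ is a well-behaved preorder, which amounts to allowing $\WmvAi$ to add new adversarial accounts; if $\WmvAi$ adds non-adversarial accounts, \Cref{lem:mev:wallet}\eqref{lem:mev:wallet:zero-user} removes them. If the discreteness argument turned out to be delicate, an alternative is to take any maximizing sequence of wallets and use the fact that wallets are directed upward under $+$, so a common upper bound of the maximizing wallets attains the supremum. This still requires attainment, which again rests on the finiteness of $V$.
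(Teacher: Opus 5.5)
Your skeleton matches the paper's proof. You bound the MEV by $\wealth{\CmvC}{\cstC}$, which does not depend on the wallets. You then argue that only finitely many values are possible, so the maximum over wallets is attained at some $\WmvA$. Finally you conclude with \Cref{lem:mev:wallet}\eqref{lem:mev:wallet:monotonicity}.

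The gap is exactly in the step you flag as the main obstacle: the MEV values do \emph{not} lie in your set $V$. The loss of $\CmvC$ is $\sum_{\tokT} n_{\tokT}\cdot\price{\tokT}$, where $n_{\tokT}$ is the net decrease of $\CmvC$'s holdings of $\tokT$. The only constraint is $n_{\tokT} \leq \wal{\CmvC}{\cstC}(\tokT)$. On types that $\CmvC$ \emph{receives} during the attack, $n_{\tokT}$ is negative, and such gains are generally unavoidable, since a victim typically releases tokens only against a payment. So a type absent from $\cstC$ that the adversary brings in does not leave $V$ unchanged: its gain enters the net loss with a negative coefficient and produces new values.

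For instance, let $\cstC$ contain a contract holding $1:\tokT$, with a one-shot method that receives $1$ unit of a caller-chosen type $t \neq \tokT$ and sends the $1:\tokT$ to the caller. Its MEV is $\price{\tokT} - \price{\tokTi}$, where $\tokTi$ is the cheapest type the adversary holds, and this value is not in $V$. Worse, take infinitely many types $\tokT[k]$ with real prices $\price{\tokT[k]} = 2^{-k}$ and $\price{\tokT} = 1$. Then adding a cheaper type to any wallet strictly increases the MEV towards $1$ without ever reaching it. So the supremum is not attained, and no argument can succeed without an extra assumption. Your fallback via a common upper bound of a maximizing sequence fails for the same reason: an infinite family of finite wallets has no finite upper bound in general.

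The paper closes this step differently. It simply asserts that the MEV is an \emph{integer} bounded by $\wealth{\CmvC}{\cstC}$, i.e.\ it implicitly takes token amounts and prices to be integers. That leaves only finitely many possible values in $[0,\wealth{\CmvC}{\cstC}]$, so the maximum is reached.

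To repair your argument you need such an integrality assumption, or at least finitely many token types. In the finite-types case, admit negative coefficients. The net loss is non-negative and its positive part is at most $\wealth{\CmvC}{\cstC}$, so every gained type satisfies $|n_{\tokT}|\cdot\price{\tokT} \leq \wealth{\CmvC}{\cstC}$. Hence each $n_{\tokT}$ ranges over a finite set, and your attainment-plus-monotonicity argument then goes through.
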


Note that MEV stability implies the existence of the maximum MEV over all possible wallet states, \ie $\max_{\WmvAii} \; \lmev{\CmvD}{\WmvAii \mid \cstC}{\CmvC}$. 
Such maximum is obtained taking $\WmvAii = \WmvA$ since
for any other $\WmvAii$ we have:
\begin{align*}  
\lmev{\CmvD}{\WmvAii \mid \cstC}{\CmvC}
& \leq
\lmev{\CmvD}{(\WmvA + \WmvAii) \mid \cstC}{\CmvC}
&& \text{by~\Cref{lem:mev:wallet}\eqref{lem:mev:wallet:monotonicity}}
\\
& =
\lmev{\CmvD}{\WmvA \mid \cstC}{\CmvC}
&& \text{by~\Cref{lem:mev:stability}}
\end{align*}

By taking such maximum MEV we can then define a notion of MEV 
that does \emph{not} depend on users' wallets (including $\Adv$'s).
The new notion captures, for example, the attacks where the adversary can obtain the needed tokens through a \emph{flash loan}, \ie a loan of an arbitrary amount of tokens that must be entirely repaid within the same transaction~\cite{Qin21fc,Babel23clockwork}.

\begin{definition}[MEV of wealthy adversaries]
  \label{def:rich-mev}
  For all $\CmvC,\CmvD,\cstC$, we define:
  \begin{equation*}
    \rlmev{\CmvD}{\cstC}{\CmvC}
    = \max_{\WmvA} \;
    \lmev{\CmvD}{\WmvA \mid \cstC}{\CmvC}
  \end{equation*}
\end{definition}


The following~\namecref{lem:rich-mev:basic} is the wealthy-adversaries counterpart of~\Cref{lem:mev:basic}.

\begin{lemma}[Basic properties of $\rlmev{}{}{}$]
  \label{lem:rich-mev:basic}
  For all $\cstC$, $\cstD$, and $\CmvC,\CmvD \subseteq \CmvU$:
  \begin{enumerate}

  \item \label{lem:rich-mev:basic:zero}
    $\rlmev{\CmvD}{\cstC}{\emptyset} = \rlmev{\emptyset}{\cstC}{\CmvC} = 0$

  \item \label{lem:rich-mev:basic:leq-wealth}
    $0 \leq \rlmev{\CmvD}{\cstC}{\CmvC} \leq \wealth{\CmvC}{\cstC}$

  \item \label{lem:rich-mev:basic:garbage}
    $\rlmev{\CmvD}{\cstC}{\CmvC} = \rlmev{\CmvD}{\cstC}{\CmvC \cap \cmvOfcst{\cstC}} = \rlmev{\CmvD\cap \cmvOfcst{\cstC}}{\cstC}{\CmvC}$

  \item \label{lem:rich-mev:basic:L-leq-H}
    if $\CmvD \subseteq \CmvDi$, then
    $\rlmev{\CmvD}{\cstC}{\CmvC} \leq \rlmev{\CmvDi}{\cstC}{\CmvC}$
	
  \item \label{lem:rich-mev:basic:monotonicity}
    $\rlmev{\CmvD}{\cstC}{\CmvC} \leq \rlmev{\CmvD}{\cstC \mid \cstD}{\CmvC}$
        
  \end{enumerate}
  Additionally, \cref{lem:rich-mev:basic:zero,lem:rich-mev:basic:L-leq-H,lem:rich-mev:basic:garbage,lem:rich-mev:basic:leq-wealth} also hold for non well-formed states $\cstC$.
\end{lemma}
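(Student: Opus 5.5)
The plan is to reduce every item to the corresponding item of \Cref{lem:mev:basic}, using \Cref{def:rich-mev}. The main device is that the maximum in \Cref{def:rich-mev} is attained at a single wallet state: by \Cref{lem:mev:stability} together with \Cref{lem:mev:wallet}\eqref{lem:mev:wallet:monotonicity}, there is a witness $\WmvA$ with $\rlmev{\CmvD}{\cstC}{\CmvC} = \lmev{\CmvD}{\WmvA \mid \cstC}{\CmvC}$. Moreover, any enrichment of $\WmvA$ still attains the maximum. This lets us compare two $\rlmev{}{}{}$ quantities on a common wallet state, namely the sum of their two witnesses. We may also assume, by \Cref{lem:mev:wallet}\eqref{lem:mev:wallet:zero-user}, that $\WmvA$ only contains adversary accounts.

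\textbf{Items (1)--(3).} Fix any $\WmvA$. Each equality or bound of \Cref{lem:mev:basic} applied to $\WmvA \mid \cstC$ holds pointwise in $\WmvA$, so it survives taking $\max_{\WmvA}$.
\begin{itemize}
\item For (1), every term of the maximum is $0$.
\item For (2), each term lies between $0$ and $\wealth{\CmvC}{\WmvA \mid \cstC}$. The latter equals $\wealth{\CmvC}{\cstC}$ whenever $\CmvC$ consists of contract accounts, since the users' wallets in $\WmvA$ do not contribute to the wealth of contracts. If $\CmvC$ contains user accounts, we first use (3) to intersect $\CmvC$ with the contracts.
\item For (3), we need $\cmvOfcst{(\WmvA \mid \cstC)} = \cmvOfcst{\cstC}$, which holds because $\WmvA$ contains only user accounts.
\end{itemize}
Since \Cref{lem:mev:basic} items (1)--(4) hold for non well-formed states, the same argument gives the extra claim for non well-formed $\cstC$; note that adding user wallets does not affect well-formedness.

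\textbf{Items (4) and (5).} For (4), \Cref{lem:mev:basic}\eqref{lem:mev:basic:L-leq-H} gives the inequality term by term, and monotonicity then passes to the maxima. For (5), take a witness $\WmvA$ for the left-hand side. Then
\[
\rlmev{\CmvD}{\cstC}{\CmvC} = \lmev{\CmvD}{\WmvA \mid \cstC}{\CmvC} \leq \lmev{\CmvD}{\WmvA \mid \cstC \mid \cstD}{\CmvC} \leq \rlmev{\CmvD}{\cstC \mid \cstD}{\CmvC}.
\]
The first inequality is \Cref{lem:mev:basic}\eqref{lem:mev:basic:monotonicity}, applied to the well-formed state $\WmvA \mid \cstC \mid \cstD$, which is well-formed because $\cstC \mid \cstD$ is. The second inequality holds by definition of the maximum.

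\textbf{Main obstacle.} The only delicate point is a domain-disjointness issue. $\WmvA$ might mention accounts that should not coexist with $\cstD$, and the wallets of $\WmvA$ could clash with accounts used in $\cstD$. This cannot happen, because $\WmvA$ contains only user accounts while $\cstD$ contains only contract accounts, so the composition is always defined. The rest is bookkeeping about the existence of the maximum, which the paper already established after \Cref{lem:mev:stability}.
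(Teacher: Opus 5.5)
Your proposal is correct and follows the paper's own argument. Each item of \Cref{lem:mev:basic} holds for every state $\WmvA \mid \cstC$, and \Cref{lem:mev:stability} with \Cref{def:rich-mev} lets these pointwise (in)equalities pass to the maximum, using $\wealth{\CmvC}{\WmvA \mid \cstC} = \wealth{\CmvC}{\cstC}$ for item (2); your detour for $\CmvC$ containing user accounts is unnecessary, since the statement already assumes $\CmvC \subseteq \CmvU$.
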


The following~\namecref{lem:rich-mev:state-narrowing} is the wealthy-adversaries counterpart of~\Cref{lem:mev:state-narrowing}:
extending the state with new contracts does not affect the MEV
extractable from the old contracts.%

\begin{lemma}[State narrowing]
\label{lem:rich-mev:state-narrowing}
$\rlmev{\CmvD}{\cstS\!\mid\!\cstD}{\CmvC} = \rlmev{\CmvD}{\cstS}{\CmvC}$ if \mbox{$\CmvD \subseteq \cmvOfcst{\cstS}$}.
\end{lemma}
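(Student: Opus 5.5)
The plan is to reduce the statement to the bounded-wealth version, \Cref{lem:mev:state-narrowing}, by unfolding \Cref{def:rich-mev} and working pointwise on wallet states. By definition,
\[
\rlmev{\CmvD}{\cstS \mid \cstD}{\CmvC} = \max_{\WmvA} \lmev{\CmvD}{\WmvA \mid \cstS \mid \cstD}{\CmvC}
\qquad
\rlmev{\CmvD}{\cstS}{\CmvC} = \max_{\WmvA} \lmev{\CmvD}{\WmvA \mid \cstS}{\CmvC}
\]
where both maxima exist by \Cref{lem:mev:stability}, as discussed after it. It therefore suffices to prove, for every wallet state $\WmvA$, the identity $\lmev{\CmvD}{\WmvA \mid \cstS \mid \cstD}{\CmvC} = \lmev{\CmvD}{\WmvA \mid \cstS}{\CmvC}$. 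Two functions that agree pointwise have the same maximum, so the thesis follows.

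For a fixed $\WmvA$, I apply \Cref{lem:mev:state-narrowing} with $\sysS := \WmvA \mid \cstS$ and the same $\cstD$. Its hypothesis is $\CmvD \subseteq \cmvOfcst{(\WmvA \mid \cstS)}$. Since wallet states contain only user accounts, $\cmvOfcst{(\WmvA \mid \cstS)} = \cmvOfcst{\cstS}$, and so the hypothesis coincides with the assumption $\CmvD \subseteq \cmvOfcst{\cstS}$.

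The remaining obligation is the implicit well-formedness assumptions. Well-formedness only constrains the calling relation among contract accounts and the ordering of those accounts. Adding user accounts, which call nothing and are called by nothing, therefore preserves it. Hence well-formedness of $\cstS \mid \cstD$ gives well-formedness of $\WmvA \mid \cstS \mid \cstD$, placing $\WmvA$ first in the enumeration. Similarly, well-formedness of the leftmost component $\cstS$ gives well-formedness of $\WmvA \mid \cstS$. The compositions are defined because $\dom{\WmvA}$ consists of user accounts and is disjoint from the contract accounts.

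The main obstacle, which is mild, is to state this pointwise equality over exactly the same index set of wallets on both sides. The maximum on the left ranges over all $\WmvA$ making $\WmvA \mid \cstS \mid \cstD$ a state, and the maximum on the right over all $\WmvA$ making $\WmvA \mid \cstS$ a state. Both index sets are simply all user-account wallet states, so the pointwise equality transfers directly to the maxima.
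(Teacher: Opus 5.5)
Your proof is correct and rests on the same key step as the paper's: applying the bounded-wealth \Cref{lem:mev:state-narrowing} to the state $\WmvA \mid \cstS$, whose contract accounts are exactly $\cmvOfcst{\cstS}$. The paper instead uses \Cref{lem:mev:stability} twice to pick one wallet $\WmvA = \max\setenum{\WmvA[0],\WmvA[1]}$ that attains both maxima and applies the lemma once; your pointwise comparison over all wallets is marginally simpler, since it needs stability only for the existence of the maxima.
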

\begin{proof}
By~\Cref{def:rich-mev} and~\Cref{lem:mev:stability} (twice), there exist $\WmvA[0]$ and $\WmvA[1]$ such that:
\begin{align*}
  & \forall \WmvAi \geq_{\$} \WmvA[0] . \;
  \lmev{\CmvD}{\WmvA[0] \mid \cstS \mid \cstD}{\CmvC}  
  =
  \lmev{\CmvD}{\WmvAi \mid \cstS \mid \cstD}{\CmvC}
  = \rlmev{\CmvD}{\cstS \mid \cstD}{\CmvC}
  \\
  & \forall \WmvAi \geq_{\$} \WmvA[1] . \;
  \lmev{\CmvD}{\WmvA[1] \mid \cstS}{\CmvC}  
  =
  \lmev{\CmvD}{\WmvAi \mid \cstS}{\CmvC}    
  = \rlmev{\CmvD}{\cstS}{\CmvC}
\end{align*}
Let $\WmvA = \max \setenum{\WmvA[0],\WmvA[1]}$.
By~\Cref{lem:mev:state-narrowing}, since $\CmvD \subseteq \cmvOfcst{\cstS}$:
\[
  \lmev{\CmvD}{\WmvA \mid \cstS \mid \cstD}{\CmvC}    
  \; = \;
  \lmev{\CmvD}{\WmvA \mid \cstS}{\CmvC}    
\]
Since $\WmvA \geq_{\$} \WmvA[0]$ and $\WmvA \geq_{\$} \WmvA[1]$,
by the above equations we have the thesis.
\end{proof}




\Cref{th:rich-mev:callable-narrowing:deps} is the wealthy-adversaries counterpart of~\Cref{th:mev:callable-narrowing:deps}, as it 
gives sufficient conditions under which
we can remove from the contract state $\cstS$ all the non-dependencies of $\CmvC$,
preserving $\rlmev{\CmvD}{\cstS}{\CmvC}$.
Compared to~\Cref{th:mev:callable-narrowing:deps}, here the conditions do not include the absence of token flows, since wealthy adversaries always have enough tokens to perform the optimal attack.

\begin{lemma}[Callable contracts narrowing]
\label{th:rich-mev:callable-narrowing:deps}
The equality:
\begin{equation*}
\rlmev{\CmvD}{\cstS}{\CmvC} \; = \; \rlmev{\CmvD \cap \deps{\CmvC}}{\cstS}{\CmvC}
\end{equation*}
holds if the contracts in $\boundary{\CmvC}{\CmvD}$ are sender-agnostic and
$\boundary{\CmvC}{\CmvD} \subseteq \CmvD$.
\end{lemma}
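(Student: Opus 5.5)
The inequality $\geq$ follows directly from \Cref{lem:rich-mev:basic}\eqref{lem:rich-mev:basic:L-leq-H}, since $\CmvD \cap \deps{\CmvC} \subseteq \CmvD$. For $\leq$ we reduce to the bounded-wealth case. The only obstacle to a direct appeal to \Cref{th:mev:callable-narrowing:deps} is its token-flow hypotheses~\ref{th:mev:callable-narrowing:deps:3} and~\ref{th:mev:callable-narrowing:deps:4}. We therefore plan to rerun its transaction-translation argument, replacing those hypotheses by the adversary's unbounded wealth.

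First, by \Cref{def:rich-mev} and \Cref{lem:mev:stability}, fix a wallet state $\WmvA$ such that $\lmev{\CmvD}{\WmvA \mid \cstS}{\CmvC} = \rlmev{\CmvD}{\cstS}{\CmvC}$. Let $\TxTS \in \mall{\CmvD}{\Adv}^*$ be a valid sequence attaining this loss. Let $\vec{x}$ be the sequence of external and internal calls triggered by $\TxTS$. We build $\TxYS$ exactly as in the construction illustrated in \Cref{ex:mev:callable-narrowing:deps}:
\begin{itemize}
\item keep the external calls whose callee is in $\deps{\CmvC}$;
\item for each internal call $\contract{D}:\contract{C}.\txcode{f}(\code{args})$ with $\contract{D} \in \deps{\CmvD}\setminus\deps{\CmvC}$ and $\contract{C} \in \deps{\CmvC}$, so that $\contract{C} \in \boundary{\CmvC}{\CmvD}$, emit the external call $\pmvM:\contract{C}.\txcode{f}(\code{args})$, where $\pmvM$ is the originator;
\item discard all other calls.
\end{itemize}
Because $\boundary{\CmvC}{\CmvD} \subseteq \CmvD$, we get $\TxYS \in \mall{\CmvD \cap \deps{\CmvC}}{\Adv}^*$.

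Next we choose an enriched wallet $\WmvAi \geq_{\$} \WmvA$ that gives $\Adv$ enough tokens to fund every emitted boundary call upfront. A sufficient choice is the sum of $\WmvA$ and all token amounts transferred into boundary contracts along $\vec{x}$; this is finite because $\vec{x}$ is finite. We then argue by induction along $\vec{x}$ that executing $\TxYS$ in $\WmvAi \mid \cstS$ keeps the states of all contracts in $\deps{\CmvC}$ identical to those reached under $\TxTS$ at corresponding points. Contracts in $\deps{\CmvC}$ never call contracts outside $\deps{\CmvC}$, by the definition of $\sqsubseteq$ and well-formedness. Hence the only interactions crossing into $\deps{\CmvC}$ are either external calls, which are kept, or boundary calls, which are replayed.

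Sender-agnosticism of $\boundary{\CmvC}{\CmvD}$ guarantees that each replayed call behaves identically, except for the tokens returned to the sender, which now go to $\pmvM$ instead of to $\contract{D}$. This is harmless, since such tokens only enter $\Adv$'s wallet and do not affect $\deps{\CmvC}$. Validity of each replayed call relies on \Cref{lem:wallet-monotonicity}: in the original run the funds came from $\contract{D}$, and in the new run $\Adv$ holds at least as much. Since $\CmvC \subseteq \deps{\CmvC}$, the wallets of $\CmvC$ evolve identically, so $-\gain{\CmvC}{\WmvAi\mid\cstS}{\TxYS} = -\gain{\CmvC}{\WmvA\mid\cstS}{\TxTS}$. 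Combining this with \Cref{lem:mev:wallet}\eqref{lem:mev:wallet:monotonicity} and \Cref{def:rich-mev}, we obtain $\rlmev{\CmvD}{\cstS}{\CmvC} \leq \lmev{\CmvD\cap\deps{\CmvC}}{\WmvAi\mid\cstS}{\CmvC} \leq \rlmev{\CmvD\cap\deps{\CmvC}}{\cstS}{\CmvC}$.

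The main obstacle is the simulation step. We must show that discarding the calls outside $\deps{\CmvC}$ cannot change any observable of $\deps{\CmvC}$ beyond what the boundary calls carry. In particular, direct token transfers from outside contracts into boundary contracts matter, and they must be reproduced by $\pmvM$ paying them. We would handle this by treating an approval-plus-receive by a boundary contract as part of the replayed call, funded from $\WmvAi$. This is exactly where unbounded wealth replaces hypotheses~\ref{th:mev:callable-narrowing:deps:3} and~\ref{th:mev:callable-narrowing:deps:4} of \Cref{th:mev:callable-narrowing:deps}.
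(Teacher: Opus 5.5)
Your route is the paper's: reduce via \Cref{lem:mev:stability} to a fixed wallet, replay the call translation from the proof of \Cref{th:mev:callable-narrowing:deps}, and let the adversary's wealth replace the token-flow hypotheses~\ref{th:mev:callable-narrowing:deps:3} and~\ref{th:mev:callable-narrowing:deps:4}. Your two-wallet chain ($\WmvA$ on the left, an enrichment $\WmvAi$ on the right) is a cleaner rendering of the paper's informal ``rich enough $\WmvA$''.

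One step fails as written. Your explicit choice of $\WmvAi$ ($\WmvA$ plus the tokens paid into boundary contracts) funds only the replayed boundary calls. The kept external calls to $\CmvD \cap \deps{\CmvC}$ can also lose their funding, because in the original run they may have been paid with tokens that $\Adv$ obtained from discarded transactions. This is exactly the situation hypothesis~\ref{th:mev:callable-narrowing:deps:3} excluded in the bounded case, and your argument never revisits it.

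For a concrete failure, let $\contract{C}$ hold $\waltok{10}{\tokT}$, with $\contract{C}.\txcode{f}$ receiving $\waltok{1}{\tokT}$ and paying $\waltok{10}{\tokT}$ to the sender. Take $\CmvC = \setenum{\contract{C}}$ and $\CmvD = \setenum{\contract{C},\contract{Airdrop}}$, where the airdrop holds $\waltok{1}{\tokT}$. The boundary is empty, so both hypotheses hold vacuously. A wallet $\WmvA$ giving $\pmvM$ no $\tokT$ attains the maximal loss $9 \cdot \price{\tokT}$ via $\pmvM:\contract{Airdrop}.\txcode{drop}(1)\ \pmvM:\contract{C}.\txcode{f}()$. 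Your recipe then yields $\WmvAi = \WmvA$ and $\TxYS = \pmvM:\contract{C}.\txcode{f}()$, which reverts.

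The repair is immediate. Credit each originator with every amount that it, or a contract outside $\deps{\CmvC}$ acting within its transactions, transfers into a contract of $\deps{\CmvC}$ along $\vec{x}$. A contract can receive tokens only from its immediate caller, so these are the only payments $\Adv$ makes when executing $\TxYS$. Its balance then stays non-negative whatever it receives, and the rest of your simulation goes through.
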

\begin{proof}
By~\Cref{lem:mev:stability}, it suffices to prove that,
for a rich enough $\WmvA$:
\[
\lmev{\CmvD}{\WmvA \mid \cstS}{\CmvC} 
\; = \; 
\lmev{\CmvD \cap \deps{\CmvC}}{\WmvA \mid \cstS}{\CmvC}
\]
This equation has the same form as the thesis of~\Cref{th:mev:callable-narrowing:deps}, whose proof we can replay almost verbatim. 
The only points of such proof which are changed are those where we exploit its ``no tokens flows'' assumptions to justify that the adversaries have enough tokens to perform their attack. Here, we have no similar assumption to exploit, but on the other hand we only have to prove the statement for \emph{rich enough} $\WmvA$, so we can assume that $\WmvA$ has enough tokens for the adversaries.
\end{proof}

The following~\namecref{cex:rich-mev:callable-narrowing:deps} shows that if the  assumptions of~\Cref{th:rich-mev:callable-narrowing:deps} do not hold, then MEV preservation is not guaranteed. 

\begin{figure}[t]
\begin{lstlisting}[language=solidity
  ,classoffset=4,morekeywords={f,g}
  ,classoffset=5,morekeywords={t,T,T0,T1,T2,ETH},keywordstyle=\tokColor
  ,classoffset=6,morekeywords={C0,C1},keywordstyle=\cmvColor
  ,caption={Contracts for~\Cref{cex:rich-mev:callable-narrowing:deps}.}
  ,label={lst:rich-mev:callable-narrowing:deps}]
contract C0 {
  bool b;
  constructor(bool b_) { b=b_; }
  function f() { require (b||msg.sender==C1); T.send(msg.sender,5);} 
}
contract C1 { 
  function g() { C0.f(); T.send(msg.sender,5); } 
}
\end{lstlisting}
\end{figure}

\begin{example}
\label{cex:rich-mev:callable-narrowing:deps}
Consider the contracts in~\Cref{lst:rich-mev:callable-narrowing:deps}.
Let $\Adv = \setenum{\pmvM}$, let
$\CmvC = \setenum{\contract{C0}}$, 
$\CmvD= \setenum{\contract{C1}, \contract{C0}}$, 
and let \mbox{$\cstS = \walpmv{\contract{C0}}{\waltok{5}{\tokT},\code{b}=\code{false}} \mid \walpmv{\contract{C1}}{\waltok{0}{\tokT}}$}. 
We have that $\boundary{\CmvC}{\CmvD} = \setenum{\contract{C0}} \subseteq \CmvD$, 
but $\contract{C0}$ violates sender-agnosticism
since it enables token transfers only to $\contract{C1}$.  
Therefore, \Cref{th:rich-mev:callable-narrowing:deps} does not apply, and so MEV preservation is not guaranteed. 
Indeed, we have that:
\begin{align*}
& \lmev{\CmvD}{\cstS}{\CmvC} = 5 \cdot \price{\tokT} 
\\ 
& \lmev{\CmvD \cap \deps{\CmvC}}{\cstS}{\CmvC}
= \lmev{\setenum{\contract{C0}}}{\cstS}{\CmvC}
= 0
\end{align*}
The first equation is witnessed by the transaction 
\mbox{$\pmvM:\contract{C1}.\txcode{g}() \in \mall{\CmvD}{\Adv}$}, which extracts $5:\tokT$ from $\contract{C0}$.
For the second equation, the MEV is zero since the only possible transaction $\pmvM:\contract{C0}.\txcode{f}()$ reverts because of the \lstinline[language=solidity]{require} condition.

We now show that if the assumption $\boundary{\CmvC}{\CmvD} \subseteq \CmvD$ does not hold, then MEV preservation is not guaranteed either. 
Consider again the contracts in~\Cref{lst:rich-mev:callable-narrowing:deps}, and
let $\Adv = \setenum{\pmvM}$, $\CmvC = \setenum{\contract{C0}}$, $\CmvD= \setenum{\contract{C1}}$, and \mbox{$\cstS = \walpmv{\contract{C0}}{\waltok{5}{\tokT},\code{b}=\code{true}} \mid \walpmv{\contract{C1}}{\waltok{0}{\tokT}}$}.
This instance of $\contract{C0}$ is sender-agnostic, but $\CmvC \not\subseteq \CmvD$.
As before, we have that 
$\lmev{\CmvD}{\cstS}{\CmvC} = 5 \cdot \price{\tokT}$, 
but 
$\lmev{\CmvD \cap \deps{\CmvC}}{\cstS}{\CmvC}
= \lmev{\emptyset}{\cstS}{\CmvC}
= 0$.
\hfill\qedex
\end{example}

\Cref{th:rich-mev:callable-narrowing:observational-invariance} is the wealthy-adversaries counterpart of~\Cref{th:mev:callable-narrowing:observational-invariance}.
Compared to that, the assumption about the absence of indirect token flows is not required.

\begin{theorem}[$\rlmev{}{}{}$ and observational invariance]
\label{th:rich-mev:callable-narrowing:observational-invariance}
For any state $\cstS \mid \cstD$ (not necessarily well-formed),  
\[
    \rlmev{\CmvD}{\cstS \mid \cstD}{\cmvOfcst{\cstD}}
    \; = \; 
    \rlmev{\CmvD \cap \cmvOfcst{\cstD}}{\cstS \mid \cstD}{\cmvOfcst{\cstD}} 
\]
holds iff $\cstD$ is observationally invariant \wrt adversarial moves in $\WmvA \mid \cstS$, for all sufficiently large $\WmvA$.
\end{theorem}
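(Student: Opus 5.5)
The plan is to handle the two directions separately, treating the ``if'' direction as a wealthy-adversary replay of \Cref{th:mev:callable-narrowing:observational-invariance}, in the same way that \Cref{th:rich-mev:callable-narrowing:deps} replays \Cref{th:mev:callable-narrowing:deps}. The inequality $\geq$ follows at once from \Cref{lem:rich-mev:basic}\eqref{lem:rich-mev:basic:L-leq-H}. For $\leq$, I use \Cref{lem:mev:stability} twice: once for the callable set $\CmvD$ and once for $\CmvD \cap \cmvOfcst{\cstD}$. This gives two wallet states; I take their pointwise maximum, then enlarge it further so that observational invariance holds in $\WmvA \mid \cstS$ (possible since it holds for all sufficiently large $\WmvA$). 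For this single $\WmvA$, both $\rlmev{\CmvD}{\cstS\mid\cstD}{\cmvOfcst{\cstD}}$ and $\rlmev{\CmvD\cap\cmvOfcst{\cstD}}{\cstS\mid\cstD}{\cmvOfcst{\cstD}}$ equal the corresponding bounded-wealth MEV in $\WmvA \mid \cstS \mid \cstD$. It therefore suffices to prove
\[
\lmev{\CmvD}{\WmvA \mid \cstS\mid\cstD}{\cmvOfcst{\cstD}} \;\leq\; \lmev{\CmvD\cap\cmvOfcst{\cstD}}{\WmvA \mid \cstS\mid\cstD}{\cmvOfcst{\cstD}}.
\]

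To prove this inequality I replay the proof of \Cref{th:mev:callable-narrowing:observational-invariance}. I take an optimal valid sequence $\TxTS \in \mall{\CmvD}{\Adv}^*$ and let $\TxYS$ be its subsequence of transactions whose callee is in $\cstD$. By induction on the length of $\TxTS$, using clause 1 of \Cref{def:observational-invariance}, each dropped transaction leaves the state of $\cstD$ unchanged. Using clause 2, each retained transaction triggers the same calls into $\cstS$ and has the same effects as in the original run. The only step of the earlier proof that used the ``no token flows from $\sysS$ to $\cstD$'' hypothesis was showing that the adversary can fund $\TxYS$. Here I instead appeal to richness: the funding needed by $\TxYS$ is bounded by the total token inputs that $\TxTS$ makes into $\cstD$. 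I choose $\WmvA$ so that the adversary holds at least that amount, and then \Cref{lem:wallet-monotonicity} guarantees that the extra tokens do not change any effects. One subtlety is that the stability wallet must be fixed before $\TxTS$ is chosen. I resolve this by noting that the MEV value is already attained at the stable $\WmvA$, and then enlarging by the (finite) funding that this particular optimal $\TxTS$ requires; by stability the enlargement does not change either MEV.

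For the ``only if'' direction I argue by contraposition. Suppose that for arbitrarily large $\WmvA$ there is a reachable $\sysSi$ together with $\txT$ (and possibly $\txY$) that violate \Cref{def:observational-invariance}. The goal is to build an attack through $\cstS$ whose loss on $\cmvOfcst{\cstD}$ cannot be reproduced by calls to $\cstD$ alone: either $\txT$ itself changes the state of $\cstD$, or the prefix $\txT$ alters what a subsequent $\txY$ into $\cstD$ observes. This is the step I expect to be the main obstacle. A violation of invariance does not obviously yield a strictly larger loss, because, for example, a contract $\cstD$ holding no tokens has zero MEV either way. I would first try to read the equality as holding for every victim configuration, \ie uniformly in the wallets of $\cstD$, and funding $\cstD$ so that the deviation witnessed by $\txT\txY$ becomes a measurable token difference. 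If that fails, I would weaken the claim to the ``if'' direction and present the converse through a counterexample-style remark.
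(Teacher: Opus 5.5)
Your ``if'' direction follows the paper's proof. You assume invariance for all $\WmvA \geq_{\$} \WmvA[oi]$, move to a maximizing wallet via \Cref{lem:mev:stability}, and replay the proof of \Cref{th:mev:callable-narrowing:observational-invariance}, replacing its no-token-flow step by enlarging the wallet to fund the $\cstD$-only subsequence. The paper uses stability only on the unrestricted side and the $\max$ of \Cref{def:rich-mev} on the restricted side; your two-sided bookkeeping is also fine.

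Your suspicion about the converse is justified: the paper's proof never argues it, and it is false. Take a token-less $\cstD$ with a method that reads, via a getter, a field of $\cstS$ that anyone can overwrite. Both sides are $0$ by \Cref{lem:rich-mev:basic}\eqref{lem:rich-mev:basic:leq-wealth}. Yet clause~2 of \Cref{def:observational-invariance} fails for every $\WmvA$, because the getter returns a different value after the overwrite. The paper makes the same observation for $\nonintrel$ after \Cref{th:nonint:sufficient-conditions}. Adding an unconditional withdrawal method to that $\cstD$ makes both sides equal to its whole wealth, whatever its wallet, so the ``uniform in the victim's wallet'' reading does not rescue the converse either. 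State and prove only the ``if'' direction, which is all \Cref{th:rich-nonint:sufficient-conditions} uses.

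The genuine gap is the step you call an induction: ``using clause~2, each retained transaction triggers the same calls into $\cstS$ and has the same effects as in the original run''. Clause~2 only compares $\txT\txY$ with $\txY$ from one common state. The effects it compares (aborts, token transfers, calls to $\cstD$, returns) exclude writes to $\cstS$'s storage. Once a $\txT$ is dropped, the two runs reach different $\cstS$-states, and nothing relates later calls to $\cstD$ from them.

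Here is a concrete counterexample. Let $\cstS$ consist of $\contract{S}$ with boolean fields $a,b$, both false, and three methods:
\begin{itemize}
\item $\txcode{setA}()$ sets $a$;
\item $\txcode{h}()$ aborts unless its sender is $\contract{D}$, and otherwise sets $b$ when $a$ holds;
\item $\txcode{get}()$ returns $b$.
\end{itemize}
Let $\cstD$ consist of $\contract{D}$ holding $1:\tokT$, where $\txcode{f1}()$ calls $\contract{S}.\txcode{h}()$, and $\txcode{f2}()$ requires $\contract{S}.\txcode{get}()$ and then sends $1:\tokT$ to its sender.

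Adversarial calls to $\contract{S}$ never touch $\contract{D}$, and the only one that changes state is $\txcode{setA}$. From a state where $a$ is false (hence $b$ is false too), $\txcode{f1}$ makes the same call $\contract{S}.\txcode{h}()$ with the same observable effects with or without a preceding $\txcode{setA}$; only the write to $b$ differs. In such a state $\txcode{setA}\,\txcode{f2}$ is invalid, and where $a$ is true, $\txcode{setA}$ is a no-op. Hence $\cstD$ is observationally invariant for every $\WmvA$, and $\contract{S}$ never outputs tokens.

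Yet $\txcode{setA}\,\txcode{f1}\,\txcode{f2}$ drains $\contract{D}$. With only $\contract{D}$ callable, $a$ stays false and $\txcode{f2}$ always aborts, so the two sides are $1 \cdot \price{\tokT}$ and $0$. The same example refutes \Cref{th:mev:callable-narrowing:observational-invariance}. Your induction breaks at $\txcode{f2}$, whose two executions start from states not related by any single adversarial call to $\contract{S}$.

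The paper's proof takes the same leap (``$\cstD$ can not distinguish between the state of $\sysS$ reached by $\TxTS$ and the one reached by $\TxYS$''), so the defect lies in \Cref{def:observational-invariance} rather than in your replay. To close it you need an unwinding-style hypothesis: a relation on reachable states that contains the identity and satisfies the following.
\begin{itemize}
\item If $r$ is related to $s$, then $r$ is also related to any state reached from $s$ by an adversarial call to $\cstS$.
\item Related states agree on $\cstD$, give the same observations to every call to $\cstD$, and are again related afterwards.
\end{itemize}
Under that hypothesis your induction goes through, with the invariant that the current states of the two runs are related.
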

\begin{proof}
    Assume that observational invariance holds for all $\WmvA \geq_{\$} \WmvA[oi]$.
    We show:
	\begin{equation}
		\label{eq:proof:th:richnonint:sufficient-conditions}
		\rlmev{\CmvD \cap \cmvOfcst{\cstD}}{\cstS \mid \cstD }{\cmvOfcst{\cstD}}
        \geq
        \rlmev{\CmvD}{\cstS \mid \cstD }{\cmvOfcst{\cstD}}
	\end{equation}
	since the other inequality is guaranteed by \Cref{lem:rich-mev:basic}\eqref{lem:rich-mev:basic:L-leq-H}.
	To do so, we consider a $\WmvA[max]$ that maximizes $\lmev{\CmvD}{\WmvA[max] \mid \cstS \mid \cstD }{\cmvOfcst{\cstD}}$.
    \Wlog, by MEV stability (\Cref{lem:mev:stability}) we can assume $\WmvA[max] \geq \WmvA[oi]$.
    
    Then, to prove~\eqref{eq:proof:th:richnonint:sufficient-conditions} it suffices to find a wallet $\WmvAi$ satisfying the following:
    \begin{align}
        \rlmev{\CmvD \cap \cmvOfcst{\cstD}}{\cstS \mid \cstD }{\cmvOfcst{\cstD}} 
        & \geq
        \lmev{\CmvD \cap \cmvOfcst{\cstD}}{\WmvAi \mid \cstS \mid \cstD }{\cmvOfcst{\cstD}} 
        \label{eq:proof:th:richnonint:max}
        \\
        & \geq
        \lmev{\CmvD}{\WmvA[max] \mid \cstS \mid \cstD}{\cmvOfcst{\cstD}} 
        \label{eq:proof:th:richnonint:defWi}
        \\
        & =
        \rlmev{\CmvD}{\cstS \mid \cstD}{\cmvOfcst{\cstD}}
        \label{eq:proof:th:richnonint:defMax}
    \end{align}
    where inequality \eqref{eq:proof:th:richnonint:max} follows from the definition of $\rlmev{}{}{}$, which takes the maximum over all possible wallets $\WmvAi$,
    and equality \eqref{eq:proof:th:richnonint:defMax} follows from our choice of $\WmvA[max]$ as the wallet state that gives $\rlmev{}{}{}$.

    For inequality \eqref{eq:proof:th:richnonint:defWi}, we proceed similarly to the proof of~\Cref{th:mev:callable-narrowing:observational-invariance}, where we show that for any MEV-extracting sequence of transactions in $\WmvA[max] \mid \cstS \mid \cstD$ there exists another sequence of transactions that extracts the same value but targeting only contracts in $\cstD$.
    The same steps of our previous proof, with one exception, can be replayed since observational invariance holds by assumption and $\WmvA[max] \geq \WmvA[oi]$.
    The exception is due to those steps that exploited the absence of token flows (an assumption of~\Cref{th:mev:callable-narrowing:observational-invariance}) to justify that the new sequence has enough funds. Here, we do not have that hypothesis.
    Nonetheless, we can still claim that the new sequence is well funded since we can simply take $\WmvAi$ to include all the tokens we need for that purpose. Indeed, wealthy adversaries can always provide funds for their attacks.
\end{proof}

\section{MEV non-interference}
\label{sec:non-interference}
\label{sec:nonint}

We formalise secure contract composability as a relation 
between blockchain states $\sysS$ and contract states $\cstD$:
intuitively, \mbox{$\nonint{\sysS}{\cstD}$} means that
the adversary cannot leverage $\sysS$ to inflict more MEV to $\cstD$
than it would be possible by interacting with $\cstD$ alone.
We will then say that
$\sysS$ is \emph{MEV non-interfering} with~$\cstD$.

\subsection{Definition}

Formally, MEV non-interference $\nonint{\sysS}{\cstD}$
holds when the MEV extractable from the contract accounts in $\cstD$
(\ie, $\cmvOfcst{\cstD}$) using \emph{any} contract in $\sysS \mid \cstD$
is exactly the same MEV that can be extracted using \emph{only}
the contracts in $\cstD$.
We write $\negnonint{\sysS}{\cstD}$ when $\nonint{\sysS}{\cstD}$ does not hold.

\begin{definition}[MEV non-interference]
  \label{def:non-interference}
  A state $\sysS$ is \emph{$\lmev{}{}{}$ non-interfering} with $\cstD$,
  in symbols $\nonint{\sysS}{\cstD}$, when
  \(
  \lmev{}{\sysS \mid \cstD}{\cmvOfcst{\cstD}} = \lmev{\cmvOfcst{\cstD}}{\sysS \mid \cstD}{\cmvOfcst{\cstD}}
  \).%
\end{definition}

Note that the inequality $\geq$ in~\Cref{def:non-interference} always holds by~\Cref{lem:mev:basic}\eqref{lem:mev:basic:L-leq-H}, so checking MEV non-interference only requires to check~$\leq$. 

The following example discriminates MEV non-interference
from Babel \emph{et al.}' $\epsilon$-composability,
showing that a contract with intended MEV but no interactions with the
context enjoys MEV non-interference, but it is not $\epsilon$-composable.


\begin{example}
  \label{ex:nonint:airdrop}
  Consider the airdrop contract in~\Cref{lst:airdrop},
  let $\sysS$ be any blockchain state, and let
  $\cstD = \walpmv{\contract{Airdrop}}{\waltok{n}{\tokT},\code{t}=\tokT}$.
  We have that $\nonint{\sysS}{\cstD}$, since:
  \[
  \lmev{}{\sysS \mid \cstD}{\setenum{\contract{Airdrop}}}
  \; = \;
  n \cdot \price{\tokT}
  \; = \;
  \lmev{\setenum{\contract{Airdrop}}}{\sysS \mid \cstD}{\setenum{\contract{Airdrop}}}
  \]
  Here, MEV non-interference correctly reflects the fact that $\Adv$ cannot leverage
  the contracts in $\sysS$ to damage the airdrop.
  \hfill\qedex
\end{example}

\begin{figure}[t]
\begin{lstlisting}[language=solidity
  ,classoffset=4,morekeywords={T,ETH},keywordstyle=\tokColor
  %,classoffset=5,morekeywords={C,D},keywordstyle=\cmvColor
  ,frame=single,
  ,caption={Contracts for~\Cref{ex:nonint:contract-dependency,cex:nonint-not-imply-richnonint}.}
  ,label={lst:nonint-not-imply-richnonint}
]
contract D {
  function f() { require (C.get()); T.send(msg.sender,1); }
}
contract C {
  bool x; uint fee;
  constructor(uint fee) { x=false; fee=fee_; } 
  function get() { return x; }  
  function set() { T.receive(fee); x=true; }
}
\end{lstlisting}
\end{figure}

We show some compositions that do not enjoy MEV non-interference, for different causes.
In~\Cref{ex:nonint:contract-dependency}, the interference is caused by contract dependencies, while in~\Cref{ex:nonint:token-dependency} it is caused by token flows.

\begin{example}
\label{ex:nonint:contract-dependency}
Consider the contracts in~\Cref{lst:nonint-not-imply-richnonint},
and let: 
\[
\sysS = \walu{\pmvM}{0}{\tokT}
\mid \walpmv{\contract{C}}{\waltok{0}{\tokT},\code{x}=\code{false},\code{fee}=0}
\qquad
\cstD = \walpmv{\contract{D}}{\waltok{100}{\ETH}}
\]
We have that
$\lmev{}{\sysS \mid \cstD}{\setenum{\cmvD}} = 100\cdot\price{\ETH}$,
while
$\lmev{\setenum{\cmvD}}{\sysS \mid \cstD}{\setenum{\cmvD}} = 0$,
since $\pmvM$ cannot call $\contract{C}.\txcode{set}()$.
Therefore, $\negnonint{\sysS}{\cstD}$.
\hfill\qedex
\end{example}

\begin{example}
  \label{ex:nonint:token-dependency}
  Consider the contracts in \Cref{lst:exchange,lst:airdrop}, and let:
  \begin{align*}
    \sysS & \; = \;
    \walu{\pmvM}{0}{\tokT}
    \mid     
    \walpmv{\contract{PriceOracle}}{\code{fee}=0, \code{prices}=\setenum{\tokT[1]=10, \tokT[2]=1}, \ldots}
    \\
    \cstC 
    & \; = \; \walpmv{\contract{Airdrop}}{\waltok{1}{\tokT[1]},\code{t}=\tokT[1]}
    \\
    \cstD 
    & \; = \;
    \walpmv{\contract{Exchange}}{\waltok{0}{\tokT[1]},\waltok{10}{\tokT[2]},\code{t0}=\tokT[1],\code{t1}=\tokT[2]}
  \end{align*}
  Assume that $\price{\tokT[1]} = \price{\tokT[2]}$, and let $\Adv = \setenum{\pmvM}$.
  We have that:   
  \[
  \lmev{}{\sysS \mid \cstC \mid \cstD}{\setenum{\contract{Exchange}}} = 10 \cdot \price{\tokT[2]} - 1 \cdot \price{\tokT[1]} = 9 \cdot \price{\tokT[2]} 
  \]
  since $\pmvM$ can first extract $1:\tokT[1]$
  from the airdrop, and then swap it for $10:\tokT[2]$ through the $\contract{Exchange}$.
  Instead, $\lmev{\setenum{\contract{Exchange}}}{\sysS \mid \cstC \mid \cstD}{\setenum{\contract{Exchange}}} = 0$,
  since $\pmvM$ cannot obtain the tokens $\tokT[1]$ needed to use the $\contract{Exchange}$.
  Then, $\negnonint{\sysS \mid \cstC}{\cstD}$.
  \hfill\qedex
\end{example}

\subsection{Sufficient conditions for MEV non-interference}

\Cref{th:nonint:sufficient-conditions} gives sufficient conditions
for MEV non-interference.
Condition~\ref{th:nonint:sufficient-conditions:zero-mev} states that
\mbox{$\nonint{\sysS}{}{\cstD}$} holds whenever the new contracts $\cstD$
are MEV-free:
a special case is when $\cstD$ has no tokens,
by \Cref{lem:mev:basic}\eqref{lem:mev:basic:leq-wealth}.
Condition~\ref{th:nonint:sufficient-conditions:adv-inert}
allows to establish \mbox{$\nonint{\sysS}{}{\cstD}$} even when $\cstD$ has MEV, provided that the token flows from $\sysS$ to $\cstD$ cannot be exploited by adversaries, and that $\cstD$ is observationally invariant \wrt adversarial moves in $\sysS$.
Recall that this property (see \Cref{def:observational-invariance}) ensures that any transaction sent by adversary to the contracts in $\sysS$ is not observable by the contracts in $\cstD$, and therefore the ability to call the contracts in $\sysS$ does not give the adversary any advantage compared to calling only $\cstD$.  
For instance, condition~\ref{th:nonint:sufficient-conditions:adv-inert} is applicable to establish MEV non-interference in~\Cref{ex:nonint:airdrop}, since 
the $\contract{Airdrop}$ contract is (trivially) observationally invariant \wrt any $\sysS$.

\begin{theorem}[Sufficient conditions for $\nonint{}{}$]
\label{th:nonint:sufficient-conditions}
Let $\sysS \mid \cstD$ be a blockchain state, not necessarily well-formed.
MEV non-interference $\nonint{\sysS}{\cstD}$ holds if any of the following conditions is satisfied: 
\begin{enumerate}

\item \label{th:nonint:sufficient-conditions:zero-mev}
$\lmev{}{\sysS \mid \cstD}{\cmvOfcst{\cstD}} = 0$


\item \label{th:nonint:sufficient-conditions:adv-inert}
There is no token flow from 
$\sysS$ to $\cstD$ in \mbox{$\sysS \mid \cstD$},
and $\cstD$ is observationally invariant in $\sysS$.


\end{enumerate}
\end{theorem}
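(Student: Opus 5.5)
Both conditions reduce to the inequality $\leq$ in \Cref{def:non-interference}, since $\geq$ holds unconditionally by \Cref{lem:mev:basic}\eqref{lem:mev:basic:L-leq-H}, as $\cmvOfcst{\cstD} \subseteq \CmvU$. So for each condition we will show
\[
\lmev{}{\sysS \mid \cstD}{\cmvOfcst{\cstD}} \leq \lmev{\cmvOfcst{\cstD}}{\sysS \mid \cstD}{\cmvOfcst{\cstD}}.
\]
The statement allows $\sysS \mid \cstD$ to be non-well-formed. We will therefore use only those results that the paper says hold without well-formedness: items \eqref{lem:mev:basic:zero}--\eqref{lem:mev:basic:L-leq-H} of \Cref{lem:mev:basic}, and \Cref{th:mev:callable-narrowing:observational-invariance}.

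For condition~\ref{th:nonint:sufficient-conditions:zero-mev}, the left-hand side is $0$ by hypothesis. The right-hand side is $\geq 0$ by \Cref{lem:mev:basic}\eqref{lem:mev:basic:leq-wealth}, which holds for non-well-formed states. Together with the unconditional $\geq$, this gives equality.

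For condition~\ref{th:nonint:sufficient-conditions:adv-inert}, we apply \Cref{th:mev:callable-narrowing:observational-invariance} with $\CmvD = \CmvU$. Its two hypotheses, no token flow from $\sysS$ to $\cstD$ and observational invariance of $\cstD$ in $\sysS$, are exactly the assumptions of condition~\ref{th:nonint:sufficient-conditions:adv-inert}. That theorem is also stated for arbitrary, not necessarily well-formed, $\sysS \mid \cstD$. It yields
\[
\lmev{\CmvU}{\sysS \mid \cstD}{\cmvOfcst{\cstD}} = \lmev{\CmvU \cap \cmvOfcst{\cstD}}{\sysS \mid \cstD}{\cmvOfcst{\cstD}}.
\]
Since $\CmvU \cap \cmvOfcst{\cstD} = \cmvOfcst{\cstD}$, and $\lmev{}{}{}$ abbreviates $\lmev{\CmvU}{}{}$, this is precisely the equation in \Cref{def:non-interference}.

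The only point needing care is bookkeeping rather than mathematics. We must check that the abbreviation $\lmev{}{\cdot}{\cdot} = \lmev{\CmvU}{\cdot}{\cdot}$ matches the instantiation $\CmvD = \CmvU$. We must also check that no invoked lemma silently requires well-formedness; in particular we avoid \Cref{lem:mev:basic}\eqref{lem:mev:basic:monotonicity}. All the substantive work, namely discarding adversarial calls to $\sysS$ while preserving validity and the loss of $\cstD$, has already been done in \Cref{th:mev:callable-narrowing:observational-invariance}.
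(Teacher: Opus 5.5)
Your proof is correct and follows the paper's argument: condition~1 comes from sandwiching the restricted MEV between $0$ (via \Cref{lem:mev:basic}\eqref{lem:mev:basic:leq-wealth}) and the unrestricted MEV (via \Cref{lem:mev:basic}\eqref{lem:mev:basic:L-leq-H}), and condition~2 is a direct instance of \Cref{th:mev:callable-narrowing:observational-invariance} with $\CmvD = \CmvU$. Your explicit check that only the well-formedness-independent items are used is a sensible addition that the paper leaves implicit.
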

\begin{proof}
For condition~\ref{th:nonint:sufficient-conditions:zero-mev} we have:
\begin{align*}
	0 &\leq 	\lmev{\cmvOfcst{\cstD}}{\sysS\mid \cstD}{\cmvOfcst{\cstD}} &&\text{by \Cref{lem:mev:basic}\eqref{lem:mev:basic:leq-wealth}}
	\\
	& \leq \lmev{}{\sysS\mid \cstD}{\cmvOfcst{\cstD}} &&\text{by~\Cref{lem:mev:basic}\eqref{lem:mev:basic:L-leq-H}}
	\\
	& = 0 &&\text{by hypothesis}
\end{align*}
which implies $\lmev{\cmvOfcst{\cstD}}{\sysS\mid \cstD}{\cmvOfcst{\cstD}} = 0$, and therefore $\nonint{\sysS}{\cstD}$.

\medskip\noindent
To prove that $\nonint{\sysS}{\cstD}$ under condition~\ref{th:nonint:sufficient-conditions:adv-inert}, observe that the equality:
\[
\lmev{\cmvOfcst{\cstD}}{\sysS\mid \cstD}{\cmvOfcst{\cstD}} = \lmev{}{\sysS\mid \cstD}{\cmvOfcst{\cstD}}
\]
follows directly from~\Cref{th:mev:callable-narrowing:observational-invariance}.
\end{proof}


We have already shown in~\Cref{ex:nonint:airdrop} a case where MEV non-interference does not imply condition~\ref{th:nonint:sufficient-conditions:zero-mev} of \Cref{th:nonint:sufficient-conditions}.
It is easy to show a case where MEV non-interference holds but condition~\ref{th:nonint:sufficient-conditions:adv-inert} does not hold.
For example, consider a contract $\contract{D}$ with a state variable that can be set to the value of some variable in a contract in $\sysS$, but does not otherwise affect the MEV extractable from $\contract{D}$. Then, observational invariance does not hold, but MEV non-interference holds.

We will see several applications of~\Cref{th:nonint:sufficient-conditions} laster in~\Cref{sec:defi}.


\subsection{Resistance to front-running attacks}

In general, \mbox{$\nonint{\sysS}{\cstD}$} does not imply \mbox{$\nonint{\sysS \mid \cstC}{\cstD}$}.
For instance, in~\Cref{ex:nonint:token-dependency} we showed that
$\negnonint{\sysS \mid \cstC}{\cstD}$, but if we remove the $\contract{Airdrop}$ contract then $\pmvM$ is no longer able to extract MEV from the $\contract{Exchange}$, since $\pmvM$'s wallet is empty: hence, $\nonint{\sysS}{\cstD}$.
This means that, in general, MEV non-interference is not robust against attacks
where adversaries manage to deploy contracts $\cstC$ \emph{before} $\cstD$.
Namely, assume that a user detects that $\nonint{\sysS}{\cstD}$,
and then sends a transaction to deploy the contracts $\cstD$.
If this transaction is front-run with another transaction
that deploys $\cstC$, then the new state $\sysS \mid \cstC$
could MEV-interfere with $\cstD$.

\Cref{th:nonint:state-narrowing} establishes sufficient conditions under which MEV non-interference $\nonint{\sysS}{\cstD}$ is preserved when the deployment of $\cstD$ is front-run with the deployment of $\cstC$.
Condition~\ref{condition:qnonint:preserving-interference:1} requires the contracts in $\cstD$ to be \emph{sender-agnostic}, \ie, their behaviour must not depend on the identity of the caller.
Condition~\ref{condition:qnonint:preserving-interference:2} requires that the deployment of $\cstC$ does not enable new token flows into $\cstD$.
If either condition is violated, contracts in $\cstC$ may enable forms of token extraction that were impossible when $\cstD$ interacted only with $\sysS$.
For instance, if a contract in $\cstD$ is not sender-agnostic, an access control check can succeed when the caller is a contract in $\cstC$
(see~\Cref{cex:th:nonint:state-narrowing:sender-agnostic}).
Similarly, token transfers from $\cstC$ to $\cstD$ may unlock MEV-extraction opportunities from $\cstD$
(see~\Cref{cex:th:nonint:state-narrowing:token-flows}).  


\begin{theorem}[Front-running resistance]
\label{th:nonint:state-narrowing}
\mbox{$\nonint{\sysS}{\cstD} \iff \nonint{\sysS \mid \cstC}{\cstD}$}
holds if
\begin{enumerate}

\item \label{condition:qnonint:preserving-interference:1}
the contracts in $\deps{\cstD}$ are sender-agnostic, and

\item \label{condition:qnonint:preserving-interference:2}
there are no token flows from 
$\cmvOfcst{(\sysS \mid \cstC)} \setminus \deps{\cstD}$
to $\deps{\cstD}$
in $\sysS \mid \cstC \mid \cstD$.
\end{enumerate}
\end{theorem}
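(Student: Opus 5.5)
The plan is to show that both sides of the equivalence are equalities between the same two quantities, once each quantity is rewritten through \Cref{th:mev:frontrunning-resistance}. Let $\CmvC = \cmvOfcst{\cstD}$. By \Cref{def:non-interference}, $\nonint{\sysS \mid \cstC}{\cstD}$ means
\[
\lmev{}{\sysS \mid \cstC \mid \cstD}{\CmvC} = \lmev{\CmvC}{\sysS \mid \cstC \mid \cstD}{\CmvC},
\]
and $\nonint{\sysS}{\cstD}$ is the analogous equation with $\cstC$ removed. I will show that the two left-hand sides agree and that the two right-hand sides agree. The equivalence then follows at once.

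\textbf{Left-hand sides.} I will apply \Cref{th:mev:frontrunning-resistance} twice with $\CmvD = \CmvU$: once to the state $\sysS \mid \cstC \mid \cstD$, and once to $\sysS \mid \cstD$ with an empty $\cstC$.
\begin{itemize}
\item Hypothesis \ref{condition:th:mev:frontrunning-resistance:sender-agnostic} of that theorem is exactly our condition \ref{condition:qnonint:preserving-interference:1}.
\item Hypothesis \ref{condition:th:mev:frontrunning-resistance:boundary-in-callable} is trivial, since $\CmvD = \CmvU$.
\item For hypothesis \ref{condition:th:mev:frontrunning-resistance:token-flows}, I must exclude token flows from $\deps{\CmvU} \setminus \deps{\cstD}$ to $\deps{\cstD}$. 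Only contracts present in the state can take part in a flow, so the source set effectively reduces to $\cmvOfcst{(\sysS \mid \cstC \mid \cstD)} \setminus \deps{\cstD}$.
\end{itemize}
Both applications give the same value $\lmev{\deps{\CmvC}}{\sysS \mid \cstD}{\CmvC}$. For the $\cstC$-free case I still need the no-flow hypothesis in $\sysS \mid \cstD$. Every valid transaction sequence in $\sysS \mid \cstD$ remains valid, with the same effects, in $\sysS \mid \cstC \mid \cstD$, because by well-formedness no contract in $\sysS \mid \cstD$ calls $\cstC$. Hence a flow in the smaller state would also be a flow in the larger one, and condition \ref{condition:qnonint:preserving-interference:2} rules it out.

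\textbf{Right-hand sides.} Here $\CmvC \subseteq \cmvOfcst{(\sysS \mid \cstD)}$, so $\lmev{\CmvC}{\sysS \mid \cstD \mid \cstC}{\CmvC} = \lmev{\CmvC}{\sysS \mid \cstD}{\CmvC}$ by \Cref{lem:mev:state-narrowing}. This uses the reordering of $\sysS \mid \cstC \mid \cstD$ into $\sysS \mid \cstD \mid \cstC$, which is justified exactly as in the proof of \Cref{th:mev:frontrunning-resistance}.

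The main obstacle I expect is the token-flow bookkeeping. Hypothesis \ref{condition:th:mev:frontrunning-resistance:token-flows} is stated in terms of $\deps{\CmvD}$ for $\CmvD = \CmvU$, which is not literally our set. I must check that flows can only originate from contracts in the state, and, in particular, that contracts of $\cstD$ outside $\deps{\cstD}$ cause no trouble. In fact $\cmvOfcst{\cstD} \subseteq \deps{\cstD}$ by reflexivity of $\sqsubseteq$, so no such contracts exist.

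The monotonicity of flows under the extension by $\cstC$ also needs care, as noted above. If the first application of \Cref{th:mev:frontrunning-resistance} proves awkward, I will fall back on applying \Cref{th:mev:callable-narrowing:deps} directly, verifying its four conditions as in that theorem's proof.
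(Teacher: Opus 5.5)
Your proof is correct. The overall decomposition (the two unrestricted MEVs agree, and the two restricted ones agree) matches the paper's. Your right-hand-side step, via \Cref{lem:mev:state-narrowing} and reordering $\sysS \mid \cstC \mid \cstD$ into $\sysS \mid \cstD \mid \cstC$, is identical to the paper's.

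The difference is on the left-hand sides. The paper applies \Cref{th:mev:frontrunning-resistance} only once, to $\sysS \mid \cstC \mid \cstD$ with $\CmvD = \cmvOfcst{(\sysS \mid \cstC \mid \cstD)}$. It then closes a sandwich:
\[
\lmev{}{\sysS \mid \cstC \mid \cstD}{\cmvOfcst{\cstD}} = \lmev{\deps{\cstD}}{\sysS \mid \cstD}{\cmvOfcst{\cstD}} \leq \lmev{}{\sysS \mid \cstD}{\cmvOfcst{\cstD}} \leq \lmev{}{\sysS \mid \cstD \mid \cstC}{\cmvOfcst{\cstD}}
\]
The two inequalities come from \Cref{lem:mev:basic}\eqref{lem:mev:basic:L-leq-H} and \Cref{lem:mev:basic}\eqref{lem:mev:basic:monotonicity}, and the last term equals the first after reordering.

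You instead apply the theorem a second time with empty $\cstC$. This obliges you to verify the no-flow hypothesis in the smaller state $\sysS \mid \cstD$. Your lifting argument for this is sound: valid sequences of $\sysS \mid \cstD$ run unchanged in $\sysS \mid \cstC \mid \cstD$, and $\cmvOfcst{\sysS} \setminus \deps{\cstD} \subseteq \cmvOfcst{(\sysS \mid \cstC)} \setminus \deps{\cstD}$. It is the same reasoning that underlies \Cref{lem:mev:basic}\eqref{lem:mev:basic:monotonicity}.

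The paper's sandwich is more economical. MEV monotonicity already packages that lifting, so no flow condition on the smaller state is ever needed. Your version is more symmetric, since it identifies both unrestricted MEVs with the same narrowed quantity $\lmev{\deps{\cstD}}{\sysS \mid \cstD}{\cmvOfcst{\cstD}}$. The cost is reliance on a monotonicity-of-flows fact that the paper never states.

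Similarly, taking $\CmvD = \CmvU$ forces your side remark that absent contracts cannot be flow sources. The paper sidesteps this by choosing $\CmvD$ to be the dependency-closed set of contracts actually present in the state. With that choice, $\deps{\CmvD} \setminus \deps{\cstD}$ reduces to the source set of condition~\ref{condition:qnonint:preserving-interference:2} once $\cmvOfcst{\cstD} \subseteq \deps{\cstD}$ is noted.
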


\begin{proof}
The thesis can be written as:
\[
\begin{array}{rcl}
    \lmev{}{\sysS\mid\cstD}{\cmvOfcst{\cstD}} & = &
    \lmev{\cmvOfcst{\cstD}}{\sysS\mid\cstD}{\cmvOfcst{\cstD}}
    \\
    & \iff & \\
    \lmev{}{\sysS\mid \cstC\mid\cstD}{\cmvOfcst{\cstD}} & = &
    \lmev{\cmvOfcst{\cstD}}{\sysS\mid \cstC\mid\cstD}{\cmvOfcst{\cstD}}
\end{array}
\]
We prove the thesis by establishing the following two equalities:
\begin{align}
    \label{eq:qnonint:preserving-interference:1}
    \lmev{}{\sysS\mid\cstD}{\cmvOfcst{\cstD}} &= \lmev{}{\sysS \mid \cstC\mid\cstD}{\cmvOfcst{\cstD}}
    \\
    \label{eq:qnonint:preserving-interference:2}
    \lmev{\cmvOfcst{\cstD}}{\sysS\mid\cstD}{\cmvOfcst{\cstD}} &= \lmev{\cmvOfcst{\cstD}}{\sysS \mid \cstC\mid\cstD}{\cmvOfcst{\cstD}}
\end{align}

We start by observing that, since $\sysS \mid \cstC$ and $\sysS \mid \cstD$ are well-formed, then $\cstC$ has no dependencies in $\cstD$ and $\cstD$ has no dependencies in $\cstC$. 
Then, the states $\sysS \mid \cstC \mid \cstD$ and
$\sysS \mid \cstD \mid \cstC$ are equivalent:
\begin{equation}
\label{eq:qnonint:preserving-interference:state-equivalence}
\sysS \mid \cstC \mid \cstD 
\; = \;
\sysS \mid \cstD \mid \cstC
\end{equation}

For~\eqref{eq:qnonint:preserving-interference:1}, note that by~\Cref{lem:mev:basic}\eqref{lem:mev:basic:garbage}, 
if we define $\CmvD = \cmvOfcst{(\sysS \mid \cstC \mid \cstD)}$ we have:
\begin{equation}
\label{eq:nonint:state-narrowing:callable-narrowing}   
\lmev{}{\sysS \mid \cstC\mid\cstD}{\cmvOfcst{\cstD}}
=
\lmev{\CmvD}{\sysS \mid \cstC\mid\cstD}{\cmvOfcst{\cstD}}
\end{equation}
Using such $\CmvD$, the conditions of~\Cref{th:mev:frontrunning-resistance} are satisfied:
\begin{enumerate}

\item 
the contracts in $\deps{\cstD}$ are sender-agnostic
by hypothesis~\ref{condition:qnonint:preserving-interference:1};

\item 
$\deps{\cstD} \subseteq \CmvD$ holds trivially;

\item 
there are no token flows from 
$\deps{\CmvD} \setminus \deps{\cstD}$
to $\deps{\cstD}$
in $\sysS \mid \cstC \mid \cstD$,
by hypothesis~\ref{condition:qnonint:preserving-interference:2}.

\end{enumerate}
\noindent
Therefore, we obtain~\eqref{eq:qnonint:preserving-interference:1} from the chain of (in)equalities:
\begin{align*}
\lmev{}{\sysS \mid \cstC\mid\cstD}{\cmvOfcst{\cstD}}
& =
\lmev{\CmvD}{\sysS \mid \cstC\mid\cstD}{\cmvOfcst{\cstD}}
&& \text{by~\eqref{eq:nonint:state-narrowing:callable-narrowing}}
\\
& = \lmev{\CmvD \cap \deps{\cstD}}{\sysS \mid \cstD}{\cmvOfcst{\cstD}} 
&& \text{by~\Cref{th:mev:frontrunning-resistance}}
\\
& = \lmev{\deps{\cstD}}{\sysS \mid \cstD}{\cmvOfcst{\cstD}} 
&& \text{since $\deps{\cstD} \subseteq \CmvD$}
\\
& \leq 
\lmev{}{\sysS \mid \cstD}{\cmvOfcst{\cstD}} 
&& \text{by~\Cref{lem:mev:basic}\eqref{lem:mev:basic:L-leq-H}}
\\
& \leq
\lmev{}{\sysS \mid \cstD \mid \cstC}{\cmvOfcst{\cstD}}
&& \text{by~\Cref{lem:mev:basic}\eqref{lem:mev:basic:monotonicity}}
\\
& =
\lmev{}{\sysS \mid \cstC \mid \cstD}{\cmvOfcst{\cstD}}
&& \text{by~\eqref{eq:qnonint:preserving-interference:state-equivalence}}
\end{align*}

\noindent
For \eqref{eq:qnonint:preserving-interference:2}, we have that:
\begin{align*}
\lmev{\cmvOfcst{\cstD}}{\sysS\mid\cstD}{\cmvOfcst{\cstD}} 
& = 
\lmev{\cmvOfcst{\cstD}}{\sysS \mid \cstD \mid \cstC}{\cmvOfcst{\cstD}}
&& \text{by~\Cref{lem:mev:state-narrowing}, as $\cmvOfcst{\cstD} \subseteq \cmvOfcst{(\sysS \mid \cstD)}$}
\\
& =
\lmev{\cmvOfcst{\cstD}}{\sysS \mid \cstC \mid \cstD}{\cmvOfcst{\cstD}}
&& \text{by~\eqref{eq:qnonint:preserving-interference:state-equivalence}}
\tag*{\qedhere}
\end{align*}
\end{proof}

\begin{figure}[t]
\begin{lstlisting}[language=solidity
  ,classoffset=4,morekeywords={T,ETH},keywordstyle=\tokColor
  %,classoffset=5,morekeywords={C,D},keywordstyle=\cmvColor
  ,frame=single,
  ,caption={Contracts for~\Cref{cex:th:nonint:state-narrowing:sender-agnostic}.}
  ,label={lst:cex:th:nonint:state-narrowing:sender-agnostic}
]
contract D {
  function f() { require (Lock.b()); T.send(msg.sender,1); }
}
contract C {
  function g() { Lock.unlock(); }
}
contract Lock { // not sender-agnostic
  bool public b=false;
  function unlock() { require (msg.sender==C); b=true; }
}
\end{lstlisting}
\end{figure}

\begin{example}
\label{cex:th:nonint:state-narrowing:sender-agnostic}
Consider the contracts in~\Cref{lst:cex:th:nonint:state-narrowing:sender-agnostic}, let $\Adv = \setenum{\pmvM}$, and let:
\[
\sysS = \walu{\pmvM}{0}{\tokT} \mid \walpmv{\contract{Lock}}{\code{b}=\code{false}}
\qquad
\cstC = \walpmv{\contract{C}}{\waltok{0}{\tokT}}
\qquad
\cstD = \walpmv{\contract{D}}{\waltok{1}{\tokT}}
\]
In $\sysS \mid \cstD$, it is not possible to extract the $1:\tokT$ from $\contract{D}$, since $\pmvM$ cannot set \code{Lock.b}.
Therefore, $\lmev{}{\sysS \mid \cstD}{\setenum{\contract{D}}} = 0$, which implies $\nonint{\sysS}{\cstD}$.
Instead, in $\sysS \mid \cstC \mid \cstD$, if $\pmvM$ has unrestricted access to $\contract{C}$, then she can set \code{Lock.b} and then extract $1:\tokT$ from $\contract{D}$.
When the set of callable contracts is restricted to $\setenum{\contract{D}}$, instead, the MEV extractable from $\contract{D}$ is zero.
Therefore, $\negnonint{\sysS \mid \cstC}{\cstD}$.
Note that the contract $\contract{Lock} \in \deps{\cstD}$ is \emph{not} sender-agnostic, hence \Cref{th:nonint:state-narrowing} does not apply.
\hfill\qedex
\end{example}

\begin{figure}[t]
\begin{lstlisting}[language=solidity
  ,classoffset=4,morekeywords={T,ETH},keywordstyle=\tokColor
  %,classoffset=5,morekeywords={C,D},keywordstyle=\cmvColor
  ,frame=single,
  ,caption={Contracts for~\Cref{cex:th:nonint:state-narrowing:token-flows}.}
  ,label={lst:cex:th:nonint:state-narrowing:token-flows}
]
contract D {
  function f() { require (PaidLock.b()); T.send(msg.sender,1); }
}
contract C {
  function g() { T.approve(PaidLock,1); PaidLock.unlock(); }
}
contract PaidLock { // token flow from C
  bool public b=false;
  function unlock() { T.receive(1); b=true; }
}
\end{lstlisting}
\end{figure}
\begin{example}
\label{cex:th:nonint:state-narrowing:token-flows}
Consider the contracts in~\Cref{lst:cex:th:nonint:state-narrowing:token-flows}, let $\Adv = \setenum{\pmvM}$, and let:
\[
\sysS = \walu{\pmvM}{0}{\tokT} \mid \walpmv{\contract{PaidLock}}{\code{b}=\code{false}}
\qquad
\cstC = \walpmv{\contract{C}}{\waltok{1}{\tokT}}
\qquad
\cstD = \walpmv{\contract{D}}{\waltok{1}{\tokT}}
\]
The reasoning is similar to the in~\Cref{cex:th:nonint:state-narrowing:sender-agnostic}.
We have $\nonint{\sysS}{\cstD}$ because, without the token in $\contract{C}$, 
it is not possible to unlock the extraction of $1:\tokT$ from $\contract{D}$.
Instead, $\negnonint{\sysS \mid \cstC}{\cstD}$, since $\pmvM$ can call $\contract{C}$ to set \code{PaidLock.b} and then extract $1:\tokT$ from~$\contract{D}$.
Unlike~\Cref{cex:th:nonint:state-narrowing:sender-agnostic}, all contracts are sender-agnostic, but there is a token flow from $\contract{C}$ to $\contract{PaidLock} \in \deps{\cstD}$, so \Cref{th:nonint:state-narrowing} does not apply.
\hfill\qedex
\end{example}
\section{MEV non-interference for wealthy adversaries}
\label{sec:rich-nonint}

MEV non-interference $\nonint{\sysS}{\cstD}$ may or may not hold depending on the wealth of the adversary in $\sysS$,
as we have already observed in~\Cref{lem:mev:wallet}
that MEV depends on the adversary's wallets.
As noted before, assuming bounds on the capital available to the adversary
may be unsafe, and consequently we introduced in~\Cref{def:rich-mev} a notion of MEV that assumes an unbounded adversary's wealth.
We now introduce a wealthy-adversaries counterpart to MEV non-interference:
$\richnonint{\cstC}{\cstD}$ holds when contracts $\cstC$ do not interfere with the MEV extractable from $\cstD$ \emph{regardless} of the adversary's wealth.

\begin{definition}[MEV non-interference against wealthy adversaries]
  \label{def:rich-non-interference}
  A contract state $\cstC$ is \emph{$\rlmev{}{}{}$ non-interfering} with $\cstD$,
  in symbols $\richnonint{\cstC}{\cstD}$, when
  \[
  \rlmev{}{\cstC \mid \cstD}{\cmvOfcst{\cstD}}
  =
  \rlmev{\cmvOfcst{\cstD}}{\cstC \mid \cstD}{\cmvOfcst{\cstD}}
  \]
\end{definition}

\Cref{lem:rich-nonint-nonint} characterizes
$\richnonint{}{}$ in terms of $\nonint{}{}$:
intuitively, $\richnonint{\cstC}{\cstD}$ holds whenever
$\nonint{\WmvA \mid \cstC}{\cstD}$ holds for rich enough adversaries' wallets.

\begin{proposition} 
  \label{lem:rich-nonint-nonint}
  \mbox{$\richnonint{\cstC}{\cstD}$} if and only if
  \mbox{$\exists \WmvA[0] . \ \forall \WmvA \geq_{\$} \WmvA[0] . \ \nonint{\WmvA \mid \cstC}{\cstD}$}.
\end{proposition}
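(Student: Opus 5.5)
The plan is to reduce both sides of the equivalence to statements about the ordinary MEV $\lmev{}{}{}$ evaluated at a single, sufficiently rich wallet state. Write $\CmvC = \cmvOfcst{\cstD}$, and consider the two functions
\[
f(\WmvA) = \lmev{}{\WmvA \mid \cstC \mid \cstD}{\CmvC}
\qquad
g(\WmvA) = \lmev{\CmvC}{\WmvA \mid \cstC \mid \cstD}{\CmvC}
\]
Then $\nonint{\WmvA \mid \cstC}{\cstD}$ holds iff $f(\WmvA) = g(\WmvA)$. By \Cref{def:rich-mev}, $\richnonint{\cstC}{\cstD}$ holds iff $\max_{\WmvA} f(\WmvA) = \max_{\WmvA} g(\WmvA)$.

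The first step applies \Cref{lem:mev:stability} twice, to the contract state $\cstC \mid \cstD$ with victims $\CmvC$. Once I take callable set $\CmvU$, and once callable set $\CmvC$. This gives wallets $\WmvA[f]$ and $\WmvA[g]$ such that $f$ is constant on all $\WmvA \geq_{\$} \WmvA[f]$, and $g$ is constant on all $\WmvA \geq_{\$} \WmvA[g]$. As remarked after \Cref{lem:mev:stability}, using \Cref{lem:mev:wallet}\eqref{lem:mev:wallet:monotonicity}, these constant values are exactly $\rlmev{}{\cstC \mid \cstD}{\CmvC}$ and $\rlmev{\CmvC}{\cstC \mid \cstD}{\CmvC}$, respectively. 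Let $\WmvA[*] = \max\setenum{\WmvA[f],\WmvA[g]}$ be the pointwise maximum, as in the proof of \Cref{lem:rich-mev:state-narrowing}. Then for every $\WmvA \geq_{\$} \WmvA[*]$ we have
\[
f(\WmvA) = \rlmev{}{\cstC \mid \cstD}{\CmvC}
\qquad
g(\WmvA) = \rlmev{\CmvC}{\cstC \mid \cstD}{\CmvC}
\]

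Both directions then follow.

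($\Rightarrow$) Take $\WmvA[0] = \WmvA[*]$. For any $\WmvA \geq_{\$} \WmvA[0]$, the two displayed equalities together with the hypothesis $\richnonint{\cstC}{\cstD}$ give $f(\WmvA) = g(\WmvA)$, that is, $\nonint{\WmvA \mid \cstC}{\cstD}$.

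($\Leftarrow$) Given $\WmvA[0]$, take $\WmvA = \max\setenum{\WmvA[0],\WmvA[*]}$. This wallet dominates both, so the hypothesis yields $f(\WmvA) = g(\WmvA)$. The displayed equalities then turn this into $\rlmev{}{\cstC \mid \cstD}{\CmvC} = \rlmev{\CmvC}{\cstC \mid \cstD}{\CmvC}$, which is $\richnonint{\cstC}{\cstD}$.

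The main obstacle is bookkeeping about the relation $\leq_{\$}$ and the pointwise maximum of wallet states, rather than anything conceptual. I need $\max\setenum{\WmvA[f],\WmvA[g]}$ to be $\geq_{\$}$ both wallets, including when their domains differ; here I would take the union of domains, filling in zero wallets where an account is missing. I also need that $\WmvA \mid \cstC$ is a legitimate user-wallet state to which \Cref{def:non-interference} applies. Finally, I need to double-check that the stability value coincides with the maximum used in \Cref{def:rich-mev}, which is exactly the short chain of inequalities given in the text after \Cref{lem:mev:stability}.
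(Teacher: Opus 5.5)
Your proof is correct and follows the paper's argument. The paper also applies \Cref{lem:mev:stability} twice (once with the unrestricted and once with the restricted callable set), takes a common upper bound of the two stability wallets, and concludes that $\richnonint{\cstC}{\cstD}$ is equivalent to $\nonint{\WmvA \mid \cstC}{\cstD}$ for every wallet $\WmvA$ above that bound; your choice of $\max\{\WmvA[0],\WmvA[*]\}$ in the ($\Leftarrow$) direction just makes explicit a step the paper leaves implicit.
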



Of course, we cannot deduce \mbox{$\richnonint{\cstC}{\cstD}$} 
when \mbox{$\nonint{\WmvA \mid \cstC}{\cstD}$} holds for \emph{some}~$\WmvA$:
indeed, a ``poor'' adversary could not be able to call some MEV-triggering method in $\cstC$, while a wealthy one could:
then, we could observe a discrepancy between the restricted and
unrestricted $\rlmev{}{}{}$ which was not visible for $\lmev{}{}{}$.
The following~\namecref{cex:nonint-not-imply-richnonint} elaborates on this informal argument.

\begin{example}
  \label{cex:nonint-not-imply-richnonint}
  The implication
  ``if $\nonint{\WmvA \mid \cstC}{\cstD}$ then $\richnonint{\cstC}{\cstD}$''
  does not hold.
  To show that, consider again the contracts in~\Cref{lst:nonint-not-imply-richnonint},
  and let: 
  \[
  \WmvA = \walu{\pmvM}{0}{\tokT}
  \qquad
  \cstC = \walpmv{\contract{C}}{\waltok{0}{\tokT},\code{x}=\code{false},\code{fee}=1}
  \qquad
  \cstD = \walpmv{\contract{D}}{\waltok{100}{\ETH}}
  \]
  For $\nonint{}{}$, we have that
  $\lmev{}{\WmvA \mid \cstC \mid \cstD}{\setenum{\cmvD}} = 0$,
  since $\pmvM$ does not have the $1:\tokT$ needed to extract the $\ETH$ from $\cstD$.
  Therefore, $\nonint{\WmvA \mid \cstC}{\cstD}$.
  For $\richnonint{}{}$, we have that
  $\rlmev{}{\cstC \mid \cstD}{\setenum{\cmvD}} = 100\cdot\price{\ETH}$,
  while
  $\rlmev{\setenum{\cmvD}}{\cstC \mid \cstD}{\setenum{\cmvD}} = 0$,
  since $\pmvM$ cannot call the $\txcode{set}$ method of $\contract{C}$.
  Then, $\negrichnonint{\cstC}{\cstD}$.
  \hfill\qedex
\end{example}

The following~\namecref{th:rich-nonint:sufficient-conditions}
refines \Cref{th:nonint:sufficient-conditions},
giving sufficient conditions for $\richnonint{}{}$.
Note that the absence of token flows is no longer required.

\begin{theorem}[Sufficient conditions for $\richnonint{}{}$]
  \label{th:rich-nonint:sufficient-conditions}
  For any state $\cstC \mid \cstD$ (not necessarily well-formed),
  each of the following conditions implies
  \mbox{$\richnonint{\cstC}{\cstD}$}:
  \begin{enumerate}

  \item \label{th:rich-nonint:sufficient-conditions:zero-mev}
    $\rlmev{}{\cstC \mid \cstD}{\cmvOfcst{\cstD}} = 0$;

  \item \label{th:rich-nonint:sufficient-conditions:no-deps}
    $\deps{\cstD}$ and $\cmvOfcst{\cstC}$ are disjoint;

  \item \label{th:rich-nonint:sufficient-conditions:adv-inert}
    $\cstD$ is observationally invariant in $\WmvA \mid \cstC$, for all sufficiently large $\WmvA$.
    
  \end{enumerate}
\end{theorem}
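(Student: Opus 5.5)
Each of the three conditions is handled separately. In every case only the inequality $\rlmev{}{\cstC \mid \cstD}{\cmvOfcst{\cstD}} \leq \rlmev{\cmvOfcst{\cstD}}{\cstC \mid \cstD}{\cmvOfcst{\cstD}}$ needs proof, since the converse holds by \Cref{lem:rich-mev:basic}\eqref{lem:rich-mev:basic:L-leq-H}.

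For condition~\ref{th:rich-nonint:sufficient-conditions:zero-mev}, I would replay the proof of \Cref{th:nonint:sufficient-conditions}\eqref{th:nonint:sufficient-conditions:zero-mev}. By \Cref{lem:rich-mev:basic}\eqref{lem:rich-mev:basic:leq-wealth} and~\eqref{lem:rich-mev:basic:L-leq-H}, the restricted quantity lies between $0$ and the unrestricted one, and the unrestricted one is $0$ by hypothesis. For condition~\ref{th:rich-nonint:sufficient-conditions:adv-inert}, I would apply \Cref{th:rich-mev:callable-narrowing:observational-invariance} with $\cstS = \cstC$ and $\CmvD = \CmvU$. Its ``if'' direction gives $\rlmev{}{\cstC \mid \cstD}{\cmvOfcst{\cstD}} = \rlmev{\cmvOfcst{\cstD}}{\cstC \mid \cstD}{\cmvOfcst{\cstD}}$ directly, since $\CmvU \cap \cmvOfcst{\cstD} = \cmvOfcst{\cstD}$. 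This is exactly the definition of $\richnonint{\cstC}{\cstD}$.

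Condition~\ref{th:rich-nonint:sufficient-conditions:no-deps} is where actual work is needed. The plan is to use callable-contracts narrowing for wealthy adversaries (\Cref{th:rich-mev:callable-narrowing:deps}) with victims $\CmvC = \cmvOfcst{\cstD}$. First, by \Cref{lem:rich-mev:basic}\eqref{lem:rich-mev:basic:garbage}, replace $\CmvU$ with $\CmvD = \cmvOfcst{(\cstC \mid \cstD)}$. Next, compute the boundary $\boundary{\cmvOfcst{\cstD}}{\CmvD}$. Its elements lie in $\deps{\cstD}$ and are called by some contract of $\deps{\CmvD} \setminus \deps{\cstD}$. Such a caller lies in $\cmvOfcst{\cstC}$, as $\deps{\CmvD}$ lies within the state by well-formedness. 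Assuming the state is closed under dependencies, $\deps{\cstD}$ consists of $\cmvOfcst{\cstD}$ together with dependencies in $\cstC$, and the latter are excluded by disjointness. So the boundary consists of contracts of $\cstD$ called from $\cstC$. In a well-formed $\cstC \mid \cstD$, contracts of $\cstC$ cannot call those of $\cstD$, so the boundary is empty. Then both hypotheses of \Cref{th:rich-mev:callable-narrowing:deps} (sender-agnosticism of the boundary and $\boundary{}{}\subseteq\CmvD$) hold vacuously. This yields $\rlmev{\CmvD}{\cstC \mid \cstD}{\cmvOfcst{\cstD}} = \rlmev{\CmvD \cap \deps{\cstD}}{\cstC \mid \cstD}{\cmvOfcst{\cstD}}$. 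Finally, $\CmvD \cap \deps{\cstD} \subseteq \cmvOfcst{\cstD}$ by disjointness, and \Cref{lem:rich-mev:basic}\eqref{lem:rich-mev:basic:L-leq-H} bounds the right-hand side by $\rlmev{\cmvOfcst{\cstD}}{\cstC \mid \cstD}{\cmvOfcst{\cstD}}$.

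The main obstacle is that the statement allows $\cstC \mid \cstD$ not to be well-formed, while \Cref{th:rich-mev:callable-narrowing:deps} is implicitly stated for well-formed states. The boundary argument also relied on contracts of $\cstC$ not calling $\cstD$. My first attempt would be a more direct argument that avoids this. Under condition~\ref{th:rich-nonint:sufficient-conditions:no-deps}, $\cstD$ never calls into $\cstC$, so no call from $\cstD$ can observe changes in $\cstC$. Hence $\cstD$ is trivially observationally invariant in $\WmvA \mid \cstC$ for every $\WmvA$, with one subtle point: a transaction to $\cstC$ might call $\cstD$ and change its state. If $\cstC$ does not call $\cstD$ (for instance, if $\cstC \mid \cstD$ is well-formed in that order), this reduces condition~\ref{th:rich-nonint:sufficient-conditions:no-deps} to condition~\ref{th:rich-nonint:sufficient-conditions:adv-inert}. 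If $\cstC$ can call $\cstD$, I would instead rely on the boundary route, using that $\deps{\cstD}\subseteq\cmvOfcst{\cstD}$ makes every called contract of $\cstD$ directly callable by the adversary. For the adversary's call to reproduce the internal one, the contracts of $\cstD$ called from $\cstC$ must be sender-agnostic, so this case remains the delicate step, to be checked against the intended definitions.
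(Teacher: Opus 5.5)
Conditions 1 and 3 are handled exactly as in the paper. For condition 2, the paper uses what you call your ``direct'' attempt. It observes that no call from $\cstD$ into $\cstC$ can occur, declares $\cstD$ observationally invariant in $\WmvA \mid \cstC$ for every $\WmvA$, and reduces to condition 3.

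Your main route is different. You invoke \Cref{th:rich-mev:callable-narrowing:deps} with victims $\cmvOfcst{\cstD}$, after showing that the boundary consists of the contracts of $\cstD$ called from $\cstC$. When no contract of $\cstC$ calls a contract of $\cstD$, this boundary is empty, the lemma's hypotheses hold vacuously, and your argument is sound. It never uses observational invariance. It also gives something the paper's route cannot: if $\cstC$ does call $\cstD$, the same lemma still applies provided the called contracts of $\cstD$ are sender-agnostic. In that situation the first item of \Cref{def:observational-invariance} fails outright, so the paper's route is unavailable.

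The subtle point you raise is real, and the case you leave open cannot be closed, because the statement is false there. The paper's proof only rules out calls from $\cstD$ into $\cstC$, which is the second item of \Cref{def:observational-invariance}. It never checks the first item, and that item is exactly what breaks when a transaction to $\cstC$ reaches $\cstD$.

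Here is a counterexample, essentially \Cref{cex:mev:callable-narrowing:sender-agnostic}.
\begin{itemize}
\item Let $\cstD$ consist of a contract $\contract{C0}$ holding $\waltok{1}{\tokT}$, whose method $\txcode{f}$ requires the caller to be $\contract{C1}$ and then sends $\waltok{1}{\tokT}$ to the transaction origin.
\item Let $\cstC$ consist of $\contract{C1}$, whose only method calls $\contract{C0}.\txcode{f}()$.
\item Then $\deps{\cstD} = \setenum{\contract{C0}}$ is disjoint from $\cmvOfcst{\cstC} = \setenum{\contract{C1}}$, and $\cstC \mid \cstD$ is an admissible non-well-formed state.
\item The transaction $\pmvM:\contract{C1}.\txcode{f}()$ drains $\contract{C0}$, so $\rlmev{}{\cstC \mid \cstD}{\setenum{\contract{C0}}} = 1 \cdot \price{\tokT}$.
\item Every transaction $\pmvM:\contract{C0}.\txcode{f}()$ reverts, whatever the adversary's wallet, so $\rlmev{\setenum{\contract{C0}}}{\cstC \mid \cstD}{\setenum{\contract{C0}}} = 0$.
\end{itemize}
Hence $\negrichnonint{\cstC}{\cstD}$.

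So condition 2 needs an extra hypothesis. One option is that no contract of $\cstC$ calls a contract of $\cstD$, which holds when $\cstC \mid \cstD$ is well-formed in that order. The other is that the contracts of $\cstD$ called from $\cstC$ are sender-agnostic. Under either hypothesis your proof is complete.
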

\begin{proof}
For~\Cref{th:rich-nonint:sufficient-conditions:zero-mev}, $\richnonint{\cstC}{\cstD}$ is implied by: 
\begin{align*}
	0 &\leq 	\rlmev{\cmvOfcst{\cstD}}{\cstC\mid \cstD}{\cmvOfcst{\cstD}} &&\text{by~\Cref{lem:rich-mev:basic}\eqref{lem:rich-mev:basic:leq-wealth}}
	\\
	& \leq \rlmev{}{\cstC\mid \cstD}{\cmvOfcst{\cstD}} &&\text{by~\Cref{lem:rich-mev:basic}\eqref{lem:rich-mev:basic:L-leq-H}}
	\\
	& = 0 &&\text{by hypothesis}
\end{align*}

For~\Cref{th:rich-nonint:sufficient-conditions:no-deps}, assume that $\deps{\cstD}$ and $\cmvOfcst{\cstC}$ are disjoint.
We prove that $\cstD$ is observationally invariant in $\WmvA \mid \cstC$, for all $\WmvA$.
\Cref{def:observational-invariance} requires checking the effects of all transactions $\txY$ fired in \mbox{$\WmvA \mid \cstC \mid \cstD$} which trigger a call
$\contract{D}:\contract{C}.\txcode{f}(\code{args})$ with $\contract{D} \in \cmvOfcst{\cstD}$ and $\contract{C} \in \cmvOfcst{(\WmvA \mid \cstC)} = \cmvOfcst{\cstC}$.
This implies that $\contract{C} \in \deps{\cstD} \cap \cmvOfcst{\cstC}$, contradicting
the hypothesis in~\Cref{th:rich-nonint:sufficient-conditions:no-deps}.

Since the assumption of~\Cref{th:rich-nonint:sufficient-conditions:no-deps} implies that of~\Cref{th:rich-nonint:sufficient-conditions:adv-inert}, we only need to prove the theorem in the latter case.
Under this assumption, the thesis follows directly from~\Cref{th:rich-mev:callable-narrowing:observational-invariance}.
\end{proof}

\hidden{
Nota 1:
	Credo sia possibile indebolire le ipotesi del teorema 4, dimostrandolo in maniera molto simile alla dimostrazione del Lemma 3B. 
	Il nuovo teorema diventerebbe qualcosa del tipo: 
	\begin{itemize}
	\item $\richnonint{\cstC}{\cstD} \implies \richnonint{\strip{\cstC}{\cmvOfcst{\cstD}}}{\cstD} $ incondizionatamente
	\item  $\richnonint{\strip{\cstC}{\cmvOfcst{\cstD}}}{\cstD} \implies \richnonint{\cstC}{\cstD}$ sotto le condizioni di sender-agnostic che ci sono ora, quindi $\deps{\cstD} \cap \deps{\cmvOfcst{\cstC}\setminus\deps{\cstD}}$ devono essere sender-agnostic.
	\end{itemize}
}

\Cref{th:rich-nonint:state-narrowing} is the wealthy-adversaries counterpart of \Cref{th:nonint:state-narrowing}, establishing that MEV non-interference is robust against front-running the deployment of $\cstD$ with adversarial contracts $\cstC$ whenever the dependencies of $\cstD$ are sender-agnostic.
Intuitively, this holds because any MEV extraction exploiting $\cstC$ can also be performed by directly calling the dependencies of $\cstD$, since they are not influenced by the caller's identity.
\bartnote{per l'implicazione ``if'', non serve l'ipotesi sender-agnostic}

\begin{theorem}[Front-running resistance]
\label{th:rich-nonint:state-narrowing}
\mbox{$\richnonint{\cstS}{\cstD} \iff \richnonint{\cstS \mid \cstC}{\cstD}$}
holds if  
the contracts in $\deps{\cstD}$ are sender-agnostic.
\end{theorem}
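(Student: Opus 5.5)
Mirroring the proof of \Cref{th:nonint:state-narrowing}, I will reduce the thesis to the two equalities
\begin{align}
\rlmev{}{\cstS\mid\cstD}{\cmvOfcst{\cstD}} &= \rlmev{}{\cstS \mid \cstC\mid\cstD}{\cmvOfcst{\cstD}} \label{eq:prop:rich-fr:1}\\
\rlmev{\cmvOfcst{\cstD}}{\cstS\mid\cstD}{\cmvOfcst{\cstD}} &= \rlmev{\cmvOfcst{\cstD}}{\cstS \mid \cstC\mid\cstD}{\cmvOfcst{\cstD}} \label{eq:prop:rich-fr:2}
\end{align}
Together they make the two sides of \Cref{def:rich-non-interference} for $\cstS$ and for $\cstS\mid\cstC$ literally the same numbers, which gives the equivalence. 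As in the bounded case, I will first note that well-formedness of $\cstS\mid\cstC$ and $\cstS\mid\cstD$ makes $\cstS\mid\cstC\mid\cstD$ and $\cstS\mid\cstD\mid\cstC$ equivalent states.

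Equality \eqref{eq:prop:rich-fr:2} is immediate from \Cref{lem:rich-mev:state-narrowing}, since $\cmvOfcst{\cstD} \subseteq \cmvOfcst{(\cstS\mid\cstD)}$, together with the reordering of the state. It uses no sender-agnosticism. The inequality $\leq$ in \eqref{eq:prop:rich-fr:1} follows from monotonicity (\Cref{lem:rich-mev:basic}\eqref{lem:rich-mev:basic:monotonicity}) applied to $\cstS\mid\cstD\mid\cstC$, again followed by reordering.

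For $\geq$ in \eqref{eq:prop:rich-fr:1}, I will take $\CmvD = \cmvOfcst{(\cstS\mid\cstC\mid\cstD)}$ and $\CmvC = \cmvOfcst{\cstD}$. By \Cref{lem:rich-mev:basic}\eqref{lem:rich-mev:basic:garbage}, the unrestricted $\rlmev{}{}{}$ equals $\rlmev{\CmvD}{\cstS\mid\cstC\mid\cstD}{\CmvC}$. I will then apply \Cref{th:rich-mev:callable-narrowing:deps}, whose hypotheses hold as follows:
\begin{itemize}
\item $\boundary{\CmvC}{\CmvD} \subseteq \deps{\CmvC} = \deps{\cstD}$, so the boundary contracts are sender-agnostic by hypothesis;
\item $\deps{\cstD} \subseteq \CmvD$ by well-formedness, so the boundary is contained in $\CmvD$.
\end{itemize}
This yields $\rlmev{\deps{\cstD}}{\cstS\mid\cstC\mid\cstD}{\CmvC}$. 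Since $\deps{\cstD} \subseteq \cmvOfcst{(\cstS\mid\cstD)}$, \Cref{lem:rich-mev:state-narrowing} (after reordering to $\cstS\mid\cstD\mid\cstC$) removes $\cstC$ and gives $\rlmev{\deps{\cstD}}{\cstS\mid\cstD}{\CmvC}$. By \Cref{lem:rich-mev:basic}\eqref{lem:rich-mev:basic:L-leq-H}, this is at most $\rlmev{}{\cstS\mid\cstD}{\CmvC}$, which closes the chain.

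The substantive step is the invocation of \Cref{th:rich-mev:callable-narrowing:deps}. Its proof replays the translation of the attack $\TxTS$ into boundary calls, so this is exactly where sender-agnosticism is used. Since the wealthy setting drops the token-flow conditions, I expect no further obstacle. The only thing to check is that the reordering and narrowing steps respect well-formedness, in particular that $\deps{\cstD}$ is disjoint from $\cmvOfcst{\cstC}$.
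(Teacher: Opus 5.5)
Your proof is correct and follows the paper's route: the paper proves this theorem by replaying the proof of \Cref{th:nonint:state-narrowing} with the wealthy-adversary lemmas, substituting \Cref{th:rich-mev:callable-narrowing:deps} for its bounded counterpart. Your chain for the two equalities is exactly that replay made explicit, including inlining the wealthy analogue of \Cref{th:mev:frontrunning-resistance}, which the paper never states separately.
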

\begin{proof}
We replay the proof of~\Cref{th:nonint:state-narrowing}, replacing $\nonint{}{}$ with its wealthy-adversary counterpart $\richnonint{}{}$, and exploiting the wealthy-adversary counterparts of the used lemmas about MEV.
More in detail, each use of~\Cref{th:mev:callable-narrowing:deps} within the proof of~\Cref{th:nonint:state-narrowing} is replaced here with~\Cref{th:rich-mev:callable-narrowing:deps}, which only requires the sender-agnostic assumption on $\deps{\cstD}$.
\end{proof}

\section{Application to DeFi compositions}
\label{sec:defi}

We now analyse MEV non-interference of DeFi compositions, considering as basic building blocks typical DeFi protocols.
We summarize the results of the analysis in~\Cref{tab:defi-compositions}, where
each line shows a compound contract $\cstD$, the context $\cstC$, 
and whether $\nonintrel$ and $\richnonintrel$ hold (\cmark) or not (\xmark).
In the table we omit the actual wallet and contract states, with the following interpretation: 
\begin{itemize}

\item \cmark\ in the $\nonint{\WmvA \mid \cstC}{\cstD}$ column means that non-interference holds for all $\WmvA$, 
and for all $\cstC$ and $\cstD$ of the given form; 
\xmark\ means that $\nonint{}{}$ fails to hold
for some $\WmvA$, and for some $\cstC$ and $\cstD$ of the given form;

\item \cmark\ in the $\richnonint{\cstC}{\cstD}$ column means that non-interference holds for all $\cstC$ and $\cstD$ of the given form; 
\xmark\ means that $\richnonint{}{}$ fails to hold
for some $\cstC$ and $\cstD$ of the given form.

\end{itemize}

Entries marked \cmark\ follow by~\Cref{th:nonint:sufficient-conditions,th:rich-nonint:sufficient-conditions}, and indicate which case of the respective theorem applies.
For entries marked \xmark\ we provide a counterexample in the respective section. 
Note that the \cmark\ and \xmark\ results may still hold when $\cstC$ is extended with further contracts,
whenever the conditions of~\Cref{th:nonint:state-narrowing,th:rich-nonint:state-narrowing} are satisfied.

\newcommand{\cstAny}{\cdots}

\begin{table}[t!]
  \caption{MEV non-interference of DeFi compositions.}
  \label{tab:defi-compositions}
  \setlength{\tabcolsep}{7pt} 
  \renewcommand{\arraystretch}{1.05} 
  \centering
  \begin{tabular}{|c|c|c|c|}
  \hline
  \textbf{Dependencies $\cstC$} 
  & \textbf{New contracts $\cstD$} 
  & \textbf{$\nonint{\WmvA \mid \cstC}{\cstD}$}
  & \textbf{$\richnonint{\cstC}{\cstD}$}
  \\
  \hline
  $\interface{IPriceOracle} \mid \cstAny$
  & $\contract{Exchange}$
  & \xmark
  & \xmark
  \\
  $\contract{PriceOracle} \mid \cstAny$
  & $\contract{Exchange}$
  & \xmark
  & \cmark${}^{\ref{th:rich-nonint:sufficient-conditions:adv-inert}}$
  \\
  $\cstAny$
  & $\contract{AMM}$
  & \xmark
  & \cmark${}^{\ref{th:rich-nonint:sufficient-conditions:no-deps}}$
  \\
  $\contract{AMM}$
  & $\contract{Option}$
  & \xmark
  & \xmark
  \\
  $\contract{PriceOracle}$
  & $\contract{Option}$
  & \cmark${}^{\ref{th:nonint:sufficient-conditions:adv-inert}}$
  & \cmark${}^{\ref{th:rich-nonint:sufficient-conditions:adv-inert}}$
  \\
  $\interface{IExchange} \mid \interface{IExchange}$
  & $\contract{SwapProxy}$
  & \cmark${}^{\ref{th:nonint:sufficient-conditions:zero-mev}}$
  & \cmark${}^{\ref{th:rich-nonint:sufficient-conditions:zero-mev}}$
  \\
  $\interface{IExchange} \mid \interface{IExchange}$
  & $\contract{SwapRouter}$
  & \cmark${}^{\ref{th:nonint:sufficient-conditions:zero-mev}}$
  & \cmark${}^{\ref{th:rich-nonint:sufficient-conditions:zero-mev}}$
  \\
  $\contract{PriceOracle} \mid \contract{LinearInterestRate}$
  & $\contract{LP}$
  & \cmark${}^{\ref{th:nonint:sufficient-conditions:adv-inert}}$
  & \cmark${}^{\ref{th:rich-nonint:sufficient-conditions:adv-inert}}$
  \\  
  $\contract{AMM} \mid \interface{IInterestRate}$
  & $\contract{LP}$
  & \xmark
  & \xmark
  \\  
  $\interface{IExchange} \mid \interface{IExchange} \mid \interface{ILoaner}$
  & $\contract{LPArbitrage}$
  & \cmark${}^{\ref{th:nonint:sufficient-conditions:zero-mev}}$
  & \cmark${}^{\ref{th:rich-nonint:sufficient-conditions:zero-mev}}$
  \\
  $\interface{IExchange} \mid \interface{IExchange} \mid \interface{IFlashLoaner}$
  & $\contract{FlashLoanArbitrage}$
  & \cmark${}^{\ref{th:nonint:sufficient-conditions:zero-mev}}$
  & \cmark${}^{\ref{th:rich-nonint:sufficient-conditions:zero-mev}}$
  \\
  \hline
  \end{tabular}
\end{table}


\subsection{Exchange}
\label{sec:defi:exchange}

The $\contract{Exchange}$ contract in~\Cref{lst:exchange} does not enjoy neither $\lmev{}{}{}$ nor $\rlmev{}{}{}$ non-interferences \wrt a generic context.
For example, $\lmev{}{}{}$ interference may occur when the context includes a contract from which the adversary can receive tokens needed to interact with the $\contract{Exchange}$, \eg an $\contract{Airdrop}$ or another $\contract{Exchange}$.
Instead, $\rlmev{}{}{}$ interference may occur when the $\interface{IPriceOracle}$ instance upon which the $\contract{Exchange}$ depends can be manipulated by the adversary.
This happens \eg when the price oracle is an $\contract{AMM}$.

When, instead, the $\contract{Exchange}$ depends on the $\contract{PriceOracle}$ contract in~\Cref{lst:exchange}, it is observationally invariant \wrt adversarial moves in the context.
Therefore, $\rlmev{}{}{}$ non-interference holds by~\Cref{th:rich-nonint:sufficient-conditions}\eqref{th:rich-nonint:sufficient-conditions:adv-inert}. 
Note that $\lmev{}{}{}$ non-interference may still fail because of token flows to the adversary.


\subsection{AMM}
\label{sec:defi:amm}

We specify in~\Cref{lst:amm} an Automated Market Maker (AMM) inspired by Uniswap v2 \cite{Angeris20aft,Angeris21analysis,Xu21sok}.
The AMM stores reserves of two token types $\tokT[0]$ and $\tokT[1]$; 
these types and their initial reserves are defined upon deployment.
The method $\txcode{swap}$ allows anyone to exchange units of $\tokT[0]$ for units of $\tokT[1]$ and vice-versa, provided that the product between their reserves remains constant.
More specifically, a transaction
$\pmvA:\txcode{swap}(\tokT[0],x,y_{\min})$
allows $\pmvA$ to send $\waltok{x}{\tokT[0]}$ to the AMM,
and receive at least \mbox{$\waltok{y_{\min}}{\tokT[1]}$} in exchange.
Symmetrically, $\pmvA:\txcode{swap}(\tokT[1],x,y_{\min})$
allows $\pmvA$ to exchange $\waltok{x}{\tokT[1]}$ for at least
\mbox{$\waltok{y_{\min}}{\tokT[0]}$}.
The method $\txcode{getRate}$ gives the exchange rate between the two token types, thereby making the AMM usable as a price oracle.

AMMs have a well-known source of MEV when the internal exchange rate given by $\txcode{getRate}$ is not aligned with the external rate (\ie, the ratio of the token prices given by the price function $\price{}$).
In this case, MEV can be extracted by firing a $\txcode{swap}$ transaction that realigns the internal with the external rate~\cite{BCL22amm}.

\begin{figure}[tb!]
\begin{lstlisting}[language=solidity
  % ,morekeywords={AMM,addLiquidity,swap,getTokens,getRate}
  % ,classoffset=4,morekeywords={T0,T1,t},keywordstyle=\tokColor
  ,frame=single
  ,caption={A constant-product AMM contract.}
  ,label={lst:amm}]
contract AMM is IPriceOracle, IExchange {
  token immutable public t0; token immutable public t1;
  
  constructor(token t0_, uint x0, token t1_, uint x1) { 
    t0=t0_; t1=t1_; require t0!=t1 && x0>0 && x1>0; 
    t0.receive(x0); t1.receive(x1);
  }
  // exchange rate to sell ti and buy to (multiplied by 1e6)
  function getRate(token ti, token to) public returns(uint) {
    require (ti==t0 && to==t1) || (ti==t1 && to==t0);
    uint ri = ti.balanceOf(this);
    uint ro = to.balanceOf(this);
    return (1_000_000 * ro)/ri; // note ro,ri flipped wrt Exchange
  }
  // list of tokens swappable through the AMM
  function getTokens() public view returns (token[] memory) {
    token[] memory ts = new token[](2); ts[0]=t0; ts[1]=t1; return ts;
  }
  // swaps x:ti for at least ymin:to
  function swap(token ti, uint x, token to, uint ymin) public returns (uint) {
    require (ti==t0 && to==t1) || (ti==t1 && to==t0);
    ti.receive(x);
    uint y = (x * getRate(ti,to))/1_000_000;
    require (y>=ymin);
    to.send(msg.sender,y); 
    return y;
  }
}
\end{lstlisting}
\end{figure}


To study MEV non-interference, consider two AMM instances for the pairs $(\tokT[0],\tokT[1])$ and $(\tokT[1],\tokT[2])$.
Let $\Adv = \setenum{\pmvM}$, 
let $\price{\tokT[0]} = \price{\tokT[1]} = \price{\tokT[2]} = 1$, and let:
\[
\sysS = 
\walpmv{\pmvM}{\waltok{3}{\tokT[0]}} \mid
\walpmv{\contract{AMM1}}{\waltok{6}{\tokT[0]},\waltok{6}{\tokT[1]}}
\qquad
\cstD =
\walpmv{\contract{AMM2}}{\waltok{4}{\tokT[1]},\waltok{9}{\tokT[2]}}
\]
The internal exchange rate for selling $\tokT[1]$ and buying $\tokT[2]$ in $\contract{AMM2}$ is $\nicefrac{4}{9}$, while the external rate is $1$:
therefore, it is possible to extract MEV from $\contract{AMM2}$.
Since $\pmvM$ does not have any units of $\tokT[1]$ that are needed to swap through $\contract{AMM2}$, extracting such MEV requires first obtaining the needed units through $\contract{AMM1}$.
The sequence of transactions that maximises the loss of $\contract{AMM2}$
is the following:
\begin{align*}
\sysS
& \xrightarrow{\pmvM:\contract{AMM1}.\txcode{swap}(\tokT[0],3,0)}
&& \hspace{-80pt}
\walu{\pmvM}{2}{\tokT[1]} \mid
\walpmv{\contract{AMM1}}{\waltok{9}{\tokT[0]},\waltok{4}{\tokT[1]}} \mid
\walpmv{\contract{AMM2}}{\waltok{4}{\tokT[1]},\waltok{9}{\tokT[2]}}  
\\
& \xrightarrow{\pmvM:\contract{AMM2}.\txcode{swap}(\tokT[1],2,0)}
&& \hspace{-80pt}
\walu{\pmvM}{3}{\tokT[2]} \mid
\walpmv{\contract{AMM1}}{\waltok{9}{\tokT[0]},\waltok{4}{\tokT[1]}} \mid
\walpmv{\contract{AMM2}}{\waltok{6}{\tokT[1]},\waltok{6}{\tokT[2]}}
\end{align*}
This causes $\contract{AMM2}$ to receive $2:\tokT[1]$ and send $3:\tokT[2]$, for an overall loss of $1$. 
Therefore,  
$\lmev{}{\sysS}{\setenum{\contract{AMM2}}} = 1$.
Instead, when $\pmvM$ is restricted to use
$\CmvD = \setenum{\contract{AMM2}}$,
it has no way to obtain the tokens $\tokT[1]$ that are needed to
extract value from $\contract{AMM2}$: therefore,
$\lmev{\setenum{\contract{AMM2}}}{\sysS}{\setenum{\contract{AMM2}}} = 0$.
Summing up, $\negnonint{\sysS}{\cstD}$.
Similar attacks are possible whenever the context allows the adversary to obtain tokens needed to interact with the AMM in~$\cstD$.

For wealthy adversaries, instead, AMMs are always $\lmev{}{}{}$ non-interfering with any context.
This holds because wealthy adversaries always have enough tokens to attack the new AMM in $\cstD$,
without the need of exploiting the context $\cstC$.
This follows from~\Cref{th:rich-nonint:sufficient-conditions}\eqref{th:rich-nonint:sufficient-conditions:no-deps}.


\subsection{Binary option}
\label{sec:defi:binaryoption}

\begin{figure}[!htbp]
  \begin{lstlisting}[language=solidity
  %,morekeywords={win,close,getTokens,getRate},classoffset=4,morekeywords={a,b,A},keywordstyle=\pmvColor,classoffset=5,morekeywords={t,ETH},keywordstyle=\tokColor,classoffset=6,morekeywords={oracle},keywordstyle=\cmvColor
  ,frame=single
  ,caption={A binary option contract relying on an external price oracle.}
  ,label={lst:option}
]
contract Option {
  address owner;   // option seller
  address buyer;   // option holder
  token t;         // underlying asset
  uint strike;     // strike price (multiplied by 1_000_000)
  uint expiry;     // expiration time
  uint fee;        // option price paid by buyer
  uint payout;     // fixed payoff if strike reached
  IPriceOracle public oracle;
  
  constructor(token t_,uint s,uint e,uint f,uint p,IPriceOracle o) {
    t=t_; strike=s; expiry=e; fee=f; payout=p; oracle=o; owner=msg.sender;
    ETH.receive(payout);
    require (expiry>block.number);
  }
  
  function buy() public {
    require (buyer==address(0)); // option has not been sold yet
    ETH.receive(fee);
    ETH.send(owner,fee);
    buyer=msg.sender;
  }
  
  function exercise() public {
    require (block.number<=expiry && buyer!=address(0) && msg.sender==buyer);
    uint price=oracle.getRate(t,ETH);
    require (price>=strike);
    ETH.send(buyer,payout);
  }

  function reclaim() public {
    require (block.number>expiry && msg.sender==owner);
    ETH.send(owner,payout);
  }
}
\end{lstlisting}
\end{figure}

The $\contract{Option}$ contract in~\Cref{lst:option} implements a minimal american binary call option settled via a price oracle. 
A seller deploys the contract by locking a payoff in $\ETH$, and choosing a target token $\tokT$ and a strike price it has to reach in order to claim the option.  
A buyer may then acquire the option by paying a fixed fee. 
Before expiration, if the price of the token $\tokT$ is greater than or equal to the strike price, the buyer can redeem the payoff. 
Otherwise, the payoff is returned to the seller after expiration. 

Consider an instance of $\contract{Option}$ using an $\contract{AMM}$
as price oracle, with payout of $10:\ETH$, fee of $1:\ETH$, owner $\pmvA$, target token $\tokT$, and strike price $2 \cdot 10^6$:
\begin{align*}
\sysS & = 
\walu{\pmvM}{301}{\ETH} \mid
\walpmv{\contract{AMM}}{\waltok{600}{\ETH},\waltok{600}{\tokT}} \mid
\code{block.number}=n-k \mid
\cdots
\\
\cstD & = \walpmv{\contract{Option}}{\waltok{10}{\ETH}, \code{owner}=\pmvA,\code{buyer}=0,\code{t}=\tokT,\code{strike}=\code{2e6},\code{expiry}=n,\code{fee}=1, \ldots}
\end{align*}
Note that the current exchange rate is $1 \cdot 10^6$, while exercising the option requires to make it exceed $2 \cdot 10^6$.
Since the option has not expired ($\code{block.num}=n-k < n$) and the adversary $\pmvM$ is has enough $\ETH$, she can fire the following sequence:
\begin{align*}
\sysS \mid \cstD
& 
\xrightarrow{\pmvM:\contract{Option}.\txcode{buy}()}
&& \hspace{-40pt} 
\walu{\pmvM}{300}{\ETH} \mid
\walpmv{\contract{AMM}}{\waltok{600}{\ETH},\waltok{600}{\tokT}} \mid
\walpmv{\contract{Option}}{\waltok{20}{\ETH},\cdots}
\\
& 
\xrightarrow{\pmvM:\contract{AMM}.\txcode{swap}(\ETH,300,0)}
&& \hspace{-40pt}
\walu{\pmvM}{200}{\tokT} \mid
\walpmv{\contract{AMM}}{\waltok{900}{\ETH},\waltok{400}{\tokT}} \mid
\walpmv{\contract{Option}}{\waltok{20}{\ETH},\cdots}
\\
&
\xrightarrow{\pmvM:\contract{Option}.\txcode{exercise}()}
&& \hspace{-40pt}
\walpmv{\pmvM}{\waltok{20}{\ETH},\waltok{200}{\tokT}} \mid
\walpmv{\contract{AMM}}{\waltok{900}{\ETH},\waltok{400}{\tokT}} \mid
\walpmv{\contract{Option}}{\waltok{0}{\ETH},\cdots}
\\
&
\xrightarrow{\pmvM:\contract{AMM}.\txcode{swap}(\tokT,200,0)}
&& \hspace{-40pt}
\walu{\pmvM}{320}{\ETH} \mid
\walpmv{\contract{AMM}}{\waltok{600}{\ETH},\waltok{600}{\tokT}} \mid
\walpmv{\contract{Oracle}}{\waltok{0}{\ETH},\cdots}
\end{align*}
where the third transition succeeds because $\contract{AMM}.\txcode{getRate}(\tokT,\ETH) = \nicefrac{9}{4} \cdot 10^6 > 2 \cdot 10^6$.
This shows that, when $\pmvM$ can interact with $\contract{Option}$'s dependency $\contract{AMM}$, the maximal loss inflictable by $\pmvM$ to the option contract is: 
\[
\lmev{}{\sysS \mid \cstD}{\setenum{\contract{Option}}} 
\; = \;
10 \cdot \price{\ETH}
\]
Instead, when $\pmvM$ is restricted to calling only $\contract{Option}$, then $\pmvM$ has no way to remove tokens from the contract in the state $\sysS$, and so:
\[
\lmev{\setenum{\contract{Option}}}{\sysS \mid \cstD}{\setenum{\contract{Option}}} 
\; = \; 
0
\]
Therefore, $\contract{Option}$ does not enjoy MEV non-interference with $\sysS$.
Note that a poorer $\pmvM$ may not have enough $\ETH$
to produce the short-term volatility in the AMM's exchange rate.

MEV non-interference holds, instead, when using the $\contract{PriceOracle}$ in~\Cref{lst:price-oracle}, since no adversary (except the oracle owner itself) can manipulate the exchange rate.
Formally, let $\sysS$ be a contract state including the $\contract{PriceOracle}$.
We have that:
\begin{inlinelist}
\item there are no indirect token flows from $\sysS$ to the option contract, and 
\item the option contract is observationally invariant in $\sysS$.
\end{inlinelist}
Therefore, $\nonint{S}{\cstD}$ follows from~\Cref{th:nonint:sufficient-conditions}\eqref{th:nonint:sufficient-conditions:adv-inert}.
By~\Cref{th:rich-nonint:sufficient-conditions}\eqref{th:rich-nonint:sufficient-conditions:adv-inert}, MEV non-interference also holds against wealthy adversaries.

Note that, while the composition $\contract{AMM}$/$\contract{Option}$ is correctly classified as \emph{non} $0$-composable by~\cite{Babel23clockwork},
the same happens for $\contract{PriceOracle}$/$\contract{Option}$
in the states when the adversary (honestly) wins the bet.
This seems intuitively incorrect, but technically $0$-composability does not hole because in~\cite{Babel23clockwork} an adversary winning the bet is indistinguishable from extracting MEV.


\subsection{Swap proxy}
\label{sec:defi:swapproxy}

\begin{figure}[!tb]
\begin{lstlisting}[
    ,language=solidity
    % ,morekeywords={SwapProxy,AMM,swap,getTokens,getRate},classoffset=4,morekeywords={a,b,A,Oracle},keywordstyle=\pmvColor,classoffset=5,morekeywords={ETH,t,t0,t1,tout},keywordstyle=\tokColor,classoffset=6,morekeywords={c0,c1,exch},keywordstyle=\cmvColor
    ,caption={A contract that composes two exchanges, swapping with the one with higher rate.}
    ,label={lst:swapproxy}
    ,frame=single]
contract SwapProxy is IExchange {
  IExchange immutable ex0; IExchange immutable ex1; 
  
  constructor(IExchange ex0_, IExchange ex1_) { ex0=ex0_; ex1=ex1_; }
  
  function getRate(token ti, token to) public view returns(uint) {
    uint r0 = ex0.getRate(ti,to);
    uint r1 = ex1.getRate(ti,to);
    return (r0 >= r1) ? r0 : r1;
  }
  // assumption: the two exchanges handle the same token types
  function getTokens() public view returns (token[] memory) {
    return ex0.getTokens();
  }

  function swap(token ti, uint x, token to, uint ymin) public returns (uint) {
    ti.receive(x);
    IExchange bex; // the exchange with the maximum exchange rate 
    bex = (ex0.getRate(ti,to) >= ex1.getRate(ti,to)) ? ex0 : ex1;
    ti.send(bex,x);
    uint y = bex.swap(ti,x,to,ymin); // reverts if bex outputs < ymin:to 
    to.receive(y);
    to.send(msg.sender,y);
    return y;
  }
}
\end{lstlisting}
\end{figure}

The $\contract{SwapProxy}$ contract in~\Cref{lst:swapproxy} composes two exchanges, providing a $\txcode{swap}$ method that forwards the call to one of the underlying exchanges. 
More precisely, when $\txcode{swap}$ is called, it queries the current rates of the two underlying exchanges, and selects the one with the maximum rate to perform the swap.
The composed exchanges must implement the $\interface{IExchange}$ interface in~\Cref{lst:exchange} (for example, possible instantiations are the $\contract{Exchange}$ and the $\contract{AMM}$ contracts), and must handle the same token types.
Note that, in general, $\contract{SwapProxy}$ does not guarantee the optimal swap that one could achieve leveraging the underlying exchanges, for various reasons.
First, the queried exchange rate is only the \emph{marginal} rate, \ie the rate obtained for infinitely small trades.
In some exchanges (\eg, in the $\contract{AMM}$), the actual rate also depends on the input amount: swapping more tokens causes the rate to decrease (diminishing returns).
Second, it may happen that none of the underlying exchanges has a sufficient amount of output tokens to cover the input: in that case, one could have to split the trade into two smaller trades.

Since the balance of $\contract{SwapProxy}$ is zero in any blockchain state, no tokens can be extracted, and so its MEV is zero.
Therefore, by~\Cref{th:nonint:sufficient-conditions}\eqref{th:nonint:sufficient-conditions:zero-mev}, it enjoys MEV non-interference against any blockchain state --- regardless of the specific instantiations of the $\interface{IExchange}$ implemented by the underlying exchanges.
MEV non-interference also holds against wealthy adversaries, by~\Cref{th:rich-nonint:sufficient-conditions}\eqref{th:rich-nonint:sufficient-conditions:zero-mev}.



\subsection{Swap router}
\label{sec:defi:swaprouter}

\begin{figure}[tbp!]
\begin{lstlisting}[language=solidity
    %,morekeywords={swap,getTokens,getRate},classoffset=4,morekeywords={a,b,A,Oracle},keywordstyle=\pmvColor,classoffset=5,morekeywords={ETH,t,t0,t1,t2,t1b},keywordstyle=\tokColor,classoffset=6,morekeywords={SwapRouter,c0,c1,exch},keywordstyle=\cmvColor
    ,caption={A contract to route swaps across two exchanges.}
    ,label={lst:swaprouter}
    ,frame=single]
contract SwapRouter {
  IExchange immutable ex0; IExchange immutable ex1; 

  constructor(IExchange ex0_,IExchange ex1_) { ex0=ex0_; ex1=ex1_; }

  function getTokens() public view returns (address[] memory) {
    address[] memory ts0 = ex0.getTokens();
    address[] memory ts1 = ex1.getTokens();
    address[] memory ts = new address[](ts0.length + ts1.length);
    uint k=0;
    for (uint i=0; i<ts0.length; i++) ts[k++]=ts0[i];
    for (uint i=0; i<ts1.length; i++) ts[k++]=ts1[i];
    return ts;
  }

  function swap(token[] memory path, uint x, uint ymin) public returns (uint) {
    require (path.length >= 2);
    path[0].receive(x);
    uint y = x;
    for (uint i=0; i<path.length-1; i++) {
      IExchange ex = _findExchange(path[i],path[i+1]);
      require (ex!=address(0)); // no exchange for hop
      path[i].send(ex, y);
      y = ex.swap(path[i], path[i+1], y);
      to.receive(y);
    }
    require (y>=ymin);
    path[path.length-1].send(msg.sender, y);
    return y;
  }
   
  function _findExchange(token a, token b) internal view returns (IExchange) {
    // Find the exchange supporting the swap which has the highest exchange rate
    if (_supports(ex0,a,b)) {
        if (_supports(ex1,a,b) && ex0.getRate(a,b) < ex1.getRate(a,b)) return ex1;
        return ex0;
    }
    if (_supports(ex1,a,b)) return ex1;
    return IExchange(address(0));
  }

  function _supports(IExchange ex, token a, token b) internal view returns (bool) {
    token[] memory ts = ex.getTokens();
    bool hasA = false;
    bool hasB = false;
    for (uint i=0; i<ts.length; i++) {
      if (ts[i]==a) hasA = true;
      if (ts[i]==b) hasB = true;
    }
    return hasA && hasB;
   }
}
\end{lstlisting}
\end{figure}

The $\contract{SwapRouter}$ contract in~\Cref{lst:swaprouter} composes two exchanges and exposes a single $\txcode{swap}$ method that enables users to perform token exchanges along a user-specified path.
Each underlying exchange implements the $\interface{IExchange}$ interface and may support swaps among multiple token types, as specified by its $\txcode{getTokens}$ method.
A swap is defined by a path $\langle \tokT[0], \tokT[1], \ldots, \tokT[n] \rangle$, indicating a sequence of intermediate token conversions.
The router receives an input amount of $\tokT[0]$ from the caller and executes a sequence of swaps $\tokT[i] \to \tokT[{i+1}]$ for $i = 0,\ldots,n-1$.
For each hop, the router dynamically selects one of the two exchanges that supports both tokens involved in the conversion.
If no such exchange exists for some hop, the transaction reverts.
The router itself does not expose any pricing; rather, it delegates each hop to the selected underlying exchange, approving the required token transfers.
After completing the final hop, the router transfers the resulting tokens to the caller.

Since the balance of $\contract{SwapRouter}$ is zero in any blockchain state, there is nothing to extract, and so its MEV is zero.
Therefore, by~\Cref{th:nonint:sufficient-conditions}\eqref{th:nonint:sufficient-conditions:zero-mev}, it enjoys MEV non-interference against any blockchain state --- regardless of the implementations of the $\interface{IExchange}$ interface.
MEV non-interference also holds against wealthy adversaries, by~\Cref{th:rich-nonint:sufficient-conditions}\eqref{th:rich-nonint:sufficient-conditions:zero-mev}.


\begin{figure}[!htbp]
\begin{lstlisting}[language=solidity
    ,frame=single
    ,caption={A Lending Pool contract.}
    ,label={lst:lp}
    ]
interface ILoaner {
  function deposit(uint x) external;
  function borrow(uint x) external;
  function repay(uint x) external;
}

contract LP is ILoaner {
  token immutable tB; token immutable tC; // borrowed & collateral
  struct Position { uint credit; uint debt; uint lastAccrual; }
  mapping(address => Position) public pos;
  uint constant BT = 150e4;  // borrowing threshold = 150%  
  uint constant LT = 120e4;  // liquidation threshold = 120%
  IPriceOracle oracle; IInterestRate irm;
  
  constructor(IPriceOracle o,IInterestRate i,token tB_,token tC_) {
    oracle=o; irm=i; tB=tB_; tC=tC_; 
    require (tB!=tC && o.getRate(tB,ETH)>0 && o.getRate(tC,ETH)>0);
  }
  function _accrue(address user) internal {
    Position storage p=pos[user];
    if (p.debt==0) { p.lastAccrual=block.number; return; }
    uint interest = irm.getInt(p.debt,p.lastAccrual);
    p.debt += interest; p.lastAccrual=block.number;
  }
  function _isColl(address a) internal view returns (bool) {
    uint vd = pos[a].debt * oracle.getRate(ETH,tB);
    if (vd==0) { return true; } // no debt => collateralized
    uint vc = pos[a].credit * oracle.getRate(ETH,tC);
    return (vc*1e6 >= vd*BT);
  }
  function deposit(uint x) public {
    tC.receive(x);
    pos[msg.sender].credit += x;
  }
  function borrow(uint x) public {
    _accrue(msg.sender);
    tB.send(msg.sender,x);
    pos[msg.sender].debt += x;
    require _isColl(msg.sender);
  }
  function repay(uint x) public {
    _accrue(msg.sender);
    require (pos[msg.sender].debt>=x);
    tB.receive(x);
    pos[msg.sender].debt -= x;
  }
  function liquidate(address borrower) public {
    _accrue(borrower);
    uint vd = (pos[user].debt   * prices.getPrice(tB))/1e6;
    uint vc = (pos[user].credit * prices.getPrice(tC))/1e6;
    require (vc*1e6 < vd*LT);
    tB.receive(pos[borrower].debt);
    tC.send(msg.sender,pos[borrower].credit);
    pos[borrower].debt = 0;
    pos[borrower].credit = 0; // simplified full liquidation
  }
}
\end{lstlisting}
\end{figure}

\subsection{Lending Pool}
\label{sec:defi:lp}

The contract $\contract{LP}$ in~\Cref{lst:lp} implements a simplified lending protocol that allows users to deposit a collateral token $\tokT[c]$ and borrow a distinct token $\tokT[b]$.
The protocol relies on two external contract dependencies:
\begin{itemize}
\item a \emph{price oracle}, used to compute the value of collateral and outstanding debt;
\item an \emph{interest rate model}, used to compute interests on loans (\Cref{lst:interest-rate}).
\end{itemize}

\begin{figure}[tb!]
\begin{lstlisting}[language=solidity
    ,frame=single
    ,caption={A simple Interest Rate Model.}
    ,label={lst:interest-rate}
    ]
interface IInterestRate {
  // Returns interest accrued since `last` block on `debt`
  function getInt(uint debt, uint last) external view returns(uint);
}

// Simple linear interest per block
contract LinearInterestRate is IInterestRate {
  uint public immutable ratePerBlock; // e.g., 1e16 = 1% per block

  constructor(uint r) { ratePerBlock = r; }

  function getInt(uint debt, uint last) public view returns (uint) {
    if (debt==0 || block.number<=last) return 0;
    uint blocks = block.number - last;
    return (debt * ratePerBlock * blocks) / 1e18;
  }
}
\end{lstlisting}
\end{figure}

Our $\contract{LP}$ contract features the following methods:
\begin{itemize}

\item $\txcode{deposit}$ allows anyone to deposit tokens $\tokT[c]$ in the pool. These tokens can be used as collateral to borrow tokens $\tokT[b]$. For simplicity, in~\Cref{lst:lp} we omit the method to redeem deposited $\tokT[c]$ tokens.

\item $\txcode{borrow}$ allows any user with sufficient collateral to withdraw tokens $\tokT[b]$ from the pool. The actual condition is defined in the method \lstinline{_isColl}: more precisely, borrowing is possible when the value of the user's collateral is at least \code{BT} times the value of their debt (after the borrow).
In our example, we choose a borrowing threshold $\code{BT} = 150\%$.
Note that, once tokens are borrowed, the value of the debt will automatically increase over time following the interest rate model.

\item $\txcode{repay}$ allows a borrower to return part of their debt.

\item $\txcode{liquidate}$ allows anyone to repay the debt of an under-collateralized borrower, thus gaining the ownership of the collateral. 
We consider a borrower to be under-collateralized when the value of their collateral is below $\code{LT}$ times the value of their debt.
In our example, we choose a liquidation threshold $\code{LT} = 120\%$.
Note that a borrower can become under-collateralized for various reasons, such as the accrual of interest on the debt and price fluctuations (\eg, the price of the collateral token $\tokT[c]$ drops).   
For simplicity, the $\contract{LP}$ contract in~\Cref{lst:lp} implements a \emph{full liquidation} mechanism, in which the liquidator repays the entire outstanding debt and receives the borrower's entire collateral balance.

\end{itemize}

Lending protocols of this kind are notoriously subject to MEV~\cite{BZ25fc}.
For example, an adversary with the power of causing sufficiently large price fluctuations can cause a borrower's collateral to fall below the threshold, and then  
trigger liquidation by repaying the borrower's debt and receiving its collateral in exchange.
Under suitable conditions on prices, this can cause a loss of value to the $\contract{LP}$ contract. 

We now study the composition of a lending pool contract $\contract{LP}$ with a $\contract{SwapRouter}$ serving as a price oracle.
The router is based on two AMMs as underlying exchanges.   
We will show that this composition does \emph{not} enjoy MEV non-interference.
Assume that the borrowed and collateral tokens are, respectively, $\tokT[b]$ and $\tokT[c]$, with external prices $\price{\tokT[b]} = 2$ and $\price{\tokT[c]} = 1$. 
Let $\Adv = \setenum{\pmvM}$, and 
consider the following state $\sysS$ and an $\contract{LP}$ in $\cstD$ where $\pmvA$ has deposited $200:\tokT[c]$ and borrowed $60:\tokT[b]$: 
\begin{align*}
\sysS & = 
\walpmv{\pmvA}{\waltok{0}{\tokT[c]},\waltok{60}{\tokT[b]}} \mid
\walpmv{\pmvM}{\waltok{100}{\tokT[c]},\waltok{60}{\tokT[b]}} \mid
\walpmv{\contract[1]{AMM}}{\waltok{50}{\tokT[b]},\waltok{100}{\ETH}} \mid
\walpmv{\contract[2]{AMM}}{\waltok{100}{\tokT[c]},\waltok{100}{\ETH}} \mid \walpmv{\contract{SwapRouter}}{\cdots} 
\\
\cstD & = \walpmv{\contract{LP}}{\waltok{40}{\tokT[b]},\waltok{200}{\tokT[c]},\code{oracle}=\contract{SwapRouter},\code{pos}=\setenum{\pmvA \mapsto (200,60,0)},\ldots}
\end{align*}
In the state $\sysS \mid \cstD$, we have that: 
\begin{itemize}
\item the $\ETH$/$\tokT[b]$ rate is $2$, and so the debt value of $\pmvA$ is $\code{vd} = 60 \cdot 2 = 120$;
\item the $\ETH$/$\tokT[c]$ rate is $1$, and so the collateral of $\pmvA$ is $\code{vc} = 200 \cdot 1 = 200$;
\item $\pmvA$ is not liquidatable, since $\code{vc} = 200 \geq \code{vd} \cdot 120\% = 144$. 
\end{itemize}

Therefore, $\pmvA$ is not liquidatable in $\sysS \mid \cstD$: 
this implies that when $\pmvM$ is restricted to call the contracts in $\cstD$, they cannot extract value from the $\contract{LP}$.
Instead, if $\pmvM$ can call any contract in $\sysS$, they can manipulate the prices in order to make $\pmvA$ under-collateralized, and then liquidate $\pmvA$ --- causing the extraction of value from the $\contract{LP}$.
The attack sequence $\TxTS$ is the following:
\begin{enumerate}

\item $\pmvM:\contract[2]{AMM}.\txcode{swap}(\tokT[c],100,\ETH,0)$.
This changes the AMM state to:
\[
\walpmv{\contract[2]{AMM}}{\waltok{200}{\tokT[c]},\waltok{50}{\ETH}}
\]
In this state, the price of $\tokT[c]$ has \emph{decreased} to $0.25$, while the price of $\tokT[b]$ and $\pmvA$'s debt value remain unchanged, so
the collateral value of $\pmvA$ is $200 \cdot 0.25 = 50$.
Therefore, $\pmvA$ is liquidatable, since $\code{vc} = 50 < \code{vd} \cdot 120\% = 144$. 

\item $\pmvM:\contract{LP}.\txcode{liquidate}(\pmvA)$.
The effect of this liquidation is that $\pmvM$ repays the full debt $60:\tokT[b]$ of $\pmvA$ and receives her full collateral $200:\tokT[c]$.

\item $\pmvM:\contract[2]{AMM}.\txcode{swap}(\ETH,50,\tokT[c],0)$.
The adversary restores the AMM state and the exchange rate $\ETH$/$\tokT[c]$ as they were in $\sysS$.

\end{enumerate}
The state after the attack is the following:
\begin{align*}
\sysSi & = 
\walpmv{\pmvA}{\waltok{0}{\tokT[c]},\waltok{60}{\tokT[b]}} \mid
\walpmv{\pmvM}{\waltok{300}{\tokT[c]},\waltok{0}{\tokT[b]}} \mid
\walpmv{\contract[1]{AMM}}{\waltok{50}{\tokT[b]},\waltok{100}{\ETH}} \mid
\walpmv{\contract[2]{AMM}}{\waltok{100}{\tokT[c]},\waltok{100}{\ETH}} \mid
\walpmv{\contract{SwapRouter}}{\cdots} 
\\
\cstDi & = \walpmv{\contract{LP}}{\waltok{100}{\tokT[b]},\waltok{0}{\tokT[c]},\code{pos}=\setenum{\pmvA \mapsto (0,0,0)},\ldots}
\end{align*}
Summing up, the loss of the $\contract{LP}$ due to this attack sequence is:
\begin{align*}
-\gain{\setenum{\contract{LP}}}{\sysS}{\TxTS}
& = \wealth{\setenum{\contract{LP}}}{\sysS} - \wealth{\setenum{\contract{LP}}}{\sysSi} 
\\
& = (40 \cdot \price{\tokT[b]} + 200 \cdot \price{\tokT[c]}) - (100 \cdot \price{\tokT[b]} + 0 \cdot \price{\tokT[c]}) 
\\
& = 280 - 200 > 0
\end{align*}
This implies that the unrestricted MEV is strictly positive, and consequently it is different from the restricted MEV --- which is zero. 
Therefore, $\negnonint{\sysS}{\cstD}$.

MEV non-interference can be recovered by replacing the $\contract{AMM}$ with a price oracle that cannot be manipulated by the adversary.  
Additionally, we must also fix the instance of $\interface{IInterestRate}$ to prevent the adversary from influencing the behaviour of $\contract{LP}$.
For instance, choosing $\contract{PriceOracle}$ from~\Cref{lst:price-oracle} and $\contract{LinearInterestRate}$ from~\Cref{lst:interest-rate} makes the lending pool observationally invariant \wrt adversarial moves, and so~\Cref{th:rich-nonint:sufficient-conditions}\eqref{th:rich-nonint:sufficient-conditions:adv-inert} gives non-interference \wrt wealthy adversaries.
Establishing MEV non-interference \wrt ``poor'' adversaries through~\Cref{th:rich-nonint:sufficient-conditions}\eqref{th:rich-nonint:sufficient-conditions:adv-inert} would additionally require the absence of token flows from the system to the $\contract{LP}$, which is not guaranteed in our composition.



\subsection{Lending Pool arbitrage}
\label{sec:defi:lp-arbitrage}

\begin{figure}[!tb]
\begin{lstlisting}[language=solidity
    ,frame=single
    ,caption={Arbitrage with a Lending Pool.}
    ,label={lst:lp-arbitrage}
    ]
contract LPArbitrage {
  ILoaner   immutable lp;  // borrow t0 (with collateral tc)
  IExchange immutable ex0; // swap t0 -> t1
  IExchange immutable ex1; // swap t1 -> t0

  constructor(ILoaner lp_,IExchange ex0_,IExchange ex1_,token tc_,uint n) {
    lp=lp_; ex0=ex0_; ex1=ex1_; tc=tc_;
    tc.receive(n); tc.send(lp,n); lp.deposit(n); // collateral
  }
  
  function arbitrage(uint x, token t0, token t1) public {
    require (t0!=tc && t1!=tc);
    lp.borrow(x);                  // borrow x:t0
    t0.send(ex0,x);
    uint y1 = ex0.swap(t0,x,t1,0); // sell t0, buy t1
    t1.send(ex1,y1);
    ex1.swap(t1,y1,t0,0);          // sell t1, buy t0
    t0.send(lp,x);
    lp.repay(x);                   // repay x:t0
    uint d=t0.balanceOf(this);
    require (d>0);                 // sender's gain must be positive
    t0.send(msg.sender,d); 
  }
}
\end{lstlisting}
\end{figure}

Arbitrage in DeFi refers to a transaction (or sequence of transactions) that yields a profit by exploiting price discrepancies for the same token pair across different exchanges.
This situation typically arises when two exchanges quote different prices for the same token pair, due for instance to independent liquidity pools, delayed oracle updates, or heterogeneous pricing mechanisms.
To execute the arbitrage, the arbitrageur must first obtain a sufficient amount of liquidity to perform the initial swap.
This liquidity can be sourced in different ways: in this section, we illustrate arbitrage funded through a lending pool, while in the next section we consider the use of a flash loan.

The $\contract{LPArbitrage}$ contract in~\Cref{lst:lp-arbitrage} composes two exchanges and a Lending Pool to provide a zero-risk arbitrage.
To this purpose, the function $\txcode{arbitrage}(x)$ performs atomically the following sequence of transactions:
\begin{enumerate}

\item borrow $x:\tokT[0]$;

\item swap $x:\tokT[0]$ for a certain amount $x_1:\tokT[1]$;

\item swap $x_1:\tokT[1]$ for a certain amount $y:\tokT[0]$;

\item repay the loan $x:\tokT[0]$;

\item transfer the gained units $d = y - x > 0$ of $\tokT[0]$ to the caller.
  
\end{enumerate}

Since the balance of $\contract{LPArbitrage}$ is zero in any blockchain state, there is nothing to extract, and so its MEV is zero.
Therefore, by~\Cref{th:nonint:sufficient-conditions}\eqref{th:nonint:sufficient-conditions:zero-mev}, it enjoys MEV non-interference against any blockchain state --- regardless of the implementations of the $\interface{ILoaner}$ and $\interface{IExchange}$ interfaces.
MEV non-interference also holds against wealthy adversaries, by~\Cref{th:rich-nonint:sufficient-conditions}\eqref{th:rich-nonint:sufficient-conditions:zero-mev}.


\subsection{Flash Loan arbitrage}
\label{sec:defi:flashloan-arbitrage}

\begin{figure}[tb!]
\begin{lstlisting}[language=solidity
  ,frame=single
  ,caption={A Flash Loan contract.}
  ,label={lst:flashloan}
]
interface IFlashLoaner {
  function flashLoan(uint amount, token t) external;
}

interface IFlashBorrower {
  function exec(uint amount) external;
}

contract FlashLP is LP, IFlashLoaner {
  uint public immutable fee;

  constructor(IPriceOracle o,IInterestRate i,token tB_,token tC_,uint f) LP(o,i,tB_,tC_) { fee = f; }

  function flashLoan(uint amt, token t) public {
    require (t==tB);
    uint oldBal = t.balanceOf(address(this));
    t.send(msg.sender,amt);
    IFlashBorrower(msg.sender).exec(amt);
    require (t.balanceOf(address(this)) >= oldBal+fee);
  }
}
\end{lstlisting}
\end{figure}

\begin{figure}[tb!]
\begin{lstlisting}[language=solidity
  ,frame=single
  ,caption={Arbitrage with a Flash Loan.}
  ,label={lst:flashloan-arbitrage}
]
contract FlashLoanArbitrage is IFlashBorrower { 
  IFlashLoaner immutable lp; // to borrow t0 (without collateral)
  IExchange immutable ex0;   // to swap t0 -> t1
  IExchange immutable ex1;   // to swap t1 -> t0
  token immutable t0; token immutable t1;
  
  constructor(IFlashLoaner lp_,IExchange ex0_,IExchange ex1_) { 
    lp=lp_; ex0=ex0_; ex1=ex1_;
  }

  function arbitrage(uint x, token t0_, token t1_) public {
    t0=t0_; t1=t1_; lp.flashLoan(x,t0); 
  }

  function exec(uint amt) public {
    t0.receive(amt);                  // from lp
    t0.send(ex0,x);
    uint x1 = ex0.swap(t0,x,t1,0);    // sell t0, buy t1
    t1.send(ex1,x1);
    ex1.swap(t1,x1,t0,0);             // sell t1, buy t0
    t0.send(msg.sender,amt+lp.fee()); // repay flash loan
    uint d=t0.balanceOf(this)
    require (d>0);                    // sender's gain must be positive
    t0.send(tx.origin,d);
  }  
}
\end{lstlisting}
\end{figure}

The contract $\contract{LParbitrage}$ seen before requires the arbitrageur to have a collateral in $\contract{LP}$ in order to perform the $\txcode{borrow}$ action.
Current lending protocols, such as \eg Aave~\cite{aaveflashloan}, also feature a
$\txcode{flashLoan}$ method that allows users to borrow tokens without providing a collateral, at the condition that the loan (plus a fee) is repaid in the \emph{same} transaction.

We show in~\Cref{lst:flashloan} a lending protocol exposing flash loans, and in~\Cref{lst:flashloan-arbitrage} an arbitrage contract relying on a flash loan.
Note that the dependencies between $\txcode{FlashLoanArbitrage}$ and $\txcode{FlashLoan}$ are circular, since the method $\txcode{arbitrage}$ calls $\txcode{flashLoan}$, which calls back $\txcode{exec}$.
This breaks well-formedness of blockchain states (see~\Cref{sec:blockchain}). However, well-formedness is \emph{not} required by~\Cref{th:nonint:sufficient-conditions,th:rich-nonint:sufficient-conditions}, which establish, in their first item, MEV non-interference whenever the newly deployed contracts have zero MEV.
This is the case for~$\contract{FlashLoanArbitrage}$: regardless of the dependencies, the contract balance is always zero, so there is nothing to extract.



\section{Related work}
\label{sec:related}

A primary term of comparison for our MEV non-interference is the $\epsilon$-composability notion of~\cite{Babel23clockwork}:
\begin{equation*}
  \text{$\cstD$ is $\epsilon$-composable in $\sysS$}
  \quad \text{iff} \quad
  \mev{}{\sysS \mid \cstD}{} \leq (1 + \epsilon) \ \mev{}{\sysS}{}
\end{equation*}

The key conceptual difference between the two notions is that  $\epsilon$-composability is a global property of the \emph{entire} blockchain state, whereas MEV non-interference has a more local nature, as it only requires inspecting the newly deployed contracts and their dependencies.
Indeed, the ``only if'' direction of the implication in \Cref{th:nonint:state-narrowing,th:rich-nonint:state-narrowing} states that, to establish the MEV non-interference of $\cstD$ in state $\cstS \mid \cstC$, we can neglect the part of the state $\cstC$ of the non-dependencies of $\cstD$, under suitable conditions (\eg, $\deps{\cstD}$ is sender-agnostic).

Technically, $\epsilon$-composability and MEV non-interference (in both ``rich'' and ``poor'' versions) are incomparable notions.
To show that MEV non-interference does not imply \mbox{$\epsilon$-composability}, assume that the newly deployed contracts $\cstD$ contains exactly the $\contract{Airdrop}$ contract from~\Cref{ex:nonint:airdrop}, with a wallet of $n:\tokT$.
In \mbox{$\sysS \mid \cstD$}, the adversary can extract
$n:\tokT$ from $\contract{Airdrop}$, and possibly use these tokens
to extract more MEV from~$\sysS$.
Therefore, we have
$\mev{}{\sysS \mid \cstD}{} \geq n \cdot \price{\tokT} + \mev{}{\sysS}{}$,
which implies that, if $n$ is large enough,
the airdrop is \emph{not} $\epsilon$-composable with $\sysS$.
\Cref{ex:nonint:airdrop} shows, instead, that MEV non-interference holds.

\Cref{ex:babel-composability-not-implies-mev-nonintererence} below
shows that the converse implication fails as well.

\begin{example}
  \label{ex:babel-composability-not-implies-mev-nonintererence}
  We craft $0$-composable contracts which violate MEV non-interference.
  For this, we define
  $\cstC$ and $\cstD$ such that both contracts expose
  identical and \emph{mutually exclusive} MEV:
  one can extract MEV from either contract, but not from both. 
  The choice of extracting MEV from $\cstC$ or from $\cstD$ 
  is done by calling a suitable method of $\cstC$.
  Only after the choice is made the MEV is exposed, 
  and the MEV from the other contract is permanently disabled. 
  %
  Since the two MEVs are mutually exclusive, we obtain 
  $\mev{}{\cdots \mid \cstC}{} = \mev{}{\cdots \mid \cstC \mid \cstD}{}$, hence \mbox{$0$-composability}.
  Non-interference fails because extracting MEV
  from $\cstD$ requires calling a method of $\cstC$.

  More formally, consider the contracts in \Cref{lst:babel-notimply-nonint}, and let
  \(
  \cstC = 
  \walpmv{\contract{C1}}{\waltok{1}{\tokT},\code{n}=0}
  \),
  \(
  \sysS = \walu{\pmvM}{0}{\tokT} \mid \cstC
  \), and 
  \(
  \cstD = \walu{\contract{C2}}{1}{\tokT}
  \).
  To study $\epsilon$-composability,
  we compare the (global) MEV of $\sysS$ and $\sysS \mid \cstD$.
  We have that
  $\mev{}{\sysS}{} = 1 \cdot \price{\tokT}$
  by the sequence $\pmvM:\contract{C1}.\txcode{f1}()$,
  and $\mev{}{\sysS \mid \cstD}{} = 1 \cdot \price{\tokT}$
  by the sequence \mbox{$\pmvM:\contract{C1}.\txcode{f2}() \ \pmvM:\contract{C2}.\txcode{g}()$}
  (or, alternatively, by the sequence $\pmvM:\contract{C1}.\txcode{f1}()$, which provides the same MEV).
  Therefore, the two contracts are $0$-composable.
  
  For MEV non-interference, we have that
  $\lmev{}{\sysS \mid \cstD}{\setenum{\contract{C2}}} = 1 \cdot \price{\tokT}$
  by the sequence
  \mbox{$\pmvM:\contract{C1}.\txcode{f2}() \ \pmvM:\contract{C2}.\txcode{g}()$},
  while
  $\lmev{\setenum{\contract{C2}}}{\sysS \mid \cstD}{\setenum{\contract{C2}}} = 0$,
  since the only callable method, \ie $\txcode{g}()$, always fails.
  Therefore, $\negnonint{\sysS}{}{\cstD}$,
  \ie $\cstD$ is not composable with $\sysS$ according to our notion.
  \hfill\qedex
\end{example}

Another difference between MEV non-interference and $\epsilon$-composability is highlighted by the first item of~\Cref{th:nonint:sufficient-conditions,th:rich-nonint:sufficient-conditions}, which state that if the newly deployed contracts have zero MEV, then MEV non-interference holds.
This is not, instead,  a sufficient condition for $\epsilon$-composability: a contract may not be $0$-composable in $\sysS$
when $\Adv$ does not have the tokens needed to extract MEV
in $\sysS$, but becomes able to do so
after exchanging tokens between $\sysS$ and $\cstD$, provided that
this preserves $\cstD$'s wealth
(see~\Cref{cex:th:nonint:sufficient-conditions:zero-mev:babel}).


\begin{figure}[t]
  \begin{lstlisting}[language=solidity
  % ,morekeywords={f1,f2,f3,g},classoffset=4,morekeywords={a,b,A,Oracle},keywordstyle=\pmvColor,classoffset=5,morekeywords={t,T,T2},keywordstyle=\tokColor,classoffset=6,morekeywords={C1,C2,C3},keywordstyle=\cmvColor,frame=single,
  ,caption={Babel \emph{et al.}' composability does not imply MEV non-interference.}
  ,label={lst:babel-notimply-nonint}
  ]
contract C1 {
  uint public n;
  constructor() { T.receive(1); n=0; }
  function f1() { require (n==0); n=1; T.send(msg.sender,1); }
  function f2() { require (n==0); n=2; }
}
contract C2 {
  constructor() { T.receive(1); }    
  function g() { require C1.n()==2; T.send(msg.sender,1); }
}
\end{lstlisting}
\end{figure}

\begin{example}
  \label{cex:th:rich-lmev:future-not-affect-past:non-rich}
  \label{cex:th:nonint:sufficient-conditions:zero-mev:babel}
  Using the states $\sysS$ and $\cstD$  
  in~\Cref{cex:lem:mev:state-narrowing}, 
  we show that 
  a contract with zero MEV may not be $0$-composable with the context.
  Note that any action on $\contract{Exchange2}$ does not alter its wealth,
  since the total number of tokens held in that contract is preserved (because their internal prices given by $\contract{PriceOracle2}$ are equal), and they have the same (external) price.
  Therefore, $\lmev{}{\sysS \mid \cstD}{\setenum{\contract{Exchange2}}} = 0$,
  and so condition~\ref{th:nonint:sufficient-conditions:zero-mev} of
  \Cref{th:nonint:sufficient-conditions} ensures that $\sysS$ is
  MEV non-interfering with $\cstD$.
  
  Instead, $\cstD$ is \emph{not} $0$-composable with $\sysS$:
  indeed, $\mev{}{\sysS}{} = 0 \neq \mev{}{\sysS \mid \cstD}{} = 1$,
  since $\pmvM$ can first swap her $1:\tokT[2]$ for $1:\tokT[1]$
  using $\contract{Exchange2}$, and then swap $1:\tokT[1]$ for $2:\tokT[2]$
  using $\contract{Exchange1}$, with an overall gain of $1:\tokT[2]$.
  This is coherent with the different interpretations of composability:
  for~\cite{Babel23clockwork}, adding a new contract must preserve the global MEV,
  while our notion requires to preserve the MEV of the victim contracts.
  \hfill\qedex
\end{example}



The work~\cite{Guesmi24dlt} applies classical program non-interference notions to the setting of smart contracts. 
There, smart contracts are specified in a general-purpose concurrent imperative language.
Unlike in our approach, where token holdings and exchanges are modelled implicitly in the contract semantics, in~\cite{Guesmi24dlt}
they must be explicitly programmed as assignments to suitable variables.
Therefore, measuring the MEV of a contract is reduced to estimating the change of such variables. 
Similarly, establishing non-interference amounts to detecting whether some high-level variables (\eg, the rate of an exchange contract) can affect low-level variables (\eg, the wallet of a contract using that exchange). 
Therefore, this approach requires the programmer to identify the high-level and the low-level variables, and to manually downgrade some variables from high to low to explicitly admit some flows (\eg, the oracle owner can update the exchange rate). 
Our approach, instead, does not require the programmer to annotate the contract code: roughly, the ``low'' variable is the MEV extractable from the victim contracts, and the ``high'' variables are the methods callable from the adversary. 

This observation reveals a key difference between our approach and~\cite{Guesmi24dlt}: MEV non-interference may hold also when the adversary can affect the victim contracts' wallet, provided that their loss is preserved, while in~\cite{Guesmi24dlt} any flow from the variables touched by the adversary to the wallet is considered harmful. 
For example, consider a variant of the $\contract{Airdrop}$ contract that allows its users to withdraw either a token $\tokT[1]$ or another token $\tokT[2]$, where both tokens have the same external price. This $\contract{Airdrop}$ contract can only be used once, and internally chooses which token to send by querying another contract $\contract{PriceOracle}$. 
According to our notion of MEV non-interference, there is no interference between $\contract{PriceOracle}$ and the $\contract{Airdrop}$, since the extracted token will have the same price, hence generating the same MEV.
Instead, the non-interference notion in~\cite{Guesmi24dlt} can observe that 
$\contract{PriceOracle}$ can interfere with $\contract{Airdrop}$, affecting its behavior.

More than with MEV non-interference, the notion of non-interference in~\cite{Guesmi24dlt} seems related to our observational invariance
(\Cref{sec:mev:observable-invariance}).
Informally, this notion requires that no action performed by the adversary on $\sysS$ can influence what the contracts in $\cstD$ are able to observe from $\sysS$. 
Conceptually, this resembles a classic non-interference property, where high-level variables (\ie, the methods of the contracts in $\sysS$) are prevented from affecting low-level variables (\ie, the observations that $\cstD$ can make on $\sysS$).  


\newcommand{\qnonint}[2]{\ifempty{#1}{\mathcal{I}}{\mathcal{I}({#1} \rightsquigarrow {#2})}}

The work~\cite{PB25wtsc} introduces a \emph{quantitative} versions MEV non-interference, which enables to measure the effect of MEV attacks on the victim contract.
More specifically, given a set of victim contracts $\cstD$ and a system $\sysS$,
the value $\qnonint{\sysS}{\cstD} \in [0,1]$ measures the increase of loss (proportionally to $\cstD$'s wealth) that adversaries can inflict to $\cstD$ by manipulating the context $\sysS$.
In particular, a value $\qnonint{\sysS}{\cstD} = 0$ means that no such increase is possible: this is coherent with our notion, \ie $\qnonint{\sysS}{\cstD} = 0$ holds iff $\nonint{\sysS}{\cstD}$.
Instead, a larger (positive) value of $\qnonint{\sysS}{\cstD}$ means that a larger fraction of $\cstD$'s wealth can be extracted by the adversary.
The quantitative nature of this notion makes it possible to have more fine-grained results about the effect of adding new contracts to the system.
In particular, $\qnonint{\sysS}{\cstD}$ increases when $\sysS$ is extended with contracts that are not in the dependencies of $\cstD$, \ie $\qnonint{\sysS}{\cstD} \leq \qnonint{\sysS \mid \cstC}{\cstD}$.
By contrast, our state narrowing results (\Cref{th:nonint:state-narrowing,th:rich-nonint:state-narrowing}) can only elaborate on conditions about $\cstC$ that preserve MEV non-interference.
A potential advantage of a quantitative notion \emph{vs.} a qualitative one is to understand the effect of different system parameters on the MEV increase.
The work \cite{PB25wtsc} exemplifies this by measuring the degree of interference in compositions akin to our $\contract{AMM}$/$\contract{Option}$ and $\contract{AMM}$/$\contract{LP}$ use cases in~\Cref{sec:defi}.
However, establishing non-trivial interference bounds in complex contract compositions seems quite more challenging than establishing our MEV non-interference.


A preliminary version of this work appeared in~\cite{BMZ23defi}. The present article substantially extends that work, both in its theoretical development and in its analysis of DeFi compositions.
On the theoretical side, we refine the conditions under which state narrowings/widenings preserve MEV and MEV non-interference. 
In particular, while~\cite{BMZ23defi} established local reasoning principles only for unbounded-wealth adversaries, the current work extends these results to bounded-wealth adversaries. 
This yields preservation theorems for both $\lmev{}{}{}$ and $\nonint{}{}{}$ under suitable conditions (\Cref{th:mev:frontrunning-resistance,th:nonint:state-narrowing}, respectively), thereby providing a unified treatment of the two adversarial models.
On the practical side, the current version broadens the analysis of DeFi compositions. In~\cite{BMZ23defi}, the swap proxy, swap router, and arbitrage protocols were studied only in conjunction with AMM-based exchanges. Here, we generalise the analysis to arbitrary exchanges implementing the required interfaces (\Cref{tab:defi-compositions}).
In addition, the present version streamlines and generalises several technical developments, and provides complete proofs for all results.

\section{Conclusions and discussion}
\label{sec:conclusions}

We have proposed MEV non-interference, a new security notion for DeFi composition which ensures that adversaries cannot inflict economic harm on compound contracts by exploiting their dependencies.
In particular, we have studied sufficient conditions for MEV non-interference (\Cref{th:nonint:sufficient-conditions,th:rich-nonint:sufficient-conditions}), and rules to enable its modular verification (\Cref{th:nonint:state-narrowing,th:rich-nonint:state-narrowing}).
We have shown that these rules allow to correctly classify
the composability of several common DeFi protocols (\Cref{tab:defi-compositions}).

\paragraph{Alternative definitions}

A relevant question is whether simpler notions of composability
would achieve the same effect as our~\Cref{def:non-interference}.
For instance, one might be tempted to regard two contracts
$\cmvC$ and $\cmvCi$ composable whenever the (global) MEV of their
composition is equal to the sum of the two individual MEVs,
thus obtaining a property which could be written along the line of
\begin{equation}
\label{eq:wrong-composability:1}
\lmev{}{\walpmv{\contract{C}}{\cdot} \mid \walpmv{\contract{C}'}{\cdot}}{}
= \lmev{}{\walpmv{\contract{C}}{\cdot}}{}
+ \lmev{}{\walpmv{\contract{C}'}{\cdot}}{}
\end{equation}
While temptingly simple, this notion has several issues.
First, it does not consider the dependencies of the two contracts,
which must be part of the blockchain state.
\Eg, when both contracts depend on a third contract $\cmvD$, we could
amend~\eqref{eq:wrong-composability:1} as:
\begin{equation}
\label{eq:wrong-composability:2}
\lmev{}{\walpmv{\contract{D}}{\cdot} \mid \walpmv{\contract{C}}{\cdot} \mid \walpmv{\contract{C}'}{\cdot}}{}
= \lmev{}{\walpmv{\contract{D}}{\cdot} \mid \walpmv{\contract{C}}{\cdot}}{}
+ \lmev{}{\walpmv{\contract{D}}{\cdot} \mid \walpmv{\contract{C}'}{\cdot}}{}
\end{equation}
However, this equation is almost always false, since in the RHS 
the MEV extractable from $\cmvD$ is counted \emph{twice}.
Even when $\cmvD$ has no MEV, it can still act as a shared state between 
$\cmvC$ and $\cmvCi$, \eg making it possible to extract MEV from only one of them (but not both). 
This, again, can be used to falsify~\eqref{eq:wrong-composability:2}. 
%
Furthermore, \eqref{eq:wrong-composability:2} does
not mention the adversary wallet.
Using the same wallet in each $\lmev{}{\cdot}{}$ would duplicate the adversary wealth in the RHS, leading to similar double-counting issues as those discussed previously for~$\cmvD$.

Another alternative definition of MEV could be the following:
\begin{equation}
    \label{eq:lmev:alt}
    \lmev{\CmvD}{\sysS}{\CmvC}
    = \max \setcomp
    {-\gain{\CmvC}{\sysS}{\TxTS} - \gain{\PmvU \setminus \Adv}{\sysS}{\TxTS}}
    {\TxTS \in \mall{\CmvD}{\Adv}^*}
\end{equation}

Compared to~\eqref{eq:lmev}, the MEV in~\eqref{eq:lmev:alt} subtracts, from the loss of the victim contracts $\CmvC$, the gain of the addresses $\PmvU \setminus \Adv$ not controlled by the adversary.
The intuition is that if a transaction fired by the adversary has the effect of moving tokens from a victim contract to a ``legit'' user (\eg, one that should have received the tokens anyway), then it should not count as MEV.
The problem with this alternative definition is that ``legit'' addresses seem hard to characterise: for example, if the adversary manages to move tokens from the victim contracts to a sink contract from where no tokens can be extracted, or to a randomly chosen address, then it would be make more sense to count these tokens as a loss, unlike~\eqref{eq:lmev:alt}.  


\paragraph{Pitfalls}

MEV non-interference enjoys some interesting structural properties: notably, \Cref{th:nonint:state-narrowing,th:rich-nonint:state-narrowing} show that, under certain conditions, non-interference is preserved when extending or restricting the state. 
At the same time, a number of seemingly plausible structural properties do not hold in general.
\Cref{tab:pitfalls} summarises several such non-properties of $\richnonint{}{}$, which --- by~\Cref{lem:rich-nonint-nonint} --- also fail to hold for $\nonint{}{}$.

The first two sections show that, in general, adding or removing contracts from $\cstD$ does not preserve MEV non-interference.
While it is intuitively obvious that adding contracts breaks non-interference, the fact that removing them also breaks it is somehow surprising. 
The last section of \Cref{tab:pitfalls} is even more surprising:
proving that $\cstC$ does not interfere with both $\cstD[1]$ and $\cstD[2]$ is not enough to guarantee that $\cstC$ does not interfere with the composition $\cstD[1] \mid \cstD[2]$.
Since the counterexamples needed to refute these conjectures are quite technical, we relegate them to~\ref{sec:proofs:cex}.

\begin{table}[t]
  \caption{Some non-properties of $\richnonintrel$ (the hypothesis does not imply the thesis).}
  \label{tab:pitfalls}
  \setlength{\tabcolsep}{7pt} 
  \renewcommand{\arraystretch}{1} 
  \centering
  \begin{tabular}{|c|c|c|}
    \hline
    \textbf{Hypothesis} & \textbf{Thesis} & \textbf{Counterexample} 
    \\
    \hline
    \multirow{2}{*}{$\richnonint{\cstC}{\cstD}$}
    & $\richnonint{\cstC}{\cstD \mid \cstDi}$
    & \Cref{ex:rich-nonint:mid-R}
    \\
    & $\richnonint{\cstC}{\cstDi \mid \cstD}$
    & \Cref{ex:rich-nonint:mid-R}
    \\
    \hline
    $\richnonint{\cstC}{\cstD \mid \cstDi}$
    & \multirow{2}{*}{$\richnonint{\cstC}{\cstD}$}
    & \Cref{ex:rich-nonint:mid-L}
    \\
    $\richnonint{\cstC}{\cstDi \mid \cstD}$
    &
    & \Cref{ex:rich-nonint:mid-L}
    \\
    \hline
    $\richnonint{\cstC}{\cstD[1]}$ $\,\land\,$
    $\richnonint{\cstC}{\cstD[2]}$
    & \multirow{2}{*}{$\richnonint{\cstC}{\cstD[1] \mid \cstD[2]}$}
    & \Cref{ex:rich-nonint:union}
    \\
    $\richnonint{\cstC}{\cstD[1]}$ $\,\land\,$
    $\richnonint{\cstC \mid \cstD[1]}{\cstD[2]}$
    & 
    & \Cref{ex:rich-nonint:union}
    \\    
    \hline
  \end{tabular}
\end{table}



\paragraph{Adversary model}

Our definition of MEV is parameterised by a set of addresses $\Adv$ controlled by the adversary.
In principle, different choices of $\Adv$ may yield different MEV values and, consequently, different non-interference classifications for the same contract composition.
In practice, however, any sufficiently large set of addresses that does not already occur in the contracts under analysis (\ie, neither in their  storage nor as hard-coded constants in their code) would yield the same MEV.
It would therefore be possible to reformulate our notions to as to eliminate any dependence on the particular choice for $\Adv$, albeit at the cost of an increased complexity of the definitions and statements.

A similar approach has been adopted in~\cite{BZ25fc} in the context of the classical interpretation of MEV as maximal adversarial gain.
There, the authors introduce the notion of \emph{universal MEV}, defined as the maximum gain that can be extracted over all cofinite sets of addresses $\Adv$, after suitably redistributing tokens among the wallets controlled by $\Adv$.
Besides abstracting from the actual identity of the adversary (\ie, from the choice of the set $\Adv$), this notion also abstracts from their wealth, in this second respect being conceptually similar to our $\rlmev{}{}{}$ (\Cref{def:rich-mev}). 
The two notions differ, however, in the assumptions they make about the adversary's available tokens.
In universal MEV, the adversary's capital is ultimately bounded by the total amount of tokens already occurring in the blockchain state (since the definition just redistributes that to $\Adv$). 
In contrast, our definition of $\rlmev{}{}{}$ assumes that the adversary has access to whatever capital is required to execute the attack.


Another key difference between our notion of MEV and that of~\cite{BZ25fc} is that ours implicitly assumes an empty mempool, while the definitions in~\cite{BZ25fc} are parameterised by the mempool.
In certain scenarios, access to pending transactions in the mempool increases the adversary's opportunities for MEV extraction. 
A notable example is the \emph{sandwich attacks} on AMMs~\cite{Zhou21high}, where the adversary observes a user's trade in the mempool and strategically places transactions before and after it, causing the user's trade to execute under less favourable conditions, while generating a profit for the adversary.
Our framework abstracts away from the mempool because our primary goal is to to study secure composability, namely the additional MEV opportunities introduced by newly deployed contracts rather than those arising from pending transactions. 
While incorporating the mempool into our definitions would be consistent with the principle of analysing security notions under increasingly stronger adversaries (as we did when considering wealthy adversaries), doing so would considerably increase the complexity of an already technically demanding theory, while providing limited additional insight into the composability issues that are the focus of this work.


\paragraph{Token prices}

Our definition of the price of a token $\tokT$ as $\price{\tokT}$ treats token prices as exogenous parameters, in that such prices do not depend on the blockchain state.
As a consequence, our composability notions are parametric in the chosen price assignment, and a composition classified as MEV non-interfering under one assignment need not remain so under another.

One possible way to eliminate this parameter would be to internalize prices in the blockchain state, for instance by designating certain contracts as price oracles (\eg, contracts exposing the interface $\contract{IPriceOracle}$). 
This would make prices endogenous, but it would also make the resulting security notions depend more heavily on the particular oracle contracts included in the state (an on their transitions). 
In this sense, the dependency on external prices would not really disappear, but would instead be shifted to a dependency on the oracle mechanisms used to determine them.

A more robust alternative is to rely on stricter, price-independent notions of secure composability that avoid collapsing heterogeneous token movements into a single MEV scalar. 
Our observational invariance provides one such notion: it is independent of token prices by definition~(\Cref{def:observational-invariance}), and it implies MEV non-interference under the sufficient conditions established in~\Cref{th:nonint:sufficient-conditions,th:rich-nonint:sufficient-conditions}. 
Thus, observational invariance can be seen as a conservative, price-agnostic criterion for secure composability, particularly useful when no canonical or stable price assignment is available.

\paragraph{Limitations}

Some of our results rely on the assumption that dependencies are backward-closed, \ie a contract cannot invoke a contract deployed after itself.
For example, the monotonicity result of~\Cref{lem:mev:basic}\eqref{lem:mev:basic:monotonicity}, which plays a key role in establishing the front-running resistance of $\richnonint{}{}$, fails without this assumption (see~\Cref{cex:mev:basic:monotonicity}).
Backward-closure is a simplification with respect to Ethereum, where  dependencies may be dynamic and even cyclic.
In particular, contracts can interact through reentrant calls, enabling attacks that exploit existing contracts to extract additional MEV, as exemplified by the DAO attack~\cite{DAO}.
A possible research direction is to relax these assumptions by localizing them to the contracts under analysis.
Rather than requiring backward-closure throughout the entire blockchain state, one could investigate whether our results continue to hold under assumptions that only constrain the newly deployed contracts $\cstD$ and their dependency graph. 
This would leave the surrounding state $\sysS$ unconstrained, allowing arbitrary pre-existing dependencies and interactions among already deployed contracts. 
With this relaxation, the rest of the blockchain state would not be constrained, making our results applicable to a wider class of contracts.

Another limitation is that our notion of MEV measures the loss of a contract as the value of the tokens that adversaries can remove from it.
In practice, adversaries could harm contracts also by \emph{freezing}
tokens without actually extracting them from the contract, as in the infamous Parity Wallet attack~\cite{parity17nov}.
Refining local MEV to take these attacks into account could be done
by adapting the notion of liquidity~\cite{BLMZ22lmcs}.

\paragraph*{Acknowledgments}

This work was partially supported by project SERICS (PE00000014)
under the MUR National Recovery and Resilience Plan funded by the
European Union -- NextGenerationEU, and by PRIN 2022 PNRR project DeLiCE (F53D23009130001).

\bibliography{main}

\clearpage
\appendix
\section{Supplementary material for~\Cref{sec:mev}} 
\label{sec:proofs:mev}

In this and the following appendices we provide detailed proofs for all your statements.
These proofs are presented in the order in which the statements appear in the paper, even though this order does not always reflect their logical dependencies.
To clarify the relationship among our statements, \Cref{fig:graph-proofs} displays a graph of the dependencies: an arrow $a \rightarrow b$ means that the proof of statement $a$ depends on statement $b$.
Note that this graph is acyclic.

\begin{figure}[t]
  \centering
  \scalebox{0.675}{
    \Large
    \begin{tikzpicture}[>=latex,line join=bevel,]
  \pgfsetlinewidth{1bp}
\begin{scope}
  \pgfsetstrokecolor{black}
  \definecolor{strokecol}{rgb}{1.0,1.0,1.0};
  \pgfsetstrokecolor{strokecol}
  \definecolor{fillcol}{rgb}{1.0,1.0,1.0};
  \pgfsetfillcolor{fillcol}
  \filldraw (0.0bp,0.0bp) -- (0.0bp,324.0bp) -- (470.51bp,324.0bp) -- (470.51bp,0.0bp) -- cycle;
\end{scope}
  \pgfsetcolor{black}
  \draw [->] (321.6bp,71.697bp) .. controls (328.12bp,63.389bp) and (336.07bp,53.277bp)  .. (349.56bp,36.104bp);
  \draw [->] (298.48bp,143.7bp) .. controls (288.13bp,134.97bp) and (275.41bp,124.24bp)  .. (256.28bp,108.1bp);
  \draw [->] (316.28bp,143.7bp) .. controls (315.07bp,135.98bp) and (313.61bp,126.71bp)  .. (310.69bp,108.1bp);
  \draw [->] (404.4bp,71.697bp) .. controls (397.88bp,63.389bp) and (389.93bp,53.277bp)  .. (376.44bp,36.104bp);
  \draw [->] (181.18bp,78.753bp) .. controls (187.35bp,76.483bp) and (193.89bp,74.122bp)  .. (200.0bp,72.0bp) .. controls (243.05bp,57.052bp) and (292.93bp,41.02bp)  .. (335.78bp,27.488bp);
  \draw [->] (154.0bp,71.697bp) .. controls (154.0bp,63.983bp) and (154.0bp,54.712bp)  .. (154.0bp,36.104bp);
  \draw [->] (112.6bp,143.7bp) .. controls (119.12bp,135.39bp) and (127.07bp,125.28bp)  .. (140.56bp,108.1bp);
  \draw [->] (215.67bp,143.7bp) .. controls (218.71bp,135.81bp) and (222.38bp,126.3bp)  .. (229.4bp,108.1bp);
  \draw [->] (195.4bp,143.7bp) .. controls (188.88bp,135.39bp) and (180.93bp,125.28bp)  .. (167.44bp,108.1bp);
  \draw [->] (54.245bp,145.98bp) .. controls (72.827bp,135.74bp) and (97.638bp,122.07bp)  .. (126.8bp,105.99bp);
  \draw [->] (154.0bp,215.87bp) .. controls (154.0bp,191.67bp) and (154.0bp,147.21bp)  .. (154.0bp,108.19bp);
  \draw [->] (140.4bp,215.7bp) .. controls (133.88bp,207.39bp) and (125.93bp,197.28bp)  .. (112.44bp,180.1bp);
  \draw [->] (346.24bp,229.72bp) .. controls (375.88bp,224.6bp) and (422.22bp,211.74bp)  .. (446.0bp,180.0bp) .. controls (474.86bp,141.48bp) and (477.99bp,113.73bp)  .. (454.0bp,72.0bp) .. controls (442.33bp,51.704bp) and (419.57bp,38.451bp)  .. (390.29bp,26.659bp);
  \draw [->] (306.5bp,215.65bp) .. controls (299.34bp,205.45bp) and (290.34bp,192.2bp)  .. (283.0bp,180.0bp) .. controls (270.62bp,159.4bp) and (257.98bp,135.25bp)  .. (244.29bp,108.04bp);
  \draw [->] (319.0bp,215.7bp) .. controls (319.0bp,207.98bp) and (319.0bp,198.71bp)  .. (319.0bp,180.1bp);
  \draw [->] (399.41bp,143.89bp) .. controls (393.54bp,133.77bp) and (386.54bp,120.52bp)  .. (382.0bp,108.0bp) .. controls (374.69bp,87.858bp) and (369.79bp,64.079bp)  .. (365.22bp,36.119bp);
  \draw [->] (411.98bp,143.7bp) .. controls (412.86bp,135.98bp) and (413.92bp,126.71bp)  .. (416.05bp,108.1bp);
  \draw [->] (122.74bp,287.87bp) .. controls (118.14bp,263.67bp) and (109.69bp,219.21bp)  .. (102.27bp,180.19bp);
  \draw [->] (132.92bp,287.7bp) .. controls (136.07bp,279.81bp) and (139.88bp,270.3bp)  .. (147.16bp,252.1bp);
  \draw [->] (27.0bp,215.7bp) .. controls (27.0bp,207.98bp) and (27.0bp,198.71bp)  .. (27.0bp,180.1bp);
\begin{scope}
  \definecolor{strokecol}{rgb}{0.0,0.0,0.0};
  \pgfsetstrokecolor{strokecol}
  \draw (363.0bp,18.0bp) node {\Cref{lem:mev:basic}};
\end{scope}
\begin{scope}
  \definecolor{strokecol}{rgb}{0.0,0.0,0.0};
  \pgfsetstrokecolor{strokecol}
  \draw (236.0bp,90.0bp) node {\Cref{lem:mev:state-narrowing}};
\end{scope}
\begin{scope}
  \definecolor{strokecol}{rgb}{0.0,0.0,0.0};
  \pgfsetstrokecolor{strokecol}
  \draw (308.0bp,90.0bp) node {\Cref{th:mev:callable-narrowing:deps}};
\end{scope}
\begin{scope}
  \definecolor{strokecol}{rgb}{0.0,0.0,0.0};
  \pgfsetstrokecolor{strokecol}
  \draw (319.0bp,162.0bp) node {\Cref{th:mev:frontrunning-resistance}};
\end{scope}
\begin{scope}
  \definecolor{strokecol}{rgb}{0.0,0.0,0.0};
  \pgfsetstrokecolor{strokecol}
  \draw (418.0bp,90.0bp) node {\Cref{th:mev:callable-narrowing:observational-invariance}};
\end{scope}
\begin{scope}
  \definecolor{strokecol}{rgb}{0.0,0.0,0.0};
  \pgfsetstrokecolor{strokecol}
  \draw (154.0bp,18.0bp) node {\Cref{lem:mev:wallet}};
\end{scope}
\begin{scope}
  \definecolor{strokecol}{rgb}{0.0,0.0,0.0};
  \pgfsetstrokecolor{strokecol}
  \draw (154.0bp,90.0bp) node {\Cref{lem:mev:stability}};
\end{scope}
\begin{scope}
  \definecolor{strokecol}{rgb}{0.0,0.0,0.0};
  \pgfsetstrokecolor{strokecol}
  \draw (99.0bp,162.0bp) node {\Cref{lem:rich-mev:basic}};
\end{scope}
\begin{scope}
  \definecolor{strokecol}{rgb}{0.0,0.0,0.0};
  \pgfsetstrokecolor{strokecol}
  \draw (209.0bp,162.0bp) node {\Cref{lem:rich-mev:state-narrowing}};
\end{scope}
\begin{scope}
  \definecolor{strokecol}{rgb}{0.0,0.0,0.0};
  \pgfsetstrokecolor{strokecol}
  \draw (27.0bp,162.0bp) node {\Cref{th:rich-mev:callable-narrowing:deps}};
\end{scope}
\begin{scope}
  \definecolor{strokecol}{rgb}{0.0,0.0,0.0};
  \pgfsetstrokecolor{strokecol}
  \draw (154.0bp,234.0bp) node {\Cref{th:rich-mev:callable-narrowing:observational-invariance}};
\end{scope}
\begin{scope}
  \definecolor{strokecol}{rgb}{0.0,0.0,0.0};
  \pgfsetstrokecolor{strokecol}
  \draw (319.0bp,234.0bp) node {\Cref{th:nonint:state-narrowing}};
\end{scope}
\begin{scope}
  \definecolor{strokecol}{rgb}{0.0,0.0,0.0};
  \pgfsetstrokecolor{strokecol}
  \draw (410.0bp,162.0bp) node {\Cref{th:nonint:sufficient-conditions}};
\end{scope}
\begin{scope}
  \definecolor{strokecol}{rgb}{0.0,0.0,0.0};
  \pgfsetstrokecolor{strokecol}
  \draw (126.0bp,306.0bp) node {\Cref{th:rich-nonint:sufficient-conditions}};
\end{scope}
\begin{scope}
  \definecolor{strokecol}{rgb}{0.0,0.0,0.0};
  \pgfsetstrokecolor{strokecol}
  \draw (27.0bp,234.0bp) node {\Cref{th:rich-nonint:state-narrowing}};
\end{scope}
\end{tikzpicture}
  }
  \caption{Dependencies among the statements.}
  \label{fig:graph-proofs}
\end{figure}

\begin{proofof}{lem:mev:mev-vs-babel}
We have to prove that
$\lmev{\CmvU}{\sysS}{\CmvU} \geq \mev{}{\sysS}{}$.
Let $\TxTS \in \mall{}{\Adv}^*$.
%
Since transactions cannot create or destroy wealth \emph{globally}, \ie on the entire universe of accounts $\AddrU$, we have that:
\begin{equation}
\label{eq:mev:mev-vs-babel}
	0 
    =
    \gain{\AddrU}{\sysS}{\TxTS}
    =  
    \gain{\CmvU}{\sysS}{\TxTS} + \gain{\PmvU \setminus \Adv}{\sysS}{\TxTS}+ \gain{\Adv}{\sysS}{\TxTS}
\end{equation}
Note that $\gain{\PmvU \setminus \Adv }{\sysS}{\TxTS} \geq 0$, 
since a transaction crafted by $\Adv$ cannot steal tokens from user accounts. 
Therefore, by~\eqref{eq:mev:mev-vs-babel} we obtain 
$\gain{\Adv}{\sysS}{\TxTS} \leq - \gain{\CmvU}{\sysS}{\TxTS}$.
Noting that $\mall{\CmvU}{\Adv} = \mall{}{\Adv}$, we conclude:
\begin{align*}
	\lmev{\CmvU}{\sysS}{\CmvU} &= \max \setcomp{ -\gain{\CmvU}{\sysS}{\TxTS} }{ \TxTS \in \mall{\CmvU}{\Adv}^* } 
	\\
	& \geq \max\setcomp{\gain{\Adv}{\sysS}{\TxTS}}{ \TxTS \in \mall{}{\Adv}^* } = \mev{}{\sysS}{}
    \tag*{\qed}
\end{align*}
\end{proofof}


\begin{proofof}{lem:mev:basic}
	For~\Cref{lem:mev:basic:zero} we need to prove the following equalities.
	\begin{itemize}
		\item[$a.$] $\lmev{\CmvD}{\sysS}{\emptyset} = 0$. Since there are no victims, the identity $\gain{\emptyset}{\sysS}{\TxTS}=0$ holds for any $\TxTS$.
		\item[$b.$] $\lmev{\emptyset}{\sysS}{\CmvC}=0$. 
		Since $\mall{\emptyset}{\Adv}$ is empty, the only $\TxTS\in \mall{\emptyset}{\Adv}^*$ is the empty sequence $\tx{\emptyseq}$, which preserve the state. 
        Then, $\gain{\CmvC}{\sysS}{\tx{\emptyseq}} = \wealth{\CmvC}{\sysS} - \wealth{\CmvC}{\sysS} = 0$.
    \end{itemize}

    \medskip\noindent
	For~\Cref{lem:mev:basic:leq-wealth}, we prove two inequalities:
	\begin{itemize}
		\item[$a.$] Let $\tx{\emptyseq} \in \mall{\CmvD}{\Adv}^*$ be the empty sequence of transactions. 
        We have that: 
		\[
			0 
            \; = \;  
            \gain{\CmvC}{\sysS}{\tx{\emptyseq}} \leq \max \setcomp  { -\gain{\CmvC}{\sysS}{\TxTS}} {\TxTS \in \mall{\CmvD}{\Adv}^*} 
            \; = \;
            \lmev{\CmvD}{\sysS}{\CmvC}
		\]

		\item[$b.$]
		The amount of tokens that $\CmvC$ can lose is bounded by the amount of tokens contained in it, and so $\gain{\CmvC}{\sysS}{\TxTS} \geq - \wealth{\CmvC}{\sysS}$. 
        This implies $-\gain{\CmvC}{\sysS}{\TxTS} \leq  \wealth{\CmvC}{\sysS}$, 
        and since this holds for all $\TxTS$ we have the thesis.
	\end{itemize}

    \medskip\noindent
	For~\Cref{lem:mev:basic:garbage} we first prove
	$\lmev{\CmvD}{\sysS}{\CmvC} = \lmev{\CmvD}{\sysS}{\CmvC \cap \cmvOfcst{\sysS}}$.
	For all $\sysS$ and $\TxTS$:
	\[
		\gain{\CmvC}{\sysS}{\TxTS} 
        \; = \;
        \sum_{\cmvC \in \CmvC} \gain{\cmvC}{\sysS}{\TxTS} 
        \; = 
        \sum_{\cmvC \in \CmvC \cap \cmvOfcst{\cstC}} \gain{\cmvC}{\sysS}{\TxTS} 
        \; = \;
        \gain{\CmvC \cap \cmvOfcst{\cstC}}{\sysS}{\TxTS}
	\]
    where the second equality holds because any transaction in $\TxTS$ targeting a contract $\cmvD\notin \cmvOfcst{\sysS}$ is invalid, causing $\gain{\cmvD}{\sysS}{\TxTS}=0$. 
	Since this holds for any $\TxTS$, the first equality of~\Cref{lem:mev:basic:garbage} is proven.
    We now prove $\lmev{\CmvD}{\sysS}{\CmvC} = \lmev{\CmvD \cap \cmvOfcst{\sysS}}{\sysS}{\CmvC}$. 
    To do so, just note that a transaction in $\mall{\CmvD}{\Adv}$ that calls a contract not appearing in $\sysS$ is invalid, and therefore will have no effect on the loss of $\CmvC$.

    \medskip\noindent
	For~\Cref{lem:mev:basic:L-leq-H}, the proof follows directly from the inclusion $\mall{\CmvD}{\Adv}\subseteq \mall{\CmvDi}{\Adv}$.

    \medskip\noindent
	For~\Cref{lem:mev:basic:monotonicity},
    Let $\TxTS \in \mall{\CmvD}{\Adv}^*$ be a sequence of transactions that maximizes the loss $-\gain{\CmvC}{\sysS}{\TxTS}$. 
    Without loss of generality, assume that $\TxTS$ is valid in $\sysS$: in particular, this implies that $\TxTS$ never triggers calls (either external or internal) to methods of contracts not in $\sysS$ 
    (otherwise, such call would revert, and the sequence would not be valid).
	Therefore, the contract state $\cstD$ is not affected by the execution of $\TxTS$, that is:  
    \[
    \sysS \xrightarrow{\TxTS} \sysSi
    \implies 
    \sysS \mid \cstD \xrightarrow{\TxTS} \sysSi \mid \cstD
    \]
	Therefore: 
	\begin{align*}
		\gain{\CmvC}{\sysS \mid \cstD}{\TxTS} 
        & = \wealth{\CmvC}{\sysSi \mid \cstD} - \wealth{\CmvC}{\sysS \mid \cstD}
        \\
		& =	\wealth{\CmvC}{\sysSi} + \wealth{\CmvC}{\cstD} - (\wealth{\CmvC}{\sysS} + \wealth{\CmvC}{\cstD})
		\\
		& = \wealth{\CmvC}{\sysSi} - \wealth{\CmvC}{\sysS} 
		\\
		& = \gain{\CmvC}{\sysS}{\TxTS}
	\end{align*}
    which shows the inequality
    $\lmev{\CmvD}{\sysS}{\CmvC} \leq \lmev{\CmvD}{\sysS \mid \cstD}{\CmvC}$.
    \qed
    
\end{proofof}

\begin{proofof}{lem:mev:state-narrowing}
    Let $\CmvD \subseteq \cmvOfcst{\sysS}$.
    We must prove that
    $\lmev{\CmvD}{\sysS \mid \cstD}{\CmvC} = \lmev{\CmvD}{\sysS}{\CmvC}$.
    The inequality $\geq$ follows from~\Cref{lem:mev:basic:monotonicity} of~\Cref{lem:mev:basic}.
    For the inequality $\leq$, 
    let $\TxTS \in \mall{\CmvD}{\Adv}^*$ be a valid sequence of transactions that maximizes $-\gain{\CmvC}{\sysS \mid \cstD}{\TxTS}$. 
    We show that $\TxTS$ causes the same loss to $\CmvC$ in $\sysS$.
    Since the transactions in $\TxTS$ target contracts in $\CmvD \subseteq \cmvOfcst{\sysS}$ and since, by the well-formedness of $\sysS$, the contracts in $\sysS$ have no dependencies in $\cstD$, then the contracts in $\cstD$ are not affected by $\TxTS$.
    Hence, executing $\TxTS$ yields a transition of the form:
    \[
    \sysS \mid \cstD 
    \; \xrightarrow{\TxTS} \;
    \sysSi \mid \cstD
    \]
    Since $\TxTS$ does not include any direct/indirect calls to the contracts in $\cstD$, then $\TxTS$ is also valid in $\sysS$, producing the same effect as in $\sysS \mid \cstD$, namely: 
    \[
    \sysS 
    \; \xrightarrow{\TxTS} \; 
    \sysSi
    \]
    To prove that the loss is preserved, observe that:
    \begin{align*}
    \gain{\CmvC}{\sysS}{\TxTS}
    & = \wealth{\CmvC}{\sysSi} - \wealth{\CmvC}{\sysS}
    \\
    & = \wealth{\CmvC}{\sysSi} + \wealth{\CmvC}{\cstD} - \wealth{\CmvC}{\sysS} - \wealth{\CmvC}{\cstD}
    \\
    & = \wealth{\CmvC}{\sysSi \mid \cstD} - \wealth{\CmvC}{\sysS \mid \cstD}
    \\
    & = \gain{\CmvC}{\sysS \mid \cstD}{\TxTS}
    \end{align*}
    This implies the thesis:
    \[
    \lmev{\CmvD}{\sysS \mid \cstD}{\CmvC} \leq \lmev{\CmvD}{\sysS}{\CmvC}
    \tag*{\qed}
    \]
\end{proofof}

\begin{proofof}{th:mev:callable-narrowing:deps}
First, note that the inequality $\lmev{\strip{\CmvD}{\CmvC}}{\sysS}{\CmvC} \leq \lmev{\CmvD}{\sysS}{\CmvC}$ follows from \Cref{lem:mev:basic:L-leq-H} of \Cref{lem:mev:basic}, so we just need to show that:
\[
\lmev{\CmvD}{\sysS}{\CmvC} \leq \lmev{\strip{\CmvD}{\CmvC}}{\sysS}{\CmvC}
\]
Let $\TxTS \in \mall{\CmvD}{\Adv}^*$ be a sequence of transactions that maximizes the loss of $\CmvC$ when executed in state $\sysS$. 
We show that there exists $\TxYS \in \mall{\strip{\CmvD}{\CmvC}}{\Adv}^*$ that causes a loss to $\CmvC$ equal to the one caused by $\TxTS$, namely:
\begin{equation}
\label{eq:mev:callable-narrowing:deps:Y}
\TxYS \in \mall{\strip{\CmvD}{\CmvC}}{\Adv}^*
\qquad\qquad
\gain{\CmvC}{\sysS}{\TxYS} = \gain{\CmvC}{\sysS}{\TxTS}
\end{equation}
\Wlog we assume that all the transactions in $\TxTS$ are valid:
indeed, invalid transactions in $\TxTS$ are reverted, so they can be removed without affecting the loss.

\medskip\noindent
Note that each transaction $\txT[i] = \pmvM[i]:\contract[i,1]{D}.\txcode{f_{i,1}}(\code{args_{i,1}})$ in $\TxTS$ 
can trigger a sequence of \emph{internal} contract-to-contract method calls:
\[
\contract[i,1]{C}:\contract[i,1]{D}.\txcode{f_{i,1}}(\code{args_{i,1}})
\;\;
\contract[i,2]{C}:\contract[i,2]{D}.\txcode{f_{i,2}}(\code{args_{i,2}})
\;\; \cdots \;\;
\contract[i,k]{C}:\contract[i,k]{D}.\txcode{f_{i,k}}(\code{args_{i,k}})
\]
Let $\vec{x}$ be the sequence of all method calls (either external or internal) that are performed upon the execution of $\TxTS$ in state $\sysS$.
To construct $\TxYS$, we start by considering
the subsequence $\vec{y}$ of $\vec{x}$ containing all and only the calls of the form:

\vbox{%
\begin{enumerate}[(A)]

\item \label{item:mev:callable-narrowing:deps:Y-A}
$\pmvM[i]:\contract[i,1]{D}.\txcode{f_{i,1}}(\code{args_{i,1}})$ where $\contract[i,1]{D} \in \deps{\CmvC}$, or

\item \label{item:mev:callable-narrowing:deps:Y-B}
$\contract[i,j]{C}:\contract[i,j]{D}.\txcode{f_{i,j}}(\code{args_{i,j}})$, where $\contract[i,j]{C} \not\in \deps{\CmvC}$ and $\contract[i,j]{D} \in \deps{\CmvC}$.



\end{enumerate}}

\noindent
\textbf{Claim (1).} If $\contract[i,j]{C}:\contract[i,j]{D}.\txcode{f_{i,j}}(\code{args_{i,j}}) \in \vec{y}$, 
then $\contract[i,j]{D} \in \boundary{\CmvC}{\CmvD}$.

\medskip\noindent
\emph{Proof.} 
By hypothesis, $\contract[i,j]{D} \in \deps{\CmvC}$ 
and $\contract[i,j]{C}$ calls $\contract[i,j]{D}$.
Since the first contract called in the transaction, that is $\contract[i,1]{D}$, belongs to $\CmvD$, then
$\contract[i,j]{C} \in \deps{\CmvD}$.
By hypothesis, $\contract[i,j]{C} \not\in \deps{\CmvC}$.
Therefore, $\contract[i,j]{C} \in \deps{\CmvD} \setminus \deps{\CmvC}$.
By definition of $\boundary{\CmvC}{\CmvD}$, this concludes the proof of Claim (1).

\medskip
To describe the construction of $\TxYS$, let the meta-variables $\addr[i]{a}$ range over user and contract addresses, so to rewrite the sequence $\vec{y}$ as follows:
\begin{align*}
& \addr[1]{a}:\contract[1]{C}.\txcode{f_{1}}(\code{args_{1}})
\;\;
\addr[2]{a}:\contract[2]{C}.\txcode{f_{2}}(\code{args_{2}})
\;\; \cdots \;\;
\addr[n]{a}:\contract[n]{C}.\txcode{f_{n}}(\code{args_{n}})
\cdots
\end{align*}

To construct $\TxYS$, we translate each call $\addr[i]{a}:\contract[i]{C}.\txcode{f_{i}}(\code{args_{i}})$ in $\vec{y}$ into a transaction as follows, depending on whether the call is due to conditions~\ref{item:mev:callable-narrowing:deps:Y-A} or~\ref{item:mev:callable-narrowing:deps:Y-B}:
\begin{itemize}

\item [\ref{item:mev:callable-narrowing:deps:Y-A}] 
In this case, $\addr[i]{a} = \pmvM[i]$, hence the call $\addr[i]{a}:\contract[i]{C}.\txcode{f_{i}}(\code{args_{i}})$ is already a transaction, which we preserve in $\TxYS$.

\item[\ref{item:mev:callable-narrowing:deps:Y-B}] In this case, we
replace the contract account $\addr[i]{a}$ in the call
\(
\addr[i]{a}:\contract[i]{C}.\txcode{f_{i}}(\code{args_{i}})
\)
in $\vec{y}$
with the user account $\pmvM[i]$ that originated the corresponding call in $\vec{y}$.

\end{itemize}

\medskip\noindent
\textbf{Claim (2).} $\TxYS \in \mall{\strip{\CmvD}{\CmvC}}{\Adv}^*$
    
\medskip\noindent
\emph{Proof.} 
Consider a transaction $\txY[i]$ in $\TxYS$.
We have two cases, depending on whether $\txY[i]$ is due to conditions~\ref{item:mev:callable-narrowing:deps:Y-A} or~\ref{item:mev:callable-narrowing:deps:Y-B} given before:
\begin{enumerate}

\item[\ref{item:mev:callable-narrowing:deps:Y-A}] in this case, $\txY[i]$ is equal to some
$\txT[j] = \pmvM[j]:\contract[j,1]{D}.\txcode{f_{j,1}}(\code{args_{j,1}})$ in $\TxTS$ where $\contract[j,1]{D} \in \deps{\CmvC}$.
Since $\txT[j] \in \mall{\CmvD}{\Adv}$, then 
$\contract[j,1]{D} \in \CmvD$, and so
$\txY[i] \in \mall{\strip{\CmvD}{\CmvC}}{\Adv}$. 

\item[\ref{item:mev:callable-narrowing:deps:Y-B}] by Claim (1), the callee of $\txY[i]$ is in $\boundary{\CmvC}{\CmvD}$, which is included in $\CmvD$ by assumption~\eqref{th:mev:callable-narrowing:deps:2}, and is included in $\deps{\CmvC}$ by definition of $\boundary{\CmvC}{\CmvD}$. 
This implies that $\txY[i] \in \mall{\strip{\CmvD}{\CmvC}}{\Adv}$, since $\Adv$ is able to infer the actual arguments of the involved call by simulating the execution of $\TxTS$ in $\sysS$. 
This completes the proof of Claim (2).

\end{enumerate}

To conclude, we show that $\TxYS$ is valid and modifies the state of the contracts in $\CmvC$ in exactly the same way as $\TxTS$.
Together with Claim (2), this will prove~\eqref{eq:mev:callable-narrowing:deps:Y}.

Observe that the sequence of calls due to the execution of $\TxYS$ coincides with the subsequence of $\vec{x}$ including all and only the calls to contracts in $\deps{\CmvC}$ --- with the only difference of the change of sender due to condition~\ref{item:mev:callable-narrowing:deps:Y-B}.
We show, however, that this difference does not affect neither the validity of the transactions nor their effect on the state update.
Indeed, we have that:

\begin{enumerate}

\item Each transaction $\txY[i]$ in $\TxYS$ can be funded by the adversary.
To show this, let $\txY[i]$ be derived from $\addr[i]{a}:\contract[i]{C}.\txcode{f_{i}}(\code{args_{i}})$. We have the following cases:
\begin{itemize}

\item $\addr[i]{a} = \pmvM[i]$. 
By contradiction, assume that in the state where that transaction is executed, $\pmvM[i]$ does not have the tokens needed to fund the call to  $\contract[i]{C}$.
Since the corresponding transaction $\txT[j]$ in $\TxTS$ was valid, this means that $\pmvM[i]$ had received some of the tokens needed to fund the call from a previous transaction $\txT[h] = \pmvM[h]:\contract[h]{D}.\txcode{f_{h}}(\code{args_{h}})$ in $\TxTS$ that has no counterpart in $\TxYS$.
Note that:
\begin{itemize}

\item $\contract[h]{D} \in \CmvD \setminus \deps{\CmvC}$ because condition~\ref{item:mev:callable-narrowing:deps:Y-A} is false for $\txT[h]$; 

\item $\contract[i]{C} \in \CmvD \cap \deps{\CmvC}$ by Claim (2).

\end{itemize}   

By assumption~\eqref{th:mev:callable-narrowing:deps:3}, there is no token flow from $\CmvD \setminus \deps{\CmvC}$ to $\CmvD \cap \deps{\CmvC}$ 
in $\sysS$, which implies that no token that $\txT[h]$ has sent to $\pmvM[i]$ is needed to call $\contract[i]{C}$
--- contradiction.

\item $\addr[i]{a} = \contract[i]{D}$. 
We have that:
\begin{itemize}

\item $\contract[i]{D} \in \deps{\CmvD} \setminus \deps{\CmvC}$. 
The fact that $\contract[i]{D} \in \deps{\CmvD}$ holds because it derives from a transaction in $\mall{\CmvD}{\Adv}$, and $\contract[i]{D} \not\in \deps{\CmvC}$ follows by condition~\ref{item:mev:callable-narrowing:deps:Y-B};

\item $\contract[i]{C} \in \boundary{\CmvC}{\CmvD}$ by Claim (1).

\end{itemize}

By assumption~\eqref{th:mev:callable-narrowing:deps:4}, 
there is no direct token flow from $\deps{\CmvD} \setminus \deps{\CmvC}$ 
to $\boundary{\CmvC}{\CmvD}$ in $\sysS$.
Therefore, there is no token transfer from $\contract[i]{D}$ to $\contract[i]{C}$, and so $\txY[i]$ does not need to be funded.

\end{itemize}

\item The transactions $\txY[i]$ that are in $\TxYS$ due to condition~\ref{item:mev:callable-narrowing:deps:Y-B} have callee in $\boundary{\CmvC}{\CmvD}$ by Claim (1), and so their methods are \emph{sender-agnostic} by assumption~\eqref{th:mev:callable-narrowing:deps:1}. 
So, the fact that in the execution of $\txY[i]$ they are called directly from a user address, while in the execution of $\txT[i]$ they are called from a contract address, does not affect the execution of these calls.
Note that the adversary could receive tokens from these calls, namely those tokens which are transferred to the sender, but from the previous item such tokens are irrelevant to the validity of $\TxYS$.

\item The transaction $\pmvM[i]:\contract[i,j]{C}.\txcode{f_{i,j}}(\code{args_{i,j}})$ that are in $\TxYS$ due to condition~\ref{item:mev:callable-narrowing:deps:Y-B} may transfer tokens to $\pmvM[i]$,
while the corresponding calls $\contract[i,j-1]{C}:\contract[i,j]{C}.\txcode{f_{i,j}}(\code{args_{i,j}})$ due to the execution of $\TxTS$ may send these tokens to the sender $\contract[i,j-1]{C}$, thus affecting its gain. 
This difference however does not affect the loss of $\CmvC$, since $\contract[i,j-1]{C}$ is not in $\deps{\CmvC}$ by condition~\ref{item:mev:callable-narrowing:deps:Y-B}.

\end{enumerate}

For the reasons above, $\TxYS$ is valid, and it updated the state of contracts $\deps{\CmvC}$ in exactly the same way as $\TxTS$ and $\TxYS$ --- and, in particular, $\TxTS$ and $\TxYS$ cause exactly the same losses to the contracts in $\CmvC$, \ie $\gain{\CmvC}{\sysS}{\TxYS} = \gain{\CmvC}{\sysS}{\TxTS}$. 
This proves~\eqref{eq:mev:callable-narrowing:deps:Y}, from which we obtain the thesis.
\qed
\end{proofof}

\begin{example}
\label{cex:mev:callable-narrowing:sender-agnostic}
To illustrate the need of the sender-agnosticism assumption~\eqref{th:mev:callable-narrowing:deps:1} in \Cref{th:mev:callable-narrowing:deps}, consider the contracts: 

\begin{lstlisting}[
,language=solidity
%,basicstyle=\fontseries{m}\normalsize\ttfamily\lst@ifdisplaystyle\footnotesize\fi,
%,morekeywords={f,g,f0,f1,g0,g1},classoffset=4
%,morekeywords={a,A,M},keywordstyle=\pmvColor
%,classoffset=5,morekeywords={t,T,T0,T1,T2,ETH}
%,keywordstyle=\tokColor,classoffset=6
%,morekeywords={C0,C1,C2,C3},keywordstyle=\cmvColor
%,caption={Illustration of~\Cref{th:mev:callable-narrowing:deps}}
%,label={lst:cex:lmev:contract-stripping}
]
contract C0 { function f() { require (msg.sender==C1); T.send(M,1); }}
contract C1 { function f() { C0.f(); } }
\end{lstlisting}

\noindent
Let 
$\Adv = \setenum{\pmvM}$, 
$\CmvD= \setenum{\contract{C0},\contract{C1}}$, 
$\CmvC = \setenum{\contract{C0}}$, and let:
\[
\sysS 
\; = \;
\walpmv{\pmvM}{0:\tokT} \mid
\walpmv{\contract{C0}}{\waltok{1}{\tokT}} \mid 
\walpmv{\contract{C1}}{\waltok{0}{\tokT}}
\]
Let $\TxTS = \pmvM:\contract{C1}.\txcode{f()} \in \mall{\CmvD}{\pmvM}$.
By executing $\TxTS$ in $\sysS$, we have that:
\begin{align*}
    \sysS
    & \xrightarrow{\pmvM:\contract{C1}.\txcode{f()}}
    \walpmv{\pmvM}{1:\tokT} \mid
    \walpmv{\contract{C0}}{\waltok{0}{\tokT}} \mid 
    \walpmv{\contract{C1}}{\waltok{0}{\tokT}}
\end{align*}
Since there are no tokens left in $\CmvC$, $\TxTS$ clearly maximises the loss of $\CmvC$, hence
\(
\lmev{\CmvD}{\sysS}{\CmvC} = 1 \cdot \price{\tokT}
\).
Since $\contract{C0} \in \boundary{\CmvC}{\CmvD}$ is not sender-agnostic,
narrowing the set of callable contracts to
$\CmvD \cap \deps{\CmvC} = \setenum{\contract{C0}}$ is not guaranteed to preserve the MEV.
Indeed, $\lmev{\setenum{\contract{C0}}}{\sysS}{\CmvC} = 0$,
since the adversary can only call $\contract{C0}$, but the transaction would revert since the \lstinline[language=solidity]{require} condition in $\contract{C0}$ is violated.
\hfill\qedex
\end{example}

\begin{example}
\label{cex:mev:callable-narrowing:boundary}
To illustrate the need of the assumption \eqref{th:mev:callable-narrowing:deps:2} in \Cref{th:mev:callable-narrowing:deps}, consider the contracts: 

\begin{lstlisting}[
,language=solidity
%,basicstyle=\fontseries{m}\normalsize\ttfamily\lst@ifdisplaystyle\footnotesize\fi,
%,morekeywords={f,g,f0,f1,g0,g1},classoffset=4
%,morekeywords={a,A,M},keywordstyle=\pmvColor
%,classoffset=5,morekeywords={t,T,T0,T1,T2,ETH}
%,keywordstyle=\tokColor,classoffset=6
%,morekeywords={C0,C1,C2,C3},keywordstyle=\cmvColor
%,caption={Illustration of~\Cref{th:mev:callable-narrowing:deps}}
%,label={lst:cex:lmev:contract-stripping}
]
contract C0 { function f() { T.send(M,1); } }
contract C1 { function f() { C0.f(); } }
\end{lstlisting}

\noindent
Let 
$\Adv = \setenum{\pmvM}$, 
$\CmvD= \setenum{\contract{C1}}$, 
$\CmvC = \setenum{\contract{C0}}$, and let:
\[
\sysS = 
\walpmv{\pmvM}{0:\tokT} \mid
\walpmv{\contract{C0}}{\waltok{1}{\tokT}} \mid 
\walpmv{\contract{C1}}{\waltok{0}{\tokT}}
\]
Let $\TxTS = \pmvM:\contract{C1}.\txcode{f()} \in \mall{\CmvD}{\pmvM}$.
By executing $\TxTS$ in $\sysS$, we have that:
\begin{align*}
    \sysS
    & \xrightarrow{\pmvM:\contract{C1}.\txcode{f()}}
    \walpmv{\pmvM}{1:\tokT} \mid
    \walpmv{\contract{C0}}{\waltok{0}{\tokT}} \mid 
    \walpmv{\contract{C1}}{\waltok{0}{\tokT}}
\end{align*}
Since there are no tokens left in $\CmvC$, $\TxTS$ clearly maximises the loss of $\CmvC$, hence
\(
\lmev{\CmvD}{\sysS}{\CmvC} = 1 \cdot \price{\tokT}
\).
Note that conditions \eqref{th:mev:callable-narrowing:deps:1},  \eqref{th:mev:callable-narrowing:deps:3} and~\eqref{th:mev:callable-narrowing:deps:4} are trivially satisfied, 
while $\boundary{\CmvC}{\CmvD} = \setenum{\contract{C0}} \not\subseteq \CmvD$, thus violating condition \eqref{th:mev:callable-narrowing:deps:2}. 
Therefore, narrowing the set of callable contracts to
$\CmvD \cap \deps{\CmvC} = \emptyset$ does not preserve the MEV: indeed, such MEV is zero, since there are no callable contracts.
\hfill\qedex
\end{example}

\begin{example}
\label{cex:mev:callable-narrowing:no-token-flows}
To illustrate the need of the token flows assumptions in \Cref{th:mev:callable-narrowing:deps}, consider the following contracts: 
\begin{lstlisting}[
,language=solidity
%,basicstyle=\fontseries{m}\normalsize\ttfamily\lst@ifdisplaystyle\footnotesize\fi,
%,morekeywords={f,g,f0,f1,g0,g1,receive},classoffset=4
%,morekeywords={a,A,M},keywordstyle=\pmvColor
%,classoffset=5,morekeywords={t,T,T0,T1,T2,ETH}
%,keywordstyle=\tokColor,classoffset=6
%,morekeywords={C0,C1,C2,C3},keywordstyle=\cmvColor
%,caption={Necessity of token flows assumptions in~\Cref{th:mev:callable-narrowing:deps}.}
%,label={lst:mev:callable-narrowing:deps:token-flows}
]
contract C0 { function f(){ T.receive(2); T.send(M,4); } }
contract C1 { function f(){ T.receive(1); T.send(C0,2); C0.f(); } }
contract C2 { function f(){ T.send(C1,1); C1.f(); } }
\end{lstlisting}
%
Let 
$\Adv = \setenum{\pmvM}$,
$\CmvD= \setenum{\contract{C0},\contract{C1},\contract{C2}}$, 
$\CmvC = \setenum{\contract{C0},\contract{C1}}$, and let:
\[
\sysS 
\; = \; 
\walpmv{\pmvM}{0:\tokT} \mid
\walpmv{\contract{C0}}{\waltok{2}{\tokT}} \mid 
\walpmv{\contract{C1}}{\waltok{1}{\tokT}} \mid 
\walpmv{\contract{C2}}{\waltok{1}{\tokT}}
\]
Let $\TxTS = \pmvM:\contract{C2}.\txcode{f()} \in \mall{\CmvD}{\pmvM}$.
By executing $\TxTS$ in $\sysS$, we obtain:
\begin{align*}
    \sysS
    & \xrightarrow{\pmvM:\contract{C3}.\txcode{f()}}
    \walpmv{\pmvM}{4:\tokT} \mid
    \walpmv{\contract{C0}}{\waltok{0}{\tokT}} \mid 
    \walpmv{\contract{C1}}{\waltok{0}{\tokT}} \mid 
    \walpmv{\contract{C2}}{\waltok{0}{\tokT}}
\end{align*}
Since there are no tokens left in $\CmvC$, $\TxTS$ clearly maximises the loss of $\CmvC$.
Hence, 
\(
\lmev{\CmvD}{\sysS}{\CmvC} = 3 \cdot \price{\tokT}
\).
Similarly to~\Cref{ex:mev:callable-narrowing:deps}, we have that $\boundary{\CmvC}{\CmvD} = \setenum{\contract{C1}}$, which satisfies conditions \eqref{th:mev:callable-narrowing:deps:1} and \eqref{th:mev:callable-narrowing:deps:2} of~\Cref{th:mev:callable-narrowing:deps}.
Instead, condition \eqref{th:mev:callable-narrowing:deps:4} is violated, since there is a direct token flow of $1:\tokT$ from
$\contract{C2} \in \deps{\CmvD} \setminus \deps{\CmvC}$ to
$\contract{C1} \in \boundary{\CmvC}{\CmvD}$.
Therefore, narrowing the set of callable contracts to
$\CmvD \cap \deps{\CmvC} = \setenum{\contract{C0},\contract{C1}}$
is not guaranteed to preserve the MEV.
Indeed, $\pmvM$ has no tokens in $\sysS$: hence, 
calling $\contract{C0}$ fails, since $\contract{C0}$ has not the required $4:\tokT$ to transfer;
similarly, calling $\contract{C1}$ fails, since $\contract{C1}$ does not have the $2:\tokT$ required to call $\contract{C0}$. 
Therefore, $\lmev{\setenum{\contract{C0},\contract{C1}}}{\sysS}{\CmvC} = 0 < \lmev{\CmvD}{\sysS}{\CmvC}$.
\hfill\qedex
\end{example}

\section{Supplementary material for~\Cref{sec:rich-mev}} 
\label{sec:proofs:rich-mev}

\begin{proofof}{lem:mev:wallet}
	For~\Cref{lem:mev:wallet:zero-user}, let $\Adv \cap \dom{\WmvA} = \emptyset$.
    Since 
    \begin{inlinelist}
    \item $\Adv$ cannot spend tokens from the wallets in $\WmvA$, and
    \item the presence or absence of a user account is irrelevant to the effect transactions originated from other accounts  
    \end{inlinelist},
    then any sequence of transactions $\TxTS \in\mall{\CmvD}{\Adv}^*$ is valid in $\sysS$ if and only if it is valid in $\sysS \mid \WmvA$.
	By wallet monotonicity (\Cref{lem:wallet-monotonicity}), if $\TxTS$ is valid (in either state), then
	\[
		\sysS \xrightarrow{\TxTS}  \WmvAi \mid \cstCi  
		\implies
		\sysS \mid \WmvA \xrightarrow{\TxTS}  (\WmvAi + \WmvA) \mid \cstCi
	\] 
	From this, we obtain that $\TxTS$ causes the same loss to $\CmvC$ in both $\sysS$ and $\sysS \mid \WmvA$:
	\begin{align*}
	\gain{\CmvC}{\sysS}{ \TxTS} 
    & = \wealth{\CmvC}{\WmvAi \mid \cstCi } -  \wealth{\CmvC}{\sysS} 
    \; = \;
    \wealth{\CmvC}{\cstCi } -  \wealth{\CmvC}{\sysS}
	\\
	& = \wealth{\CmvC}{\WmvAi+ \WmvA \mid \cstCi } - \wealth{\CmvC}{\sysS \mid \WmvA}
    \; = \; 
    \gain{\CmvC}{\sysS \mid \WmvA}{ \TxTS}
	\end{align*}
	Therefore, we have the thesis $\lmev{\CmvD}{\sysS}{\CmvC} =\lmev{\CmvD}{\sysS \mid \WmvA}{\CmvC}$.

    \medskip
	For~\Cref{lem:mev:wallet:monotonicity}, let $\TxTS \in \mall{\CmvD}{\Adv}^*$ maximize the loss of ${\CmvC}$. 
    Without loss of generality, assume that $\TxTS$ is valid in $\sysS$.
	By wallet monotonicity, $\TxTS$ is also valid in $\sysSi$ and it extracts the same amount of tokens from $\CmvC$, giving us a lower bound for $\lmev{\CmvD}{\sysSi}{\CmvC}$.\qed
\end{proofof}

\begin{proofof}{lem:mev:stability}
    Fix $\CmvC, \CmvD$ and $\cstC$, and let 
    $f(\wmvA) = \lmev{\CmvD}{\wmvA \mid \cstC }{\CmvC}$.
    By \Cref{lem:mev:basic:leq-wealth} of \Cref{lem:mev:basic}, $f(w)$ is bounded from above by $\wealth{\CmvC}{\wmvA \mid \cstC}$.
	By \Cref{def:wealth}, the wealth $\wealth{\CmvC}{\wmvA \mid \cstC}$ only depends on the contract state $\cstC$, so it is equal to $\wealth{\CmvC}{\cstC}$.
	Hence, $f(\wmvA)$ is an integer value bounded from above, so there exists a wallet $\WmvA$ for which the maximum is reached.  
	By \Cref{lem:mev:wallet:monotonicity} of \Cref{lem:mev:wallet}, the same MEV is also achieved with a wallet extension $\WmvAi$ of $\WmvA$. 
    \qed
\end{proofof}

\begin{proofof}{lem:rich-mev:basic}
	\Cref{lem:rich-mev:basic:zero,lem:rich-mev:basic:L-leq-H,lem:rich-mev:basic:monotonicity,lem:rich-mev:basic:garbage} have an analogous statement in \Cref{lem:mev:basic}, which holds for any wallet state. Due to the stability property (\Cref{lem:mev:stability}) and the definition of $\rlmev{}{}{}$ (\Cref{def:rich-mev}), the wealthy-adversary versions of the statements must also hold. 
	The same reasoning can be carried out for \Cref{lem:rich-mev:basic:leq-wealth} and its analogous in \Cref{lem:mev:basic}, since $\wealth{\CmvC}{\cstC}  = \wealth{\CmvC}{\WmvA \mid \cstC}$ for any $\WmvA$.
	\qed
\end{proofof}

\section{Supplementary material for~\Cref{sec:nonint}} 
\label{sec:proofs:nonint}

\begin{proofof}{lem:rich-nonint-nonint}
From \Cref{def:rich-mev} and \Cref{lem:mev:stability}, there exist wallets $\WmvA[1]$ and $\WmvA[2]$ such that:
\begin{align}
\label{eq:proof:lem:rich-nonint-nonint-1}
& \forall \WmvA \geq_{\$} \WmvA[1] \; : \; \rlmev{\cmvOfcst{\cstD}}{\cstC \mid \cstD}{\cmvOfcst{\cstD}} = \lmev{\cmvOfcst{\cstD}}{\WmvA \mid \cstC \mid \cstD}{\cmvOfcst{\cstD}}
\\
\label{eq:proof:lem:rich-nonint-nonint-2}
& \forall \WmvA \geq_{\$} \WmvA[2] \; : \; \rlmev{}{\cstC \mid \cstD}{\cmvOfcst{\cstD}} = \lmev{}{\WmvA \mid \cstC \mid \cstD}{\cmvOfcst{\cstD}}
\end{align}
Let $\WmvA[0] = \sup \setenum{\WmvA[1] , \WmvA[2]}$. 
Then, for all $\WmvA \geq_{\$} \WmvA[0]$:
\begin{align*}
& \richnonint{\cstC}{\cstD}
\\
\iff &
\rlmev{\cmvOfcst{\cstD}}{\cstC \mid \cstD}{\cmvOfcst{\cstD}} = \rlmev{}{\cstC \mid \cstD}{\cmvOfcst{\cstD}}
&& \text{by def. $\richnonint{}{}$}
\\
\iff 
& \lmev{\cmvOfcst{\cstD}}{\WmvA \mid \cstC \mid \cstD}{\cmvOfcst{\cstD}} = \lmev{}{\WmvA \mid \cstC \mid \cstD}{\cmvOfcst{\cstD}}
&& \text{by \eqref{eq:proof:lem:rich-nonint-nonint-1} and \eqref{eq:proof:lem:rich-nonint-nonint-2}}
\\
\iff
& \nonint{\WmvA \mid \cstC}{\cstD}
&& \text{by def. $\nonint{}{}$}
\end{align*}
\qed
\end{proofof}

\section{Supplementary material for~\Cref{sec:conclusions}} 
\label{sec:proofs:cex}

\begin{example}
	\label{ex:rich-nonint:mid-R}
	$\richnonint{\cstC}{\cstD}$ does not imply $\richnonint{\cstC}{\cstD \mid \cstDi}$.
	This can be simply seen when $\cstD = \emptyset$ (which is trivially non-interfering with any $\cstC$), and  $\cstDi$ is any contract that interferes with $\cstC$.
	For instance, let
	\[
		\cstC = \walpmv{\contract{X}}{\waltok{0}{\tokT}, \code{x}=0} \quad \cstD = \emptyset \quad \cstDi = \walpmv{\contract{C}}{\waltok{1}{\tokT}}
	\]
	where $\contract{X}$ and $\contract{C}$ are defined in \Cref{fig:rich-nonint:monotonicity}.
	We have that 
	\begin{align*}
		\rlmev{\cmvOfcst{\cstD}}{\cstC \mid \cstD }{\cmvOfcst{\cstD}} &= \rlmev{\emptyset}{\cstC \mid \cstD }{\emptyset} = 0
		\\
		\rlmev{}{\cstC \mid \cstD }{\cmvOfcst{\cstD}} &=  \rlmev{}{\cstC \mid \cstD }{\emptyset} = 0	\end{align*}
	so $\richnonint{\cstC}{\cstD}$. 
	To study the non-interference between $\cstC$ and $\cstD \mid \cstDi$, we note that the \emph{restricted} local MEV of $\contract{C}$ is 0: indeed, if the adversary cannot call methods of $\contract{X}$, then they have no way to change the value of the variable $\code{x}$, and the method $\contract{C}.\txcode{f}()$ will never be enabled. On the other hand, if the adversary has access to $\contract{X}$, she can perform the following sequence of transactions: $\pmvM: \contract{X}.\txcode{set}(1)\ \pmvM: \contract{C}.\txcode{f}()$, extracting $\waltok{1}{\tokT}$ from $\contract{C}$. 
	Therefore, $\negrichnonint{\cstC}{\cstD \mid \cstDi}$.
	Note that since the methods of $\cstDi$ do not depend on $\cstD$, we can change their order while preserving the well formedness of the state, so $\negrichnonint{\cstC}{\cstDi \mid \cstD}$. 
\end{example}

\begin{example}
	\label{ex:rich-nonint:mid-L}
	$\richnonint{\cstC}{\cstD \mid \cstDi}$ does not imply  $\richnonint{\cstC}{\cstD}$.
	For instance, let
	\[
		\cstC = \walpmv{\contract{X}}{\waltok{0}{\tokT}, \code{x}=0} \quad \cstD = \walpmv{\contract{C}}{\waltok{1}{\tokT}} \quad \cstDi = \walpmv{\contract{ForwardX}}{\waltok{0}{\tokT}} 
	\]
	using the contracts in \Cref{fig:rich-nonint:monotonicity}.
    From~\Cref{ex:rich-nonint:mid-R} we know that $\negrichnonint{\cstC}{\cstD}$. However, $\richnonint{\cstC}{\cstD \mid \cstDi}$. Indeed, it does not matter if the adversary has direct access to $\contract{X}$, since it can always call $\contract{ForwardX}$ to maximize the loss of $\cstD \mid \cstDi$. 
	More specifically, the sequence of transactions $\pmvM : \contract{ForwardX}.\txcode{set\_x(1)} \ \pmvM: \cmvC.\txcode{f}()$ causes a loss of $\waltok{1}{\tokT}$ in $\cmvC$, so
	\[
		\rlmev{\cmvOfcst{(\cstD \mid \cstDi)}}{\cstC \mid \cstD \mid \cstDi }{\cmvOfcst{(\cstD \mid \cstDi)}} = 1.
	\]
	The unrestricted MEV is also equal to 1 (it cannot be greater, since $\wealth{}{\cstD \mid \cstDi} = 1$), so we have 
	\[
		\rlmev{\cmvOfcst{(\cstD \mid \cstDi)}}{\cstC \mid \cstD \mid \cstDi }{\cmvOfcst{(\cstD \mid \cstDi)}} = 1 = \rlmev{}{\cstC \mid \cstD \mid \cstDi }{\cmvOfcst{(\cstD \mid \cstDi)}}
	\]
	and $\richnonint{\cstC}{ \cstD \mid \cstDi}$.
	Since there are no calls from $\cstDi$ to $\cstD$, we can also change their order while preserving the well formedness of the state, and conclude that
	$\richnonint{\cstC}{ \cstDi \mid \cstD}$.
\end{example}

\begin{figure}[t]
\begin{lstlisting}[language=solidity
    ,caption={Contracts for \Cref{ex:rich-nonint:mid-R,ex:rich-nonint:mid-L}.}
    ,label={fig:rich-nonint:monotonicity}]
contract X { 
  uint public x;
  constructor { x=0; }
  function set(uint y) public { x=y; }
}

contract C { 
  function f() { require (X.x()==1); T.send(msg.sender,1); }
}

contract ForwardX {
  function get_x()  { return X.x(); }
  function set_x(y) { X.set(y); }
}
\end{lstlisting}
\end{figure}

\begin{figure}[t]
  \begin{lstlisting}[language=solidity
  % ,morekeywords={get,set,drop2,drop3},classoffset=4,morekeywords={a,A,Oracle},keywordstyle=\pmvColor,classoffset=5,morekeywords={ETH},keywordstyle=\tokColor,classoffset=6,morekeywords={Var,Drop},keywordstyle=\cmvColor,frame=single
  ,caption={Contracts for~\Cref{ex:rich-nonint:union}.}
  ,label={lst:rich-nonint:union}
]
contract Var {
  uint public x;
  function set(uint y) public { if (x==0) x=y; }
}

contract Drop {
  uint b=0;   
  function drop2() { require (b==0 && Var.x()==1); b=1; T.send(msg.sender,2); }
  function drop3() { require (b==0 && Var.x()==0); b=1; Var.set(2); T.send(msg.sender,3); }
}
  \end{lstlisting}
\end{figure}

\begin{example}
  \label{ex:rich-nonint:union}
  $\richnonint{\cstC}{}{\cstD[1]}$ and
  $\richnonint{\cstC}{}{\cstD[2]}$
  does not imply
  $\richnonint{\cstC}{}{\cstD[1] \mid \cstD[2]}$.
  Let:
  \[
  \cstC =
  \walpmv{\contract{Var}}{\waltok{0}{\tokT},\code{x}=0}
  \quad
  \cstD[1] =
  \walpmv{\contract{Drop1}}{\waltok{3}{\tokT},\code{b}=0}
  \quad
  \cstD[2] =
  \walpmv{\contract{Drop2}}{\waltok{3}{\tokT},\code{b}=0}
  \]
  using the contracts in~\Cref{lst:rich-nonint:union}, where $\contract{Drop}$ is deployed twice in $\cstC$.
  The wealthy adversaries' MEV of both instances
  $\contract{Drop1}$ and $\contract{Drop2}$ in $\cstC$ is $3$,
  obtained by calling $\txcode{drop3}$ exactly once.
  This holds both for the unrestricted and restricted MEV,
  hence \mbox{$\richnonint{\cstC}{}{\cstD[1]}$} and \mbox{$\richnonint{\cstC}{}{\cstD[2]}$}.
  
  To study the composition of $\cstC$ with $\cstD[1] \mid \cstD[2]$, note that the unrestricted MEV of $\setenum{\contract{Drop1},\contract{Drop2}}$ is $4$, which can be obtained by the sequence:
  \[
  \pmvM:\contract{Var}.\txcode{set}(1) \;\;
  \pmvM:\contract{Drop1}.\txcode{drop2}() \;\;
  \pmvM:\contract{Drop2}.\txcode{drop2}()
  \]
  Instead the \emph{restricted} MEV of
  $\setenum{\contract{Drop1},\contract{Drop2}}$ is $3$,
  since calling $\contract{Var}$ is forbidden, 
  and $\txcode{drop3}$ can only be called on one of the instances.
  Then, $\negrichnonint{\cstC}{\cstD[1] \mid \cstD[2]}$.
  \hfill\qedex
\end{example}


\end{document}